\newcommand{\N}{\mathbb{N}}
\newcommand{\Z}{\mathbb{Z}}
\newcommand{\R}{\mathbb{R}}
\newcommand{\PP}{\mathbb{P}}
\newtheorem{theorem}{Theorem}
\newtheorem{lemma}[theorem]{Lemma}
\newtheorem*{problem*}{Problem}
\newtheorem*{lemma*}{Lemma}
\newtheorem*{conj*}{Conjecture}
\def\Psucc{\ensuremath{\cl P}}
\newcommand{\remove}[1]{}
\definecolor{skipcolor}{rgb}{1, .92, .92}
\definecolor{rejoincolor}{rgb}{.9, 1, 0.9}
\definecolor{frameskip}{rgb}{0.95, 0, 0}
\definecolor{framerejoin}{rgb}{0, .75, 0}
\definecolor{gold}{rgb}{1,0.553,0}
\definecolor{lightbrown}{rgb}{0.9305,0.86275,0.8}
\definecolor{fillcopper}{rgb}{0.722,0.451,0.2}
\definecolor{fillsilver}{rgb}{0.753,0.753,0.753}
\definecolor{fillgray}{rgb}{0.4,0.4,0.4}
\definecolor{filllightgray}{rgb}{0.6,0.6,0.6}
\definecolor{fillLightgray}{rgb}{0.7,0.7,0.7}
\definecolor{fillLLightgray}{rgb}{0.8,0.8,0.8}
\definecolor{fillLLLightgray}{rgb}{0.9,0.9,0.9}
\definecolor{fillred}{rgb}{1,0.15,0.15}
\definecolor{filllightred}{rgb}{1,0.3,0.3}
\definecolor{fillblue}{rgb}{0.2,0.35,1}
\definecolor{filllightblue}{rgb}{0.75,0.85,1}
\definecolor{fillgreen}{rgb}{0.2,0.7,0.2}
\definecolor{filllightgreen}{rgb}{0.65,0.95,0.65}
\definecolor{fillgreenbox}{rgb}{0.65,0.95,0.65}
\definecolor{fillpurple}{rgb}{0.7,0.4,0.74}
\definecolor{redRPI}{rgb}{0.839,0,0.1098}
\def\math#1{$#1$}
\def\mand#1{$$#1$$}
\def\mld#1{\begin{equation}
#1
\end{equation}}
\def\eqar#1{\begin{eqnarray}
#1
\end{eqnarray}}
\def\frac#1#2{{#1\over #2}}
\DeclareSymbolFont{AMSb}{U}{msb}{m}{n}
\newcommand{\qedsymb}{\hfill{\rule{2mm}{2mm}}}
\def\cl#1{{\cal #1}}
\def\norm#1{{\mathchoice
{\left\|#1\right\|}
{\big\|#1\big\|}
{\|#1\|}
{\|#1\|}
}}
\def\floor#1{
\mathchoice
{\left\lfloor\,#1\,\right\rfloor}
{\big\lfloor\,#1\,\big\rfloor}
{\lfloor\,#1\,\rfloor}
{\lfloor\,#1\,\rfloor}
}
\def\r#1{{\eqref{#1}}}
\newcommand{\absof}[1]{\left| #1 \right|}
\newcounter{rmnum}
\def\RN#1{\setcounter{rmnum}{#1}\uppercase\expandafter{\romannumeral\value{rmnum}}}
\def\rn#1{\setcounter{rmnum}{#1}\expandafter{\romannumeral\value{rmnum}}}
\definecolor{shadecolor}{gray}{.85}%
\definecolor{tintedcolor}{gray}{.8}%
\gdef\reallynopagebreak{\par\nopagebreak\@nobreaktrue}}
\providecommand\remove[1]{}
\def\rem{\textrm{rem}}
\DeclareSymbolFont{extraup}{U}{zavm}{m}{n}
\DeclareMathSymbol{\varheart}{\mathalpha}{extraup}{86}
\DeclareMathSymbol{\vardiamond}{\mathalpha}{extraup}{87}
\newcommand{\dieface}[2][none]{%
\begin{tikzpicture}[baseline=-5pt,line width=2pt]
\begin{scope}[x=1cm,y=1cm]
\coordinate(center)at(0,0);
\draw[rounded corners=8pt,fill=#1]($(center)+(-0.45,-0.45)$)rectangle($(center)+(0.45,0.45)$);
\ifnum#2=1
\node[circle,draw,fill,inner sep=1pt] at(center){};
\fi
\ifnum#2=2
\node[circle,draw,fill,inner sep=1pt] at($(center)+(-0.2,0)$){};
\node[circle,draw,fill,inner sep=1pt] at($(center)+(0.2,0)$){};
\fi
\ifnum#2=3
\node[circle,draw,fill,inner sep=1pt] at($(center)+(-0.2,-0.2)$){};
\node[circle,draw,fill,inner sep=1pt] at($(center)+(0,0.2)$){};
\node[circle,draw,fill,inner sep=1pt] at($(center)+(0.2,-0.2)$){};
\fi
\ifnum#2=4
\node[circle,draw,fill,inner sep=1pt] at($(center)+(-0.2,0.2)$){};
\node[circle,draw,fill,inner sep=1pt] at($(center)+(0.2,0.2)$){};
\node[circle,draw,fill,inner sep=1pt] at($(center)+(-0.2,-0.2)$){};
\node[circle,draw,fill,inner sep=1pt] at($(center)+(0.2,-0.2)$){};
\fi
\ifnum#2=5
\node[circle,draw,fill,inner sep=1pt] at($(center)+(-0.25,0.25)$){};
\node[circle,draw,fill,inner sep=1pt] at($(center)+(0.25,0.25)$){};
\node[circle,draw,fill,inner sep=1pt] at($(center)+(0,0)$){};
\node[circle,draw,fill,inner sep=1pt] at($(center)+(-0.25,-0.25)$){};
\node[circle,draw,fill,inner sep=1pt] at($(center)+(0.25,-0.25)$){};
\fi
\ifnum#2=6
\node[circle,draw,fill,inner sep=1pt] at($(center)+(-0.2,0.25)$){};
\node[circle,draw,fill,inner sep=1pt] at($(center)+(0.2,0.25)$){};
\node[circle,draw,fill,inner sep=1pt] at($(center)+(-0.2,0)$){};
\node[circle,draw,fill,inner sep=1pt] at($(center)+(0.2,0)$){};
\node[circle,draw,fill,inner sep=1pt] at($(center)+(-0.2,-0.25)$){};
\node[circle,draw,fill,inner sep=1pt] at($(center)+(0.2,-0.25)$){};
\fi
\ifnum#2=7
\node[circle,draw,fill,inner sep=1pt] at($(center)+(-0.15,0.25)$){};
\node[circle,draw,fill,inner sep=1pt] at($(center)+(0.15,0.25)$){};
\node[circle,draw,fill,inner sep=1pt] at($(center)+(-0.25,0)$){};
\node[circle,draw,fill,inner sep=1pt] at($(center)+(0,0)$){};
\node[circle,draw,fill,inner sep=1pt] at($(center)+(0.25,0)$){};
\node[circle,draw,fill,inner sep=1pt] at($(center)+(-0.15,-0.25)$){};
\node[circle,draw,fill,inner sep=1pt] at($(center)+(0.15,-0.25)$){};
\fi
\ifnum#2=8
\node[circle,draw,fill,inner sep=1pt] at($(center)+(-0.25,0.25)$){};
\node[circle,draw,fill,inner sep=1pt] at($(center)+(0.25,0.25)$){};
\node[circle,draw,fill,inner sep=1pt] at($(center)+(-0.25,0)$){};
\node[circle,draw,fill,inner sep=1pt] at($(center)+(0,0.15)$){};
\node[circle,draw,fill,inner sep=1pt] at($(center)+(0,-0.15)$){};
\node[circle,draw,fill,inner sep=1pt] at($(center)+(0.25,0)$){};
\node[circle,draw,fill,inner sep=1pt] at($(center)+(-0.25,-0.25)$){};
\node[circle,draw,fill,inner sep=1pt] at($(center)+(0.25,-0.25)$){};
\fi
\ifnum#2=9
\node[circle,draw,fill,inner sep=1pt] at($(center)+(-0.25,0.25)$){};
\node[circle,draw,fill,inner sep=1pt] at($(center)+(0.25,0.25)$){};
\node[circle,draw,fill,inner sep=1pt] at($(center)+(-0.25,0)$){};
\node[circle,draw,fill,inner sep=1pt] at($(center)+(0,0.25)$){};
\node[circle,draw,fill,inner sep=1pt] at($(center)+(0,0)$){};
\node[circle,draw,fill,inner sep=1pt] at($(center)+(0,-0.25)$){};
\node[circle,draw,fill,inner sep=1pt] at($(center)+(0.25,0)$){};
\node[circle,draw,fill,inner sep=1pt] at($(center)+(-0.25,-0.25)$){};
\node[circle,draw,fill,inner sep=1pt] at($(center)+(0.25,-0.25)$){};
\fi
\end{scope}
\end{tikzpicture}}
\newsavebox{\done}\begin{lrbox}{\done}\dieface{1}\end{lrbox}
\newsavebox{\dtwo}\begin{lrbox}{\dtwo}\dieface{2}\end{lrbox}
\newsavebox{\dthree}\begin{lrbox}{\dthree}\dieface{3}\end{lrbox}
\newsavebox{\dfour}\begin{lrbox}{\dfour}\dieface{4}\end{lrbox}
\newsavebox{\dfive}\begin{lrbox}{\dfive}\dieface{5}\end{lrbox}
\newsavebox{\dsix}\begin{lrbox}{\dsix}\dieface{6}\end{lrbox}
\newsavebox{\dseven}\begin{lrbox}{\dseven}\dieface{7}\end{lrbox}
\newsavebox{\deight}\begin{lrbox}{\deight}\dieface{8}\end{lrbox}
\newsavebox{\dnine}\begin{lrbox}{\dnine}\dieface{9}\end{lrbox}
\newsavebox{\doneR}\begin{lrbox}{\doneR}\dieface[fillred]{1}\end{lrbox}
\newsavebox{\dtwoR}\begin{lrbox}{\dtwoR}\dieface[fillred]{2}\end{lrbox}
\newsavebox{\dthreeR}\begin{lrbox}{\dthreeR}\dieface[fillred]{3}\end{lrbox}
\newsavebox{\dfourR}\begin{lrbox}{\dfourR}\dieface[fillred]{4}\end{lrbox}
\newsavebox{\dfiveR}\begin{lrbox}{\dfiveR}\dieface[fillred]{5}\end{lrbox}
\newsavebox{\dsixR}\begin{lrbox}{\dsixR}\dieface[fillred]{6}\end{lrbox}
\newsavebox{\dsevenR}\begin{lrbox}{\dsevenR}\dieface[fillred]{7}\end{lrbox}
\newsavebox{\deightR}\begin{lrbox}{\deightR}\dieface[fillred]{8}\end{lrbox}
\newsavebox{\dnineR}\begin{lrbox}{\dnineR}\dieface[fillred]{9}\end{lrbox}
\newsavebox{\doneG}\begin{lrbox}{\doneG}\dieface[gold]{1}\end{lrbox}
\newsavebox{\dtwoG}\begin{lrbox}{\dtwoG}\dieface[gold]{2}\end{lrbox}
\newsavebox{\dthreeG}\begin{lrbox}{\dthreeG}\dieface[gold]{3}\end{lrbox}
\newsavebox{\dfourG}\begin{lrbox}{\dfourG}\dieface[gold]{4}\end{lrbox}
\newsavebox{\dfiveG}\begin{lrbox}{\dfiveG}\dieface[gold]{5}\end{lrbox}
\newsavebox{\dsixG}\begin{lrbox}{\dsixG}\dieface[gold]{6}\end{lrbox}
\newsavebox{\dsevenG}\begin{lrbox}{\dsevenG}\dieface[gold]{7}\end{lrbox}
\newsavebox{\deightG}\begin{lrbox}{\deightG}\dieface[gold]{8}\end{lrbox}
\newsavebox{\dnineG}\begin{lrbox}{\dnineG}\dieface[gold]{9}\end{lrbox}
\newcommand{\domsmall}[1]{%
\begin{tikzpicture}[baseline=-5pt,line width=3pt]
\begin{scope}
\coordinate(center)at(0,0);
\ifnum#1=1
\node[circle,draw,fill,inner sep=1pt] at(center){};
\fi
\ifnum#1=2
\node[circle,draw,fill,inner sep=1pt] at($(center)+(-0.2,0)$){};
\node[circle,draw,fill,inner sep=1pt] at($(center)+(0.2,0)$){};
\fi
\ifnum#1=3
\node[circle,draw,fill,inner sep=1pt] at($(center)+(-0.2,-0.2)$){};
\node[circle,draw,fill,inner sep=1pt] at($(center)+(0,0.2)$){};
\node[circle,draw,fill,inner sep=1pt] at($(center)+(0.2,-0.2)$){};
\fi
\ifnum#1=4
\node[circle,draw,fill,inner sep=1pt] at($(center)+(-0.2,0.2)$){};
\node[circle,draw,fill,inner sep=1pt] at($(center)+(0.2,0.2)$){};
\node[circle,draw,fill,inner sep=1pt] at($(center)+(-0.2,-0.2)$){};
\node[circle,draw,fill,inner sep=1pt] at($(center)+(0.2,-0.2)$){};
\fi
\ifnum#1=5
\node[circle,draw,fill,inner sep=1pt] at($(center)+(-0.25,0.25)$){};
\node[circle,draw,fill,inner sep=1pt] at($(center)+(0.25,0.25)$){};
\node[circle,draw,fill,inner sep=1pt] at($(center)+(0,0)$){};
\node[circle,draw,fill,inner sep=1pt] at($(center)+(-0.25,-0.25)$){};
\node[circle,draw,fill,inner sep=1pt] at($(center)+(0.25,-0.25)$){};
\fi
\ifnum#1=6
\node[circle,draw,fill,inner sep=1pt] at($(center)+(-0.2,0.25)$){};
\node[circle,draw,fill,inner sep=1pt] at($(center)+(0.2,0.25)$){};
\node[circle,draw,fill,inner sep=1pt] at($(center)+(-0.2,0)$){};
\node[circle,draw,fill,inner sep=1pt] at($(center)+(0.2,0)$){};
\node[circle,draw,fill,inner sep=1pt] at($(center)+(-0.2,-0.25)$){};
\node[circle,draw,fill,inner sep=1pt] at($(center)+(0.2,-0.25)$){};
\fi
\ifnum#1=7
\node[circle,draw,fill,inner sep=1pt] at($(center)+(-0.15,0.25)$){};
\node[circle,draw,fill,inner sep=1pt] at($(center)+(0.15,0.25)$){};
\node[circle,draw,fill,inner sep=1pt] at($(center)+(-0.25,0)$){};
\node[circle,draw,fill,inner sep=1pt] at($(center)+(0,0)$){};
\node[circle,draw,fill,inner sep=1pt] at($(center)+(0.25,0)$){};
\node[circle,draw,fill,inner sep=1pt] at($(center)+(-0.15,-0.25)$){};
\node[circle,draw,fill,inner sep=1pt] at($(center)+(0.15,-0.25)$){};
\fi
\ifnum#1=8
\node[circle,draw,fill,inner sep=1pt] at($(center)+(-0.25,0.25)$){};
\node[circle,draw,fill,inner sep=1pt] at($(center)+(0.25,0.25)$){};
\node[circle,draw,fill,inner sep=1pt] at($(center)+(-0.25,0)$){};
\node[circle,draw,fill,inner sep=1pt] at($(center)+(0,0.15)$){};
\node[circle,draw,fill,inner sep=1pt] at($(center)+(0,-0.15)$){};
\node[circle,draw,fill,inner sep=1pt] at($(center)+(0.25,0)$){};
\node[circle,draw,fill,inner sep=1pt] at($(center)+(-0.25,-0.25)$){};
\node[circle,draw,fill,inner sep=1pt] at($(center)+(0.25,-0.25)$){};
\fi
\ifnum#1=9
\node[circle,draw,fill,inner sep=1pt] at($(center)+(-0.25,0.25)$){};
\node[circle,draw,fill,inner sep=1pt] at($(center)+(0.25,0.25)$){};
\node[circle,draw,fill,inner sep=1pt] at($(center)+(-0.25,0)$){};
\node[circle,draw,fill,inner sep=1pt] at($(center)+(0,0.25)$){};
\node[circle,draw,fill,inner sep=1pt] at($(center)+(0,0)$){};
\node[circle,draw,fill,inner sep=1pt] at($(center)+(0,-0.25)$){};
\node[circle,draw,fill,inner sep=1pt] at($(center)+(0.25,0)$){};
\node[circle,draw,fill,inner sep=1pt] at($(center)+(-0.25,-0.25)$){};
\node[circle,draw,fill,inner sep=1pt] at($(center)+(0.25,-0.25)$){};
\fi
\draw($(center)+(-0.45,-0.45)$)rectangle($(center)+(0.45,0.45)$);
\end{scope}
\end{tikzpicture}
}
\newsavebox\Mone
\newsavebox{\mycrayon}
\newsavebox\mandown
\newsavebox\nodebox
\newsavebox{\EMSQRD}
\newcommand{\TruthTable}[4]{
\begin{tikzpicture}[x=0.65cm,y=0.375cm,baseline=-3pt]
\pgfmathtruncatemacro\R{2^(#1)};
\foreach\x[count=\i]in{#2}{
\node[](v\i)at(0.7*\i,0){\math{\x}};
}
\node[anchor=west](prop)at($(v#1)+(0.5,0)$){#3};
\draw($(v1.west)+(0,-0.5)$)--($(prop.east)+(0,-0.5)$);
\draw($0.5*(v#1.east)+0.5*(prop.west)+(0,0.5)$)--($0.5*(v#1.east)+0.5*(prop.west)+(0,-\R-0.5)$);
\foreach\t[count=\i] in{#4}{
\node[](tv\i)at($(prop.south)+(0,-\i+0.5)$){\t};
}
\foreach\c in {1,...,#1}{
\foreach\r in {1,...,\R}{
\pgfmathtruncatemacro\s{2^(#1-\c)};
\pgfmathtruncatemacro\a{mod(ceil(\r/\s),2)};
\def\tval{\F};
\ifnum\a<1
\def\tval{\T};
\fi
\node[]at(v\c|-tv\r){\tval};
}}
\end{tikzpicture}
}
\newcommand{\TruthTableTwo}[6]{
\begin{tikzpicture}[x=0.65cm,y=0.375cm,baseline=-3pt]
\pgfmathtruncatemacro\R{2^(#1)};
\foreach\x[count=\i]in{#2}{
\node[](v\i)at(0.7*\i,0){\math{\x}};
}
\node[anchor=west](prop1)at($(v#1)+(0.5,0)$){#3};
\node[anchor=west](prop2)at($(prop1.east)+(0.5,0)$){#5};
\draw($(v1.west)+(0,-0.5)$)--($(prop2.east)+(0,-0.5)$);
\draw($0.5*(v#1.east)+0.5*(prop1.west)+(0,0.5)$)--($0.5*(v#1.east)+0.5*(prop1.west)+(0,-\R-0.5)$);
\foreach\t[count=\i] in{#4}{
\node[](tv\i)at($(prop1.south)+(0,-\i+0.5)$){\t};
}
\foreach\t[count=\i] in{#6}{
\node[](tv\i)at($(prop2.south)+(0,-\i+0.5)$){\t};
}
\foreach\c in {1,...,#1}{
\foreach\r in {1,...,\R}{
\pgfmathtruncatemacro\s{2^(#1-\c)};
\pgfmathtruncatemacro\a{mod(ceil(\r/\s),2)};
\def\tval{\F};
\ifnum\a<1
\def\tval{\T};
\fi
\node[]at(v\c|-tv\r){\tval};
}}
\end{tikzpicture}
}
\newcommand{\TruthTableThree}[8]{
\begin{tikzpicture}[x=0.65cm,y=0.375cm,baseline=-3pt]
\pgfmathtruncatemacro\R{2^(#1)};
\foreach\x[count=\i]in{#2}{
\node[](v\i)at(0.7*\i,0){\math{\x}};
}
\node[anchor=west](prop1)at($(v#1)+(0.5,0)$){#3};
\node[anchor=west](prop2)at($(prop1.east)+(0.5,0)$){#5};
\node[anchor=west](prop3)at($(prop2.east)+(0.5,0)$){#7};
\draw($(v1.west)+(0,-0.5)$)--($(prop3.east)+(0,-0.5)$);
\draw($0.5*(v#1.east)+0.5*(prop1.west)+(0,0.5)$)--($0.5*(v#1.east)+0.5*(prop1.west)+(0,-\R-0.5)$);
\foreach\t[count=\i] in{#4}{
\node[](tv\i)at($(prop1.south)+(0,-\i+0.5)$){\t};
}
\foreach\t[count=\i] in{#6}{
\node[](tv\i)at($(prop2.south)+(0,-\i+0.5)$){\t};
}
\foreach\t[count=\i] in{#8}{
\node[](tv\i)at($(prop3.south)+(0,-\i+0.5)$){\t};
}
\foreach\c in {1,...,#1}{
\foreach\r in {1,...,\R}{
\pgfmathtruncatemacro\s{2^(#1-\c)};
\pgfmathtruncatemacro\a{mod(ceil(\r/\s),2)};
\def\tval{\F};
\ifnum\a<1
\def\tval{\T};
\fi
\node[]at(v\c|-tv\r){\tval};
}}
\end{tikzpicture}
}
\newcommand{\venntwo}[2]{
\def\firstcircle{(0,0) circle (1.5cm)}
\def\secondcircle{(0:2cm) circle (1.5cm)}
\def\universal{(-2,-2) rectangle (4,2)}
\foreach\x[count=\i] in {#2}{
\ifnum\i=1
\fill[fill=\x] \universal;
\fi
\ifnum\i=2
\fill[fill=\x] \firstcircle;
\fi
\ifnum\i=3
\fill[fill=\x] \secondcircle;
\fi
\ifnum\i=4
\begin{scope}
\clip\firstcircle;
\fill[fill=\x] \secondcircle;
\end{scope}
\fi
}
\foreach\x[count=\i] in {#1}{
\ifnum\i=1
\draw\firstcircle node[scale=0.9,left=-1pt]{\x};
\fi
\ifnum\i=2
\draw\secondcircle node[scale=0.9,right=-1pt]{\x};
\fi
}
\draw\universal;
}
\newcommand{\vennthree}[2]{
\def\firstcircle{(0,0) circle (1.5cm)}
\def\secondcircle{(60:2cm) circle (1.5cm)}
\def\thirdcircle{(0:2cm) circle (1.5cm)}
\def\universal{(-2,-2) rectangle (4,3.7321)}
\foreach\x[count=\i] in {#2}{
\ifnum\i=1
\fill[fill=\x] \universal;
\fi
\ifnum\i=2
\fill[fill=\x] \firstcircle;
\fi
\ifnum\i=3
\fill[fill=\x] \secondcircle;
\fi
\ifnum\i=4
\fill[fill=\x] \thirdcircle;
\fi
\ifnum\i=5
\begin{scope}
\clip\firstcircle;
\fill[fill=\x] \secondcircle;
\end{scope}
\fi
\ifnum\i=6
\begin{scope}
\clip\firstcircle;
\fill[fill=\x] \thirdcircle;
\end{scope}
\fi
\ifnum\i=7
\begin{scope}
\clip\secondcircle;
\fill[fill=\x] \thirdcircle;
\end{scope}
\fi
\ifnum\i=8
\begin{scope}
\clip\firstcircle;
\clip\secondcircle;
\fill[fill=\x] \thirdcircle;
\end{scope}
\fi
}
\foreach\x[count=\i] in {#1}{
\ifnum\i=1
\draw\firstcircle node[scale=0.9,left=-1pt]{\x};
\fi
\ifnum\i=2
\draw\secondcircle node[scale=0.9,above=-1pt]{\x};
\fi
\ifnum\i=3
\draw\thirdcircle node[scale=0.9,right=-1pt]{\x};
\fi
}
\draw\universal;
}
\title{Tight Success Probabilities for Quantum Period Finding and Phase Estimation}
\author{
  Malik Magdon-Ismail\\
  CS Department, RPI\\
  Troy, NY 12180, USA.\\
  {\tt magdon@cs.rpi.edu}
  \and
  Khai Dong\\
  CS Department, RPI\\
  Troy, NY 12180, USA.\\
  {\tt dongk3@rpi.edu}
}
\date{\today}
\begin{document}

\maketitle

\begin{abstract}
\noindent
    Period finding and phase estimation are fundamental in quantum computing. 
    Prior work~\cite{ Bourdon2007, Chappell2011-el, ekera2024}
    has established lower bounds on their success probabilities.
    Such quantum algorithms measure a state \math{|\hat\ell\rangle} in an
    \math{n}-qubit computational basis, \math{\hat\ell\in[0,2^n-1]}, and then
    post-process this measurement to produce the final output, in the
    case of period finding, a divisor of the period \math{r}. We consider a
    general post-processing algorithm which succeeds whenever the measured
    \math{\hat\ell} is within some tolerance \math{M} of a positive
    integer multiple of \math{2^n/r}. We give new
    (tight) lower and upper bounds on the success probability
    that converge to 1. The parameter \math{n} captures the complexity
    of the quantum circuit. The parameter \math{M} can be tuned
    by varying the post-processing algorithm (e.g., additional brute-force
    search, lattice methods). Our tight
    analysis allows for the careful exploitation of the tradeoffs
    between the complexity of the quantum circuit and the effort
    spent in classical processing when optimizing the probability of success.
    We note that the most recent prior work in~\cite{ekera2024} does not
    give tight bounds for general \math{M}.
\end{abstract}

\section{Introduction}

We prove tight bounds on the success probability of quantum period
finding, the primitive
in Shor's famous polynomial algorithm for
quantum factoring~\cite{shor1994, shor1997}.
Our techniques also apply to the 
success probability of 
quantum phase estimation. 
Shor's algorithm finds the order \math{r} of the generator \math{U} of a
cyclic group, assuming an efficient quantum circuit
for \math{U} exists. 
Suppose
the bit-length of \math{r} is at most \math{m}, so \math{r<2^m}.
The period finding circuit in Figure~\ref{fig:qpf-circuit} uses 
a total of \math{n+m} qubits
with \math{n=2m+q+1}. 
The additional \math{q+1} qubits in the upper register
are used to enhance the probability of success. Typically,
one sets \math{q}
to a small constant.\footnote{Note, for factoring, one has that \math{r\le 2^{m-1}} which saves one bit in the upper 
register.}
The state \math{\ket{\ell}} on the upper 
register is measured where \math{\ell} is an integer in \math{[0,2^n-1]}. 
Using standard methods,  \cite{csci6964-lecture-notes}, the probability distribution for 
\math{\ell} is given by
\begin{equation}
    \mathbb{P}[\ell] = \frac{1}{2^n L} \frac{\sin^2(\pi \ell rL / 2^n)}{\sin^2(\pi \ell r / 2^n)},  \label{eqn:Pl} 
\end{equation}
where \math{L} is an integer approximately equal to
\math{2^n/r}.\footnote{The interesting case is when \math{2^n/r} is not an
integer.}
According to these probabilities,
one measures \math{\hat\ell}  and post-processes it to produce \math{\hat r}.
The algorithm succeeds if 
\math{\hat r} is a non-trivial divisor of \math{r}.

One has considerable flexibility in choosing the post-processing algorithm.
The original result of Shor used the standard
continued fractions 
post-processing. Many enhancements have been proposed, which
include lattice methods~\cite{ekera2017}, additional brute-force search~\cite{proos2004}, etc.
Such post-processing algorithms succeed whenever 
\math{\hat\ell} is within some tolerance
of a positive integer multiple of \math{2^n/r}. We present a tight
analysis of the success probability for any such post-processing.
Specifically, suppose
the post-processing algorithm succeeds if the measured
\math{\hat\ell} is within some tolerance \math{M}
of a positive integer multiple of \math{2^n/r}.
Aside from this parameter \math{M}, our analysis is
agnostic to the details of the post-processing, and hence general.
As an example, for standard
continued fractions post-processing\footnote{With enhanced 
post-processing one can compute \math{r} in one quantum run.
For example if \math{r/\hat r} is \math{cm}-smooth~\cite{ekera2024}.},
one can set \math{M\ge 2^q} (see Section~\ref{section:continued}).
Define the success probability
\mld{
  \Psucc(M)=\PP\left[\min_{k\in\{1,\ldots,r-1\}}\absof{\hat\ell-k\frac{2^n}{r}}\le M\right].
}
Our main results are Theorems~\ref{theorem:main} and~\ref{theorem:mainU},
giving tight bounds on \math{\Psucc(M)}
for the quantum circuit in Figure~\ref{fig:qpf-circuit}.
\begin{restatable} [Lower Bound]{theorem}{mainthmn} \label{theorem:main}
  For \math{m\ge 4} and \math{2\le M\le M_*} with
  \math{M_*\approx 2^{n-1}/r},
\begin{equation}
  \Psucc(M) \geq
  \left(1-\frac{1}{r} - \frac{(M-\frac12)}{\pi^2M(M-1)}\right)
  +\cl E,
  \label{eq:main}
\end{equation}
where \math{|\cl E|\in O(r2^{-n}\log_2M)\subseteq O(2^{-(m+q+1)}\log_2M)}.
The precise bound is given in~\r{eq:per-lower-precise}. 
\end{restatable}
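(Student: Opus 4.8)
\bigskip
\noindent\textbf{Proof plan.}
The plan is to split $\Psucc(M)$ into $r-1$ "peak contributions", replace each peak by a clean $\mathrm{sinc}^2$-sum up to a controllably small error, and then bound the aggregate tail loss using the equidistribution of the peak offsets $\epsilon_k$.

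\emph{Reduction to peaks.} For $M\le M_*$ the intervals $I_k:=[k2^n/r-M,\ k2^n/r+M]$, $k=1,\dots,r-1$, are pairwise disjoint and lie inside $[0,2^n-1]$ (consecutive multiples $k2^n/r$ are $2^n/r>2M$ apart, and $I_1,I_{r-1}$ stay in range); this is precisely the role of the threshold $M_*\approx2^{n-1}/r$. Hence $\Psucc(M)=\sum_{k=1}^{r-1}p_k$ with $p_k=\sum_{\ell\in\Z\cap I_k}\PP[\ell]$. Put $c_k=$ the integer nearest $k2^n/r$, $\epsilon_k=k2^n/r-c_k\in[-\tfrac12,\tfrac12]$, and $\delta_0=1-rL/2^n=(2^n\bmod r)/2^n\in[0,r/2^n)$. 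Substituting $\ell=c_k+j$ into~\eqref{eqn:Pl} and using $\sin^2(\pi(j-\epsilon_k))=\sin^2(\pi\epsilon_k)$ gives
\[
  \PP[c_k+j]=\frac{1}{2^nL}\cdot\frac{\sin^2\!\big(\pi[\epsilon_k+(j-\epsilon_k)\delta_0]\big)}{\sin^2\!\big(\pi(j-\epsilon_k)r/2^n\big)} .
\]

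\emph{Sinc approximation.} Next I would prove $\PP[c_k+j]=\tfrac1r\,\mathrm{sinc}^2(j-\epsilon_k)+e_{k,j}$ with $\sum_{\ell\in\Z\cap I_k}|e_{k,j}|=O(2^{-n}\log_2M)$, where $\mathrm{sinc}(x)=\sin(\pi x)/(\pi x)$. The ingredients: $\tfrac1{2^nL}=\tfrac1r\bigl(1+O(r/2^n)\bigr)$ and $\tfrac{2^n}{Lr}=1-\delta_0$; the partial-fraction bound $0\le\frac1{\sin^2\pi\psi}-\frac1{\pi^2\psi^2}\le1$ for $|\psi|\le\tfrac12$, applied with $\psi=(j-\epsilon_k)r/2^n$ (contributing $O(1/2^nL)$ per term, hence $O(2^{-n})$ over the peak); and, for the numerator, the identity $\sin^2A-\sin^2B=\sin(A+B)\sin(A-B)$, giving $\bigl|\sin^2(\pi[\epsilon_k+\xi])-\sin^2(\pi\epsilon_k)\bigr|\le\pi^2(2|\epsilon_k|+|\xi|)\,|\xi|$ with $\xi=(j-\epsilon_k)\delta_0$. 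The key point is that this last bound stays \emph{finite} as $j\to0$, so the induced per-term error is $O(\delta_0/r)=O(2^{-n})$ at $j=0$ and $O\bigl(1/(2^n|j-\epsilon_k|)\bigr)$ for $|j|\ge1$, summing to $O(2^{-n}\log_2M)$ over $\Z\cap I_k$. Summing over $j$, $p_k=\tfrac1rT(\epsilon_k)+O(2^{-n}\log_2M)$ where $T(\epsilon):=\sum_{j:\,|j-\epsilon|\le M}\mathrm{sinc}^2(j-\epsilon)$. Since $\sum_{j\in\Z}\mathrm{sinc}^2(j-\epsilon)=1$ and $T(0)=1$, summing over $k$ and writing $\sum_{k=1}^{r-1}T(\epsilon_k)=(r-1)-\sum_{k=1}^{r-1}(1-T(\epsilon_k))$ yields
\[
  \Psucc(M)=\Bigl(1-\tfrac1r\Bigr)-\tfrac1r\sum_{k=1}^{r-1}\bigl(1-T(\epsilon_k)\bigr)+O\bigl(r2^{-n}\log_2M\bigr).
\]

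\emph{The averaging bound (main obstacle).} It remains to show $\tfrac1r\sum_{k=1}^{r-1}(1-T(\epsilon_k))\le\frac{M-\frac12}{\pi^2M(M-1)}$; this is the crux, since $\max_\epsilon(1-T(\epsilon))$ (attained near $\epsilon=\tfrac12$) is roughly twice this quantity, so genuine cancellation over $k$ is needed rather than a pointwise estimate. Because $1-T(0)=0$, the sum over $k=1,\dots,r-1$ equals that over $k=0,\dots,r-1$, and $1-T(\epsilon_k)$ depends only on $\theta_k:=k2^n/r\bmod1$. The multiset $\{\theta_k\}_{k=0}^{r-1}$ equals $\{0,\tfrac1N,\dots,\tfrac{N-1}N\}$ with multiplicity $d$, where $d=\gcd(2^n\bmod r,\,r)$ and $N=r/d$ (if $N=1$ the sum is trivially $0$). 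Writing $1-T(\epsilon)=\tfrac1{\pi^2}\sin^2(\pi\epsilon)\sum_{m:\,|m-\epsilon|>M}(m-\epsilon)^{-2}$, using $\sin^2(\pi i/N)=\sin^2(\pi(m-i/N))$, and applying the bijection $\{m-i/N:\ m\in\Z,\ 0\le i<N\}=\{p/N:\ p\in\Z\}$ via $p=Nm-i$ (under which $|m-i/N|>M\iff|p|>MN$) collapses the double sum to
\[
  \sum_{k=1}^{r-1}\bigl(1-T(\epsilon_k)\bigr)=\frac{dN^2}{\pi^2}\sum_{|p|>MN}\frac{\sin^2(\pi p/N)}{p^2}=\frac{2dN^2}{\pi^2}\sum_{p>MN}\frac{\sin^2(\pi p/N)}{p^2}.
\]
Grouping $p$ into blocks $\{kN+1,\dots,(k+1)N\}$ for $k\ge M$, bounding $p^{-2}\le(kN)^{-2}$, and using $\sum_{j=1}^{N}\sin^2(\pi j/N)=N/2$ (for $N\ge2$) gives $\sum_{p>MN}\sin^2(\pi p/N)/p^2\le\tfrac1{2N}\sum_{k\ge M}k^{-2}$, so, since $dN=r$, $\tfrac1r\sum_{k=1}^{r-1}(1-T(\epsilon_k))\le\tfrac1{\pi^2}\sum_{k\ge M}k^{-2}$. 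Finally $\sum_{k\ge M}k^{-2}\le\int_{M-1/2}^{\infty}x^{-2}\,dx=\tfrac1{M-1/2}\le\frac{M-1/2}{M(M-1)}$ because $(M-\tfrac12)^2\ge M(M-1)$. Substituting into the display above yields~\eqref{eq:main}; since $r<2^m$ and $n=2m+q+1$ we have $r2^{-n}\le2^{-(m+q+1)}$, giving the stated order of $\cl E$, and tracking the explicit constants through the argument gives the precise form~\eqref{eq:per-lower-precise}.
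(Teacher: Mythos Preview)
Your argument is correct and rests on the same core ingredients as the paper: the equidistribution of the fractional parts $k2^n/r\bmod1$ (your description via $d=\gcd(2^n\bmod r,r)$, $N=r/d$ is equivalent to the paper's $r=p2^s$ decomposition in Lemma~\ref{lemma:unique-beta}, with $d=2^s$, $N=p$), the reduction to $\mathrm{sinc}^2$ tails via $\sum_{j\in\Z}\mathrm{sinc}^2(j-\epsilon)=1$ (which is exactly the paper's DiGamma identity, Lemma~\ref{lemma:digamma}), the sum $\sum_{j}\sin^2(\pi j/N)=N/2$ (Lemma~\ref{lemma:sum-sin}), and a perturbation estimate producing the $O(r2^{-n}\log_2M)$ error.

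The differences are organizational rather than conceptual. The paper packages everything into reusable lemmas about the functions $H_L(x;M)$ and $H_L(x;M,\epsilon)$ (Lemmas~\ref{lemma:techHHbound} and~\ref{lemma:H-bounds}): it first removes the perturbation $\epsilon$, then bounds the tail \emph{pointwise} in $\delta_k$ by $\sin^2(\pi\delta_k)\cdot\frac{2M-1}{\pi^2M(M-1)}$, and finally averages using $\sum_j\sin^2(\pi j/p)=p/2$. You instead pass directly to $\mathrm{sinc}^2$ in one step and, for the averaging, first collapse the double sum over $(i,m)$ to a single sum over $p=Nm-i$ before bounding by blocks of length $N$. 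Your route is leaner for this one theorem and in fact gives the marginally tighter intermediate bound $\tfrac{1}{\pi^2(M-1/2)}$ before you relax it to match~\eqref{eq:main}; the paper's modular $H_L$ machinery pays off in that the same lemmas are reused verbatim for the upper bound (Theorem~\ref{theorem:mainU}) and for phase estimation. Note also that the paper's monotonicity result, Theorem~\ref{theorem:P1}, is not actually invoked in its proof of Theorem~\ref{theorem:main} (it enters only in the coarser bound~\eqref{eq:Psuccess-3} and in Section~\ref{sec:qpe}), so your bypassing it is consistent with the paper's own argument.
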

In comparison to the recent
lower bound
by Eker$\accentset{\circ}{\text{a}}$ in~\cite{ekera2024}, our result:
\begin{inparaenum}[(i)]
  \item is nearly tight in the leading term (see our upper bound); and,
  \item
    has a logarithmic in \math{M} error term
    while Eker$\accentset{\circ}{\text{a}}$'s error term is linear
    in~\math{M}. Eker$\accentset{\circ}{\text{a}}$'s bound is of limited value
    for large \math{M} or large \math{r}
    (see Table~\ref{tab:compare-res}), and hence
    Eker$\accentset{\circ}{\text{a}}$'s bound becomes loose for
    extensive post-processing.
\end{inparaenum}
Note that Eker$\accentset{\circ}{\text{a}}$'s bound in~\cite{ekera2024}
includes the peak at \math{\ell=0}, which we can also include in our analysis,
but we are not aware of any post-processing that extracts non-trivial
information from this peak. Hence, in our comparisons in
Table~\ref{tab:compare-res}, we exclude the peak at zero.
The next result is our upper bound.
\begin{restatable} [Upper Bound]{theorem}{mainthmnU} \label{theorem:mainU}
Under the conditions of Theorem~\ref{theorem:main}, for a constant
\math{\kappa\le1+(M-\frac14)/M(M-1)},
\begin{equation}
  \Psucc(M) \le
  \left(1-\frac{1}{r} - \frac{(M-\frac12)}{\kappa\pi^2M(M-1)}\right)
  +\overline{\cl E},
\end{equation}
where \math{\overline{\cl E}-\cl E\in O(r2^{-n})}. The precise bound is given in
\r{eq-upper-final}.
\end{restatable}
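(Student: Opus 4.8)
The plan is to mirror the proof of Theorem~\ref{theorem:main}, reversing every inequality. Set $y_k=k2^n/r$ and let $B_0,\dots,B_{r-1}$ be the Voronoi blocks partitioning $\{0,\dots,2^n-1\}$, $B_k$ consisting of the integers closest to $y_k$ (each of length $\approx 2^n/r$). Since $M\le M_*$, each success window $\{\ell:\absof{\ell-y_k}\le M\}$ lies strictly inside $B_k$, so $\Psucc(M)=\sum_{k=1}^{r-1}W_k(M)$ with $W_k(M)=\sum_{\ell\in B_k,\ \absof{\ell-y_k}\le M}\PP[\ell]$, and splitting each block into its window and its tail $T_k(M)=\sum_{\ell\in B_k,\ \absof{\ell-y_k}>M}\PP[\ell]$ gives
\[
\Psucc(M)\;=\;\Bigl(1-\sum_{\ell\in B_0}\PP[\ell]\Bigr)\;-\;\sum_{k=1}^{r-1}T_k(M).
\]
For the upper bound I thus need a \emph{lower} bound on the $\ell\approx 0$ mass and a \emph{lower} bound on $\sum_k T_k(M)$. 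The first costs almost nothing: keeping only the central term, $\sum_{\ell\in B_0}\PP[\ell]\ge\PP[0]=L/2^n\ge\tfrac1r-2^{-n}$.

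For the tails, work in local coordinates: let $c_k$ be the integer nearest $y_k$, $\epsilon_k=c_k-y_k\in[-\tfrac12,\tfrac12]$, write $\ell=c_k+t$, $u=t+\epsilon_k$ (so $\absof{\ell-y_k}=\absof u$), and put $rL/2^n=1+\eta$ with $\absof\eta<r2^{-n}$. Since $y_kr\equiv 0\pmod{2^n}$ and $\sin^2 x\le x^2$,
\[
\PP[c_k+t]\;=\;\frac1{2^nL}\,\frac{\sin^2\!\bigl(\pi u(1+\eta)\bigr)}{\sin^2(\pi ur/2^n)}\;\ge\;\frac{2^n}{\pi^2r^2L}\cdot\frac{\sin^2\!\bigl(\pi u(1+\eta)\bigr)}{u^2},
\]
and no constant is lost here -- this is exactly why the upper bound should be routed through the $T_k$ and not through an upper bound on $W_k$, which would need $\sin^2(\pi ur/2^n)$ bounded \emph{below} and cost a factor degrading as $M\to M_*$. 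I then estimate the sum of the displayed lower bound over the tail of $B_k$ as ``(all $t\in\Z$) $-$ (window $\absof u\le M$) $-$ (outside $B_k$)''. The full-line sum is $\sum_{t\in\Z}\sin^2(\pi u(1+\eta))/u^2=\pi^2\bigl(1+O(\absof\eta)\bigr)$ by Poisson summation (the Fourier transform of $\sin^2(\pi\alpha x)/x^2$ is a truncated triangle, so only the frequencies $0,\pm1$ survive); the outside-block part is $\le\sum_{\absof u>2^{n-1}/r}u^{-2}=O(r2^{-n})$; and the window part is \emph{upper}-bounded using $\absof{\sin^2(\pi u(1+\eta))-\sin^2(\pi u)}\le\pi\absof{u\eta}\min(1,3\pi\absof u)$ (so after dividing by $u^2$ the $\eta$-error is $O(\absof\eta)$ near $u=0$ and $\le\pi\absof\eta/\absof u$ otherwise, summing to $O(\absof\eta\log M)$ over the window) together with the identity $\sum_{\absof u\le M}\sin^2(\pi\epsilon_k)/u^2=\pi^2-\sin^2(\pi\epsilon_k)\,S(\epsilon_k,M)$, where $S(\epsilon,M):=\sum_{\absof{t+\epsilon}>M}(t+\epsilon)^{-2}$. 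Collecting everything and using $2^n/(\pi^2r^2L)=\tfrac1{\pi^2r}\bigl(1+O(r2^{-n})\bigr)$,
\[
\sum_{k=1}^{r-1}T_k(M)\;\ge\;\frac1{\pi^2}\,\Sigma(M)\;-\;O\!\bigl(r2^{-n}\log M\bigr),\qquad \Sigma(M):=\frac1r\sum_{k=0}^{r-1}\sin^2(\pi\epsilon_k)\,S(\epsilon_k,M)
\]
(the $k=0$ term vanishes since $\epsilon_0=0$, so the sum over $k\ge1$ equals the sum over all $k$).

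The last step is a closed-form lower bound on $\Sigma(M)$. Write $r=2^a r'$ with $r'$ odd; since $\gcd(2^{n-a},r')=1$, as $k$ runs over $0,\dots,r-1$ the fractional part of $k2^n/r$ takes each value in $\{0,1/r',\dots,(r'-1)/r'\}$ exactly $2^a$ times, so $\{\epsilon_k\}$ is the multiset $\{j/r':\absof j\le(r'-1)/2\}$ with multiplicity $2^a$; using $\sin^2(\pi(t+j/r'))=\sin^2(\pi j/r')$ this collapses to
\[
\Sigma(M)\;=\;r'\!\!\sum_{k\in\Z,\,\absof k>Mr'}\!\!\frac{\sin^2(\pi k/r')}{k^2}\;=\;2r'\sum_{i\ge0}\ \sum_{j=1}^{r'}\frac{\sin^2(\pi j/r')}{\bigl((M+i)r'+j\bigr)^2}.
\]
For each fixed $i$, pairing $j\leftrightarrow r'-j$ and using convexity of $x\mapsto x^{-2}$ with $\sum_{j=1}^{r'}\sin^2(\pi j/r')=r'/2$ gives $\sum_{j=1}^{r'}\sin^2(\pi j/r')\bigl((M+i)r'+j\bigr)^{-2}\ge\tfrac{r'}{2}\bigl((M+i+\tfrac12)r'\bigr)^{-2}$, whence $\Sigma(M)\ge\sum_{i\ge0}(M+i+\tfrac12)^{-2}>\int_0^\infty(M+\tfrac12+x)^{-2}dx=\tfrac1{M+1/2}$. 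Substituting into the tail bound yields $\Psucc(M)\le 1-\tfrac1r-\tfrac1{\pi^2(M+1/2)}+\overline{\cl E}$, which is the claimed inequality with $\kappa=\tfrac{M^2-1/4}{M(M-1)}=1+\tfrac{M-1/4}{M(M-1)}$ (a smaller $\kappa$ works since the inequality is strict). And $\overline{\cl E}-\cl E\in O(r2^{-n})$ because in \emph{both} theorems the only error of size $\Theta(\log M)$ is the window $\eta$-correction, controlled in each case by the same quantity $\absof\eta\sum_{1\le\absof u\le M}\absof u^{-1}$, while the block truncation, the factor $rL/2^n=1+\eta$, the Poisson correction, and $L/2^n$ versus $1/r$ are all $O(r2^{-n})$.

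The step I expect to be most delicate is not any individual estimate but the uniform error bookkeeping: checking that every approximation above is simultaneously valid and small throughout the stated range $m\ge4$, $2\le M\le M_*\approx 2^{n-1}/r$ -- the edge case $M$ near $M_*$, where the tail shrinks to nothing and the $O(r2^{-n}\log M)$ term can dominate the main tail term, being the one to watch -- and verifying that the $\log M$-order part of the accumulated error matches that of $\cl E$ so that $\overline{\cl E}-\cl E$ carries no $\log M$. The arithmetic collapse of $\Sigma(M)$ and the pairing/convexity bound $\Sigma(M)>1/(M+\tfrac12)$ are short once the setup is in place; they are the conceptual core, but not the source of difficulty.
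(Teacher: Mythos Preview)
Your argument is correct in its main thrust and reaches the same leading term \(1-\tfrac1r-\tfrac{1}{\pi^2(M+\tfrac12)}\) as the paper, but the route is genuinely different. The paper stays inside the \(H_L\) framework: it writes \(\Psucc(M)=\tfrac{1}{2^nL}\sum_k H_L(\delta_k;M)+\tfrac{1}{2^nL}\sum_k\cl A_L(\delta_k;M)+\phi\), applies the upper bound of Lemma~\ref{lemma:H-bounds} (obtained from \(H_L(x;\lfloor L/2\rfloor)\approx L^2\) plus Jensen and an integral estimate), and then uses the distribution of the \(\delta_k\) together with \(\sum_j\sin^2(\pi j/p)=p/2\). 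You instead pass to the complementary tails, use \(\sin^2 x\le x^2\) in the denominator (so no constant is lost), control the full line by Poisson/Mittag--Leffler, and collapse the arithmetic to the single sum \(\Sigma(M)=r'\sum_{|k|>Mr'}\sin^2(\pi k/r')/k^2\), which you bound below by the pairing-plus-convexity argument. Your approach bypasses the monotonicity Theorem~\ref{theorem:P1} and most of the \(H_L\) machinery; the paper's approach, by reusing that machinery, keeps the perturbation error in the identical form \(\tfrac{1}{2^nL}\sum_k\cl A_L(\delta_k;M)\) in both the lower and upper bounds.

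That last point is where your proof has a real gap. The claim \(\overline{\cl E}-\cl E\in O(r2^{-n})\) in the theorem holds in the paper precisely because both \(\cl E\) and \(\overline{\cl E}\) contain the \emph{same exact} quantity \(\tfrac{1}{2^nL}\sum_k\cl A_L(\delta_k;M)\), which carries the entire \(\log M\) contribution and cancels in the difference; what remains is \(\tfrac{2r}{2^n\pi^2}+O(\epsilon)=O(r2^{-n})\). Your \(\log M\) error comes from a different source (the numerator \(\eta\)-correction \(\sum_{|u|\le M}[\sin^2(\pi u(1+\eta))-\sin^2(\pi u)]/u^2\)), and it is not the same expression as the paper's \(\cl A_L\) sum---the constants and even the functional form differ. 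So you cannot conclude that your \(\overline{\cl E}\) and the paper's \(\cl E\) agree to \(O(r2^{-n})\); a priori their difference is only \(O(r2^{-n}\log M)\). Your remark that ``in both theorems the only error of size \(\Theta(\log M)\) is the window \(\eta\)-correction, controlled by the same quantity'' presumes that Theorem~\ref{theorem:main} is also proved via your tail method, which it is not. To recover the sharp \(\overline{\cl E}-\cl E\) statement you would either need to rederive the lower bound in your framework (so both errors share an identical \(\eta\)-correction term) or show directly that your \(\eta\)-correction equals \(\tfrac{1}{2^nL}\sum_k\cl A_L(\delta_k;M)\) up to \(O(r2^{-n})\).
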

Up to a constant \math{\kappa=1+O(1/M)}, the lower bound is tight.
Our methods also apply to phase estimation.
In addition to our methods being of potential independent interest for
analyzing probability distributions arising from the quantum
Fourier transform,
the salient aspects of our results are:
\begin{enumerate}[label={(\roman*)},nolistsep,topsep=3pt]
\item The simple  \math{1-1/r} asymptotic dependence on \math{r}, which
  is unavoidable. In practice one  
  classically tests periods up to some large
  \math{r_0}, e.g., \math{r_0=10^6}. So, the quantum algorithm need only
  succeed for \math{r>r_0}.
\item For standard continued fractions post-processing,
  \math{M\ge 2^q} giving exponential convergence w.r.t.
  the additional qubits in the top register,
  a parameter one can control. Setting \math{q} to a small constant
  suffices.
\item By
  choosing \math{r_0} and~\math{q},
  the success probability can be made 
  arbitrarily close to 1. In prior work, only~\cite{ekera2024}
  has such a capability. All other work gives a
  constant, albeit high enough success probability, which is boosted
  through repeated quantum runs.
  It is preferable to run the quantum circuit as
  few times as
  possible, hence, the need for high success probability in
  one quantum run.
\item The result applies to any post-processing that only
  requires  \math{\hat\ell} to be within \math{M}
  of a positive integer multiple of
  \math{2^n/r}. By tuning \math{n} (the complexity of the quantum circuit)
  and \math{M} (the complexity of the classical post-processing), one
  can exploit the tradeoffs between performing
  additional classical computation and increasing the quantum
  circuit complexity. To accurately analyze these tradeoffs, one needs
  tight bounds on the success probabilities. For example,
  Proos and Zalka~\cite{proos2004} suggest
  continued fraction post-processing of not just
  \math{\hat\ell} but also \math{\hat\ell\pm 1,\ldots,\hat\ell\pm B}. In this
  case, sets \math{M=2^q+B}~in~\r{eq:main}
  to explore the tradeoff between
  the additional  continued fraction
  post-processing and the success probability.
\item
  We are not aware of any upper bounds on the probability \math{\Psucc(M)}.
\end{enumerate}  
We prove Theorems~\ref{theorem:main} and~\ref{theorem:mainU}
using number
theoretic considerations to perform
an exact analysis of the success probability that
may be of independent interest.
Table~\ref{tab:compare-res}
numerically compares our bound with the most recent prior
work in~\cite{ekera2024} highlighting the tightness of our bounds
and the deficiencies of the results in~\cite{ekera2024}.

\paragraph{Paper Organization.}
Next, we survey related work and compare our
result to existing bounds on the success probability
of quantum period finding.
In Section~\ref{setcion:prelim}, we introduce the preliminary technical results
which are needed for our proofs.
In Section~\ref{sec:qpe}, we apply our tools to phase estimation as a warm up,
seemlessly reproducing the lower bound
in~\cite{Chappell2011-el} with a rigorous
proof.
In Section~\ref{sec:qpf}, we prove Theorems~\ref{theorem:main}
and~\ref{theorem:mainU}.

\begin{table}[t]
    \centering
    \renewcommand{\arraystretch}{1.3}
    \begin{tabular}{|c||c|c|c|c||c|c|c|c|}
      \hline
      \multirow{2}{*}{\centering Success Probability} &
      \multicolumn{4}{c||}{$\bm r$ for \math{M=2^q}}&
      \multicolumn{4}{c|}{$\bm r$ for \math{M=2^{q+3}}} \\
    \cline{2-9}
    & $3$ & $15$ & $63$ & $255$
    & $3$ & $15$ & $63$ & $255$
    \\
    \hline
         Exact (Theorem~\ref{theorem-exact})&
         0.664 & 0.930 & 0.981 & 0.993&
         0.666 & 0.933 & 0.984 & 0.996
         \\
        Simulation (Section~\ref{sec:quantum-sim-exp}) &
        0.664 & 0.932 & 0.982  & 0.994 &
        0.665 & 0.933 & 0.985 & 0.996
        \\
        \bf Upper Bound (Theorem~\ref{theorem:mainU}) &
        \bf  0.664 & \bf 0.930 & \bf 0.981 & \bf 0.994&
        \bf  0.666 & \bf 0.933 & \bf 0.984 & \bf 0.997
        \\
        \bf Lower Bound (Theorem~\ref{theorem:main}) &
        \bf  0.664 & \bf 0.930 & \bf 0.981 & \bf 0.992&
       \bf  0.666 & \bf 0.933 & \bf 0.984 & \bf 0.995
        \\
        Prior work~\cite{ekera2024}, adapted&
         0.662 & 0.925 & 0.968 & 0.951 &
         0.664 & \color{red} 0.916 & \color{red} 0.909 & \color{red} 0.689  \\ 
        \hline
    \end{tabular}
    \caption[]{\small 
      Various methods for computing the
      success probability in quantum period finding,
      for $m = 8$, \math{q = 5},
      \math{L = \floor{{2^n}/{r}}}.
      The first four columns show results for continued fraction post-processing where
      \math{M=2^q} (see Section~\ref{section:continued}).
      The next four columns show results for \math{M=2^{q+3}} which can represent enhanced post-processing, for example
      continued fraction post-processing plus some additional bruteforce searching around the measured \math{\hat\ell}.      
       The first row is the exact probability
       using a formula derived from
       number-theoretic considerations.
      The second row is a Monte Carlo simulation of measuring
      \math{\hat\ell} and running the continued fraction algorithm
      (see Section~\ref{sec:quantum-sim-exp}).
      The third and fourth rows are our upper and lower bounds.
      The fifth row is
      Eker$\accentset{\circ}{\text{a}}$'s
      recent result in~\cite[Theorem 4.9]{ekera2024},
      setting \math{B=M=2^q} for
      continued fractions post-processing
      without \math{cm}-smoothness
      and excluding the peak at \math{\ell=0},

      \parbox{\linewidth}{
        \mand{
          \PP[\text{success from~\cite{ekera2024}}] \ge
          \left( \frac{r - 1}{r}\left(1  - \frac{1}{\pi^2} \left( \frac{2}{2^q} + \frac{1}{2^{2q}} + \frac{1}{2^{3q}} \right)\right) - \frac{\pi^2 (r - 1) (2^{q+1}+ 1)}{2^{2m +q+1}}  \right).
        }
      }

      There is a minor overcounting in Eker$\accentset{\circ}{\text{a}}$'s
      result, which we
      remove in the table (see Appendix~\ref{sec:revised-ekera} and
      Theorem~\ref{theorem:revise-ekera}).
      The red entries in the table are success probabilities that decrease when
      \math{M} increases. This happens in 
      Eker$\accentset{\circ}{\text{a}}$'s result due to the
      linear dependence of the error
      term on \math{M}. This is not a problem in our result.
      Also, Eker$\accentset{\circ}{\text{a}}$'s bound can
      begin to decrease as \math{r} increases, again due to the
      linear dependence of the error
    term on \math{M}.}
    \label{tab:compare-res}
\end{table}

\section{Related Work}

Since Shor's original
quantum factoring algorithm~\cite{shor1994, shor1997},
there has been work in two directions.
Computing the success probability of the quantum circuit and improving the
post-processing of the measured $\hat\ell$.
Knill~\cite{knill1995}
explores the tradeoffs between the quantum exponent length and the size of the classical search space. Seifert \cite{seifert2001}, and later Eker$\accentset{\circ}{\text{a}}$ \cite{ekera2021}, propose strategies that reduce quantum workload by jointly solving a set of frequencies for $r$,
performing more post-processing work to reduce the number of quantum runs.

There is a history of improvements in estimating the success probability
of the period-finding quantum circuit.
Shor used $r < 2^m$ in his original analysis.
Gerjuoy \cite{gerjuoy2005} used
$r \leq \lambda(N) < \frac{N}{2}$, where $\lambda$ is the Carmichael function,
to get a  success
probability of at least 90\%
using the same number of auxiliary qubits as Shor.\footnote{For integer
factoring,
our results continue to hold while using one less auxillary qubit.}
Bourdon and Williams \cite{Bourdon2007} later improved this to about $94\%$.
Proos and Zalka~\cite{proos2004} discussed how the success probability
improves when solving for $r$ using 
not only $\hat\ell$,
but also offset frequencies $\hat\ell \pm 1, \dots, \hat\ell \pm B$. 
Einarsson~\cite{einarsson2003} investigated the expected
number of quantum runs in Shor's algorithm without giving
any formal lower bound on the success probability.
Einarsson also observed that enhanced post-processing either
through brute-force search or a more efficient classical algorithm can
help.

The most recent work is
Eker$\accentset{\circ}{\text{a}}$~\cite{ekera2024} who
investigateed the success probability and studied
an enhanced post-processing to obtain $r$ in one run
assuming $r / \hat{r}$ is $cm$-smooth.
Eker$\accentset{\circ}{\text{a}}$
derived a $1 - 10^{-4}$ success probability with $m = 128$
using the same exponent length as Shor. Eker$\accentset{\circ}{\text{a}}$'s
result includes the peak at \math{\hat\ell=0} which treats
the trivial case $\hat{r} = 1$ as a success.
It is easy to exclude this peak at \math{\hat\ell=0} in
Eker$\accentset{\circ}{\text{a}}$'s result, both from the probability
analysis~\cite[Equation (2), Theorem 4.9]{ekera2024} and
the \math{cm}-smoothness lemma~\cite[Lemma 4.4]{ekera2024}
(where $r = 1$ deterministically yields $z = 0$
which every prime power $p$ greater than $cm$ trivially divides).
For completeness, we give these modifications
in Appendix \ref{sec:revised-ekera} (see Theorem \ref{theorem:revise-ekera}).
It is similarly easy to add back the peak at \math{\hat\ell=0} into
our result.
Eker$\accentset{\circ}{\text{a}}$'s result
(Theorem \ref{theorem:revise-ekera}) has two terms:
the success probability for
the quantum circuit and the success probability of the \math{cm}-smooth
post-processing. One can seemlessly add the \math{cm}-smooth
post-processing to our result obtaining a corresponding
probability of success to
recover \math{r} in one quantum run using
$r / \hat{r}$ $cm$-smoothness.
Ultimately, the difference between Eker$\accentset{\circ}{\text{a}}$'s result
and ours is the lower bound on the quantum circuit's
success probability.
We show numerical comparisons of various methods for computing
success probabilities in Table \ref{tab:compare-res}, illustrating
the tightness
of our bound (note, we exclude the peak at \math{\hat\ell=0} from all
results).  Table \ref{tab:compare-res} also highlights some significant
disadvantages with Eker$\accentset{\circ}{\text{a}}$'s result,
especially when
\math{r} and \math{M} get large, on account of the linear dependence of
the error term on~\math{M}.
The comparison of bounds remains quantitatively similar even if the peak at
\math{\hat\ell=0} is included.

The key tools needed to prove Theorem \ref{theorem:main} stem
from an analysis of the quantum phase estimation
circuit (Section~\ref{sec:qpe}). Our approach
reproduces the result in~\cite{Chappell2011-el}, simplifying the analysis and
making it rigorous.  Via a tight
perturbation analysis, we use the
same tools to analyze the quantum period finding circuit.

\section{Preliminary Technical Results}\label{setcion:prelim}

For \math{x\in[0,1]}, define a function $H_L(x; M)$
that is a sum of \math{2M}
sinusoid ratios as follows,
\begin{equation}
      H_L(x;M)= 
\sum_{z=0}^{M-1}
  \frac{\sin^2(\pi x)}{\sin^2(\pi(z+x)/L)}
  +
  \frac{\sin^2(\pi x)}{\sin^2(\pi(z+1-x)/L)}
  \label{eq:P-defH-prelim}
\end{equation}
where $L, M \in \N$ and $L \geq 2$. For brevity, we may omit
$L$ and $M$ when the context is clear. Such functions are
relevant because the probability to measure
a state is a ratio of sinusoids; hence, sums of probabilities
involve sums of this form. The main technical result
we prove is that \math{H(x)} is monotonic on
\math{[0,\frac12]}.

\begin{restatable}{theorem}{techPone} \label{theorem:P1}
 For \math{x\in[0,\frac12]}, the function
  \math{H_L(x;M)} is non-increasing
  for \math{1\le M\le \floor{L/2}} and non-decreasing for
  \math{\floor{L/2}+1\le M\le L}.
\end{restatable}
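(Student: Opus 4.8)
\emph{Reduction and rational form.} First I would dispose of the ``non-decreasing'' regime by a reflection identity. Set $S_j(x):=\sin^2(\pi x)/\sin^2(\pi(x+j)/L)$; since $\sin^2(\pi(z{+}1{-}x)/L)=\sin^2(\pi(L{-}1{-}z{+}x)/L)$, each summand of $H_L(x;M)$ equals some $S_j$, and in fact $H_L(x;M)=\sum_{z=0}^{M-1}\bigl(S_z(x)+S_{L-1-z}(x)\bigr)$. Adding this for the parameters $M$ and $L-M$, telescoping the two index ranges, and using the classical identity $\sum_{j=0}^{L-1}\csc^2(\pi(x+j)/L)=L^2\csc^2(\pi x)$ (equivalently $\sum_{j=0}^{L-1}S_j(x)=L^2$) gives $H_L(x;M)+H_L(x;L-M)=2L^2$. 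As $0\le L-M\le\lfloor L/2\rfloor$ whenever $\lfloor L/2\rfloor+1\le M\le L$, it suffices to show $H_L(\cdot;M)$ is non-increasing on $[0,\tfrac12]$ for $1\le M\le\lfloor L/2\rfloor$ (the case $M=0$ is vacuous). Expanding each $S_j$ by the Herglotz formula $\pi^2\csc^2(\pi y)=\sum_{m\in\Z}(y-m)^{-2}$ — using that, for $M\le\lfloor L/2\rfloor$, the residues $\{0,\dots,M-1\}$ and $\{L-M,\dots,L-1\}$ are disjoint — one gets
\[
H_L(x;M)=L^2\,\frac{N(x)}{D(x)},\qquad N(x)=\sum_{n\in T}\frac{1}{(n+x)^2},\quad D(x)=\sum_{n\in\Z}\frac{1}{(n+x)^2}=\frac{\pi^2}{\sin^2\pi x},
\]
where $T=\{\,n\in\Z:\ (n\bmod L)\in\{0,\dots,M-1\}\cup\{L-M,\dots,L-1\}\,\}$, a set invariant under $n\mapsto-1-n$. (If $L=2M$ then $T=\Z$ and $H_L\equiv L^2$, so assume $L>2M$.)

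\emph{Differentiation.} Because $N,D>0$ on $(0,\tfrac12)$, monotonicity of $H_L(\cdot;M)$ there is equivalent to $N'/N\le D'/D$. From $D=\pi^2\csc^2\pi x$ we get $D'/D=-2\pi\cot(\pi x)$, while $N'=-2\sum_{n\in T}(n+x)^{-3}$; together with $\sum_{\Z}(n+x)^{-3}=\pi^3\csc^2(\pi x)\cot(\pi x)$ this turns the whole statement into
\[
\rho_T(x)\ \ge\ \rho_{\Z}(x)=\pi\cot(\pi x)\quad(0<x<\tfrac12),\qquad\text{where}\quad \rho_A(x):=\frac{\sum_{n\in A}(n+x)^{-3}}{\sum_{n\in A}(n+x)^{-2}}.
\]

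\emph{From $\rho_T\ge\rho_\Z$ to stochastic dominance.} Next I would fold each sum onto $n\ge0$ by the involution $n\mapsto-1-n$ (which makes every denominator positive), substitute $(n+x)^{-2}=\int_0^\infty t\,e^{-(n+x)t}\,dt$ and $(n+x)^{-3}=\tfrac12\int_0^\infty t^2 e^{-(n+x)t}\,dt$, and interchange sum and integral. For every $n\mapsto-1-n$-invariant $A$ this yields $\rho_A(x)=\bigl(\int_0^\infty\psi\,\omega_A\bigr)\big/\bigl(\int_0^\infty\omega_A\bigr)$ with
\[
\psi(t)=\tfrac{t}{2}\tanh\!\bigl(\tfrac12(1-2x)t\bigr),\qquad \omega_A(t)=t\,e^{-xt}\bigl(1+e^{-(1-2x)t}\bigr)\,g_A(t),\qquad g_A(t)=\!\!\sum_{\substack{n\in A\\ n\ge 0}}\!\! e^{-nt}.
\]
Here $\psi$ is non-negative, \emph{non-decreasing} on $(0,\infty)$, and independent of $A$, so $\rho_A(x)=\mathbb E_{\mu_A}[\psi]$ for the probability measure $\mu_A\propto\omega_A(t)\,dt$. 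Thus $\rho_T\ge\rho_\Z$ follows once $\mu_T$ stochastically dominates $\mu_\Z$, for which (by the monotone-likelihood-ratio criterion) it is enough that $\omega_T/\omega_\Z=g_T/g_\Z$ be non-decreasing in $t$.

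\emph{The likelihood ratio, and the crux.} Summing geometric series, $g_\Z(t)=(1-e^{-t})^{-1}$ and $g_T(t)=(1-e^{-Mt})(1+e^{-(L-M)t})\big/\bigl[(1-e^{-t})(1-e^{-Lt})\bigr]$, so with $y=e^{-t}\in(0,1)$,
\[
\frac{g_T(t)}{g_\Z(t)}=\frac{(1-y^M)(1+y^{L-M})}{1-y^L}=1-h(y),\qquad h(y):=\frac{y^M-y^{L-M}}{1-y^L}.
\]
This is non-decreasing in $t$ iff $h$ is non-decreasing in $y$ on $(0,1)$, and a short computation shows $h'(y)\ge0$ is equivalent to $M\bigl(1-y^{2(L-M)}\bigr)\ge(L-M)\,y^{L-2M}\bigl(1-y^{2M}\bigr)$. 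Writing $p=M$, $q=L-M$ (so $p\le q$) and $\alpha=-\ln y>0$, and using $1-y^{2k}=2y^{k}\sinh(k\alpha)$, this is exactly $\sinh(q\alpha)/q\ge\sinh(p\alpha)/p$, which holds because $u\mapsto\sinh(u)/u$ is increasing on $(0,\infty)$ (e.g.\ $\sinh$ is convex with $\sinh 0=0$). Hence $g_T/g_\Z$ is non-decreasing, so $\mu_T$ dominates $\mu_\Z$, so $\rho_T\ge\rho_\Z$, so $H_L(\cdot;M)$ is non-increasing for $1\le M\le\lfloor L/2\rfloor$, and the reflection identity handles the remaining range. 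The substance is concentrated in the last two paragraphs: the Laplace-transform identity that recasts $\rho_T\ge\rho_\Z$ as a monotone-likelihood-ratio statement, and the observation that the needed monotonicity is just the convexity of $\sinh$. The step I would expect to be most delicate is the passage to the Laplace picture — choosing the fold and the integral representations so that the test function $\psi$ comes out independent of $A$ \emph{and} monotone, which is what cleanly separates the universal increasing observable from the $A$-dependent likelihood ratio.
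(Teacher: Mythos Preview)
Your proof is correct and takes a genuinely different route from the paper.

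The paper proceeds in two pieces: first (Lemma~\ref{lemma:P3}, via the unitarity trick of Lemma~\ref{lemma:P3-a}) it evaluates $H_L(x;\lfloor L/2\rfloor)$ explicitly as $L^2$ (or $L^2$ minus an increasing correction when $L$ is odd); second (Lemma~\ref{lemma:P4}) it shows each individual summand $f(x;z)$ with $1\le z\le L-2$ is increasing on $[0,\tfrac12]$, by expanding $1/\sin^2$ via Mittag--Leffler and arguing term by term through a contradiction using $\pi\cot(\pi x)/(1-2x)\ge\pi^2/2$. The theorem follows by writing $H_L(x;M)$ as $H_L(x;\lfloor L/2\rfloor)$ minus a sum of increasing summands (with separate bookkeeping for $L$ even and $L$ odd, and for the non-decreasing regime). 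Your argument instead collapses both regimes via the clean reflection identity $H_L(x;M)+H_L(x;L-M)=2L^2$, then treats $H_L=L^2N/D$ globally rather than summand by summand: the Laplace representation turns $\rho_T\ge\rho_\Z$ into a Chebyshev/FKG-type correlation inequality, and the crux becomes the single fact that $\sinh(u)/u$ is increasing. Your approach is more conceptual and case-free (no even/odd split, no per-summand analysis), at the cost of invoking the monotone-likelihood-ratio machinery; the paper's approach is more elementary in its tools and, as a byproduct, yields the finer information that each $f(x;z)$ is itself increasing --- something your global argument does not see but which the paper does not actually need elsewhere.
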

\begin{proof} (Sketch,
  see Appendix~\ref{section:app-proof-main1} for the proof.)
  The first step is to show that
  \math{H_L(x;\floor{L/2})} is a constant or decreasing, which we do in
  Lemma~\ref{lemma:P3} using
  Lemma~\ref{lemma:P3-a} as the main tool.
  The second step is to prove the summand in \r{eq:P-defH-prelim}
  is increasing in \math{x}
  for \math{z\in[M,\floor{L/2}]}, which we do in Lemma~\ref{lemma:P4}.
  The theorem follows because subtracting these summands from
  \math{H_L(x;\floor{L/2})} to get
  \math{H_L(x;M)} maintains monotonicity.  
\end{proof}

It is an immediate corollary of Theorem~\ref{theorem:P1} that for
\math{M\le \floor{L/2}}, \math{H_L(x;M)} is minimized at \math{x=1/2} and for
\math{\floor{L/2}<M\le L}, \math{H_L(x;M)} is minimized at \math{x=0}.
We now consider a generalization of
\math{H_L(x,M)}. For a small perturbation \math{\epsilon}, define
\mld{
  H_L(x;M,\epsilon)=
  \sum_{z=0}^{M-1}
  \frac{\sin^2(\pi(z+ x)(1+\epsilon))}{\sin^2(\pi(z+x)(1+\epsilon)/L)}
  +
  \frac{\sin^2(\pi(z+1-x)(1+\epsilon))}{\sin^2(\pi(z+1-x)(1+\epsilon)/L)}.
  \label{eq:P-defH-e}
}
When \math{\epsilon=0}, \math{H_L(x;M,0)=H_L(x,M)},
because \math{z} is an integer. The more general function
\math{H_L(x;M,\epsilon)} is important because the measurement probabilities
in quantum period estimation are not immediately expressable using functions
like 
\math{H_L(x,M)} because \math{rL/2^n} is not \math{1}, but rather a small
perturbation away from 1. That is, we can write
\math{rL/2^n=1+\epsilon}, where \math{|\epsilon|<r/2^n}.
When the context is clear, we will
drop the dependence on \math{L,M,\epsilon} and simply write
\math{H(x)}.
\remove{
  \begin{restatable}{lemma}{techperturbation} \label{lemma:perturbation}
  For a positive integer \math{c} and \math{|\epsilon|\le 1/2},
  \mld{\displaystyle
    \absof{\frac{\sin^2(\pi u(1 + \epsilon))}{\sin^2(\pi u(1+\epsilon)/c)} - \frac{\sin^2(\pi u )}{\sin^2(\pi u/c)}}
    \le
    \begin{cases}
      \displaystyle
      \frac{2\pi|\epsilon|u(c^2-1)}{3},&0\le u\le 1;\\[20pt]
      \displaystyle
      \frac{\pi c^2|\epsilon|}{4}\left(\frac{1}{u(1-|\epsilon|)^2}+\frac{1}{u^2(1-|\epsilon|)^3}\right),&\displaystyle
      1\le u\le \frac{c}{2(1+|\epsilon|)}.
      \end{cases}
  }
\end{restatable}
}
The first tool we need is a perturbation
result bounding the difference between
\math{H_L(x;M)} and \math{H_L(x;M,\epsilon)}.
\begin{restatable}{lemma}{techHHbound} \label{lemma:techHHbound}
  For \math{1\le M\le L/2(1+|\epsilon|)},
  \mld{H_L(x;M,\epsilon)=H_L(x;M)+\cl A_{L}(x,M),}
  where
  \mld{
    |\cl A_{L}(x,M)|
    \le  
    2\pi L^2|\epsilon|
    \left(\frac{1}{3}
    +
    \frac{\log_2M+\pi^2/6}{4(1-|\epsilon|)^3}\right)
    \label{eq:upper-A}
  }
\end{restatable}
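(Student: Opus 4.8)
The plan is to reduce the bound in \r{eq:upper-A} to a per-term estimate and then sum. Writing $g(u)=\sin^2(\pi u)/\sin^2(\pi u/L)$ (extended by continuity at integer multiples of $L$), subtracting \r{eq:P-defH-prelim} from \r{eq:P-defH-e} gives
\mld{
\cl A_{L}(x,M)=\sum_{z=0}^{M-1}\Big(g((z+x)(1+\epsilon))-g(z+x)\Big)+\Big(g((z+1-x)(1+\epsilon))-g(z+1-x)\Big),
}
so it suffices to bound $|g(u(1+\epsilon))-g(u)|$ for each of the $2M$ arguments $u\in\{z+x,\,z+1-x:0\le z\le M-1\}$ and add the results. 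Every such $u$ satisfies $0\le u\le M$, hence $u(1+|\epsilon|)\le M(1+|\epsilon|)\le L/2$ by the hypothesis $M\le L/2(1+|\epsilon|)$; this keeps all perturbed and unperturbed arguments inside $[0,L/2]$, where $\sin(\pi\cdot/L)\ge 0$ and Jordan's inequality $\sin\theta\ge 2\theta/\pi$ on $[0,\pi/2]$ is available.

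Next I would establish a per-term perturbation bound in two regimes. For small arguments $0\le u\le 1$, I use the Fej\'er-kernel identity $g(t)=|\sum_{k=0}^{L-1}e^{2\pi ikt/L}|^2=L+2\sum_{d=1}^{L-1}(L-d)\cos(2\pi dt/L)$; differentiating gives $g'(t)=-\frac{4\pi}{L}\sum_{d=1}^{L-1}d(L-d)\sin(2\pi dt/L)$, and $\sum_{d=1}^{L-1}d(L-d)=\frac16 L(L-1)(L+1)$ yields the uniform bound $|g'(t)|\le \frac{2\pi(L^2-1)}{3}$. The mean value theorem over the interval between $u$ and $u(1+\epsilon)$, of length $|\epsilon|u$, then gives $|g(u(1+\epsilon))-g(u)|\le \frac{2\pi|\epsilon|u(L^2-1)}{3}$. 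For large arguments $1\le u\le L/2(1+|\epsilon|)$ this estimate grows linearly in $u$ and, once summed, would give an unusable $O(L^2M^2|\epsilon|)$ bound, so instead I factor $g=s\,h$ with $s(t)=\sin^2(\pi t)$ and $h(t)=1/\sin^2(\pi t/L)$ and split
\mld{
g(u(1+\epsilon))-g(u)=\Big(s(u(1+\epsilon))-s(u)\Big)h(u(1+\epsilon))+s(u)\Big(h(u(1+\epsilon))-h(u)\Big).
}
Using $|s(a)-s(b)|=|\sin(a+b)\sin(a-b)|\le|a-b|$, $0\le s\le 1$, the consequences $h(t)\le L^2/(4t^2)$ and $|h'(t)|\le \pi L^2/(4t^3)$ of Jordan's inequality on $[0,L/2]$, and the fact that every point between $u$ and $u(1+\epsilon)$ is at least $u(1-|\epsilon|)$, the two pieces are at most $\frac{\pi|\epsilon|L^2}{4u(1-|\epsilon|)^2}$ and $\frac{\pi|\epsilon|L^2}{4u^2(1-|\epsilon|)^3}$, so $|g(u(1+\epsilon))-g(u)|\le \frac{\pi L^2|\epsilon|}{4}\big(\frac{1}{u(1-|\epsilon|)^2}+\frac{1}{u^2(1-|\epsilon|)^3}\big)$.

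The remaining step is the summation. Of the $2M$ arguments, only $u=x$ and $u=1-x$ lie in $[0,1]$; all others are $\ge 1$. The two small ones contribute at most $\frac{2\pi|\epsilon|(L^2-1)}{3}(x+(1-x))\le 2\pi L^2|\epsilon|\cdot\frac13$, which is the first summand of \r{eq:upper-A}. The others are $z+x$ and $z+1-x$ with $1\le z\le M-1$, each at least $z$, so $\sum 1/u\le 2\sum_{z=1}^{M-1}1/z\le 2\log_2 M$ (using the elementary comparison $\sum_{z=1}^{M-1}1/z\le \log_2 M$, valid for integers $M\ge 1$) and $\sum 1/u^2\le 2\sum_{z\ge1}1/z^2=\pi^2/3$; pulling out $(1-|\epsilon|)^{-3}\ge(1-|\epsilon|)^{-2}$ then bounds their total by $\frac{\pi L^2|\epsilon|}{4(1-|\epsilon|)^3}\cdot 2(\log_2 M+\frac{\pi^2}{6})=2\pi L^2|\epsilon|\cdot\frac{\log_2 M+\pi^2/6}{4(1-|\epsilon|)^3}$, the second summand of \r{eq:upper-A}. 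Adding the two contributions finishes the proof, with $\cl A_{L}(x,M)$ equal to the signed sum in the first display.

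The hard part is the large-argument per-term estimate: any crude global bound on $g'$ scales like $L^2u$ and destroys the summation, whereas the product-rule split---together with the Jordan-inequality control of $h$ and $h'$ on $[u(1-|\epsilon|),\,L/2]$---recovers the $1/u$ and $1/u^2$ decay that makes the resulting sum only logarithmic in $M$. Everything else is bookkeeping, but two things need care: one must check the containment $u(1+|\epsilon|)\le L/2$ before using the sine lower bounds (this is exactly what the hypothesis $M\le L/2(1+|\epsilon|)$ provides), and one should note that the degenerate term $u=x=0$ contributes zero, consistent with the bound.
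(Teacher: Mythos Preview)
Your proof is correct and follows essentially the same strategy as the paper: the same Fej\'er-kernel derivative bound for $u\in[0,1]$, an equivalent product-rule estimate for $u\ge 1$ (the paper bounds $|g'|=|s'h+sh'|$ directly and then applies the mean value theorem, arriving at the identical per-term bound in Lemma~\ref{lemma:perturbation}), and the same harmonic/Basel summation. One small slip: since $s(t)=\sin^2(\pi t)$, the identity reads $|s(a)-s(b)|=|\sin(\pi(a+b))\sin(\pi(a-b))|\le\pi|a-b|$, not $|a-b|$; your stated bound $\frac{\pi|\epsilon|L^2}{4u(1-|\epsilon|)^2}$ for that piece already carries this $\pi$, so the conclusion is unaffected.
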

\begin{proof}(Sketch,
  see Appendix~\ref{section:app-techHHbound} for the proof.)
  Since
  \math{|H_L(x;M,\epsilon)-H_L(x;M)|\le |\epsilon|\sup_\epsilon|d/d\epsilon(H_L(x;M,\epsilon))|},
  the main tool we need is a bound on the derivative of the ratios of
  sinusoids in \r{eq:P-defH-prelim}. We prove separate bounds on
  \math{|d/du(sin^2(\pi u)/sin^2(\pi u/L)|} for small \math{u} and large
  \math{u} in
  Lemmas~\ref{lemma:deriv-bound} and~\ref{lemma:deriv-bound-2}.
  These bounds lead to the key  perturbation result for
  ratios of sinusoids in Lemma~\ref{lemma:perturbation} from which
  the lemma follows.
\end{proof}

To get tight bounds on \math{H_L(x;M,\epsilon)}, it suffices to
have tight bounds on \math{H_L(x,M)} and apply Lemma~\ref{lemma:techHHbound}.
Thus, it is also useful to have tight upper and lower bounds on
\math{H_L(x;M)}, especially when \math{M\ll L}.
Such bounds are the content of our final tool, which is a key lemma for
all our results.
\begin{lemma}\label{lemma:H-bounds} For \math{M<L/2} and \math{x\in(0,1)},
  \eqar{
    H_L(x;M)&\ge&
    L^2\left(1 - \min\left\{\frac{4}{\pi^2(2M - 1)},\frac{\sin^2(\pi x)(2M-1)}{\pi^2 M(M-1)}\right\} \right)
    \\[10pt]
    H_L(x;M)
    &\le&
    L^2\left(1 -  \frac{4\sin^2(\pi x)}{\pi^2(2M +1)}
    \right)
    +
    \frac{4L\sin^2(\pi x)}{\pi^2}.
  }
  When \math{x=0} or \math{x=1}, both bounds hold with equality and
  \math{H_L(x;M)=L^2}.
\end{lemma}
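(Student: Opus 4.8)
The plan is to first collapse the statement to a single auxiliary ``tail sum'', then bound that tail from above for the lower bound on $H_L$ and from below for the upper bound, each time with an elementary estimate for $\sin$.

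\textbf{Reduction.} First I would reindex the second family of ratios in \r{eq:P-defH-prelim} by $z\mapsto-(z+1)$ and use $\sin^2(\pi(z+1-x)/L)=\sin^2(\pi((-(z+1))+x)/L)$ (evenness of $\sin^2$), so that the two sums combine into
\[
H_L(x;M)=\sin^2(\pi x)\sum_{z=-M}^{M-1}\frac{1}{\sin^2(\pi(z+x)/L)} .
\]
Since $z\mapsto\sin^2(\pi(z+x)/L)$ has period $L$, the classical identity $\sum_{z=0}^{L-1}\sin^{-2}(\pi(z+x)/L)=L^2/\sin^2(\pi x)$ holds over any block of $L$ consecutive integers, in particular over $\{-M,\dots,L-M-1\}$; subtracting the part with $z\in\{M,\dots,L-M-1\}$ (a genuine subset, as $M<L/2$) gives
\[
H_L(x;M)=L^2-\sin^2(\pi x)\,R_L(x;M),\qquad R_L(x;M):=\sum_{z=M}^{L-M-1}\frac{1}{\sin^2(\pi(z+x)/L)} .
\]
For $x\in\{0,1\}$ the prefactor $\sin^2(\pi x)$ vanishes (the lone $0/0$ ratio in \r{eq:P-defH-prelim} being read as its limit $L^2$), so $H_L=L^2$ and both claimed bounds reduce to $L^2$; so from now fix $x\in(0,1)$. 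Using $\sum_{z\in\Z}(z+x)^{-2}=\pi^2/\sin^2(\pi x)$, I write $\sum_{z=-M}^{M-1}(z+x)^{-2}=\pi^2/\sin^2(\pi x)-T(x)$ with $T(x):=\sum_{z\ge M}(z+x)^{-2}+\sum_{z\ge M+1}(z-x)^{-2}$.

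\textbf{Lower bound on $H_L$.} From $|\sin u|\le|u|$ one gets $\sin^{-2}(\pi(z+x)/L)\ge L^2/(\pi^2(z+x)^2)$ for every $z$, hence
\[
H_L(x;M)\ \ge\ \frac{L^2\sin^2(\pi x)}{\pi^2}\sum_{z=-M}^{M-1}\frac{1}{(z+x)^2}\ =\ L^2-\frac{L^2\sin^2(\pi x)}{\pi^2}\,T(x).
\]
A midpoint-rule estimate, using convexity of $t\mapsto(t+x)^{-2}$, gives $\sum_{z\ge M}(z+x)^{-2}\le\int_{M-1/2}^{\infty}(t+x)^{-2}dt=\tfrac1{M-1/2+x}$ and similarly for the other sum, whence $T(x)\le\tfrac1{M-1/2+x}+\tfrac1{M+1/2-x}=\tfrac{2M}{M^2-(x-1/2)^2}\le\tfrac{2M}{M^2-1/4}\le\tfrac{4}{2M-1}$ (the last step being $M\ge\tfrac12$). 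Therefore $\tfrac{L^2\sin^2(\pi x)}{\pi^2}T(x)\le\tfrac{4L^2\sin^2(\pi x)}{\pi^2(2M-1)}$, which is at most $\tfrac{4L^2}{\pi^2(2M-1)}$ (drop $\sin^2(\pi x)\le1$) and also at most $\tfrac{L^2\sin^2(\pi x)(2M-1)}{\pi^2M(M-1)}$ (since $\tfrac4{2M-1}\le\tfrac{2M-1}{M(M-1)}$ when $M\ge2$; for $M=1$ only the first branch is meaningful). Subtracting gives $H_L\ge L^2\big(1-\min\{\cdot,\cdot\}\big)$ as claimed.

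\textbf{Upper bound on $H_L$.} Equivalently I need a lower bound on $R_L(x;M)$. For $z\in\{M,\dots,L-M-1\}$ we have $\pi(z+x)/L\in(0,\pi)$, and $\sin u\le u$ together with $\sin u=\sin(\pi-u)\le\pi-u$ give $\sin^{-2}(\pi(z+x)/L)\ge L^2/(\pi^2\min(z+x,L-z-x)^2)$. Assume $L\ge2M+2$ (the case $L=2M+1$ is trivial: then $R_L$ is one positive term and the target's right side is $0$). Split $\{M,\dots,L-M-1\}$ according to whether $z+x\le L/2$ or not, reindex the second part by $z\mapsto L-z$; each part becomes a truncated tail, bounded below by an integral as $\sum_{z=M}^{K}(z+x)^{-2}\ge\int_M^{K+1}(t+x)^{-2}dt=\tfrac1{M+x}-\tfrac1{K+1+x}\ge\tfrac1{M+x}-\tfrac2L$ (since the upper limit satisfies $K+1+x\ge L/2$), and analogously $\ge\tfrac1{M+1-x}-\tfrac2L$ for the other part. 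Hence $R_L(x;M)\ge\tfrac{L^2}{\pi^2}\big(\tfrac1{M+x}+\tfrac1{M+1-x}\big)-\tfrac{4L}{\pi^2}\ge\tfrac{4L^2}{\pi^2(2M+1)}-\tfrac{4L}{\pi^2}$, the last inequality because $\tfrac1{M+x}+\tfrac1{M+1-x}$ is convex and symmetric about $x=\tfrac12$, hence minimized there with value $\tfrac4{2M+1}$. Plugging into $H_L=L^2-\sin^2(\pi x)R_L$ yields exactly the stated upper bound.

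\textbf{Anticipated difficulty.} The only genuinely delicate point is getting the sharp constant $\pi^2$ in the lower bound: Jordan's inequality would give only $\pi^2/4$, so one must use $|\sin u|\le|u|$ directly on the ratios and push the ``missing mass'' into $T(x)$, and the naive integral comparison $\sum_{z\ge M}(z+x)^{-2}\le\tfrac1{M-1+x}$ is too lossy to reach $\tfrac4{2M-1}$ — it is the midpoint-rule refinement that makes both branches of the $\min$ come out with the claimed constants. Everything else (the folding identity, the $\csc^2$ identity, and the floor/ceiling bookkeeping in the split for the upper bound, including the short ranges $L-2M\in\{1,2\}$) is routine.
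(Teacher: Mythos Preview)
Your proof is correct, and for the lower bound it takes a genuinely different (and in one respect better) route than the paper.

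\textbf{Lower bound.} The paper obtains the two branches of the $\min$ by two separate arguments: the $x$-independent branch $\tfrac{4}{\pi^2(2M-1)}$ is derived by first invoking Theorem~\ref{theorem:P1} (the monotonicity of $H_L(x;M)$, a nontrivial result proved via Lemmas~\ref{lemma:P3} and~\ref{lemma:P4}) to reduce to $x=1/2$, and the $\sin^2(\pi x)$-weighted branch is derived essentially as you do but with the cruder tail bound $\sum_{z\ge M}(z+t)^{-2}\le\int_{M-1}^\infty$. Your single argument --- $\sin^2 u\le u^2$ termwise, Mittag--Leffler for the full sum, and the midpoint-rule refinement $\sum_{z\ge M}(z+t)^{-2}\le\int_{M-1/2}^\infty$ --- yields directly $H_L(x;M)\ge L^2\bigl(1-\tfrac{4\sin^2(\pi x)}{\pi^2(2M-1)}\bigr)$, which is at least as strong as \emph{both} branches of the stated $\min$ (the inequality $\tfrac{4}{2M-1}\le\tfrac{2M-1}{M(M-1)}$ being equivalent to $0\le 1$). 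So you recover the lemma without ever appealing to the monotonicity theorem, which is a genuine simplification; the price is that the paper's route makes the role of $x=1/2$ as the worst case explicit, which is conceptually useful elsewhere.

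\textbf{Upper bound.} Here the two arguments are closer. The paper uses Lemma~\ref{lemma:P3} to get $H_L(x;\lfloor L/2\rfloor)\le L^2$, then Jensen on the convex $\csc^2$ to symmetrize each pair of terms in the complementary sum to the midpoint $z+\tfrac12$, and finally $\sin u\le u$ plus a right-endpoint integral. You instead use the exact full-period identity $H_L=L^2-\sin^2(\pi x)\,R_L$ (this is the same $\csc^2$ sum, Lemma~\ref{lemma:P3-a} with $a=b=1$, $c=L$) and lower-bound $R_L$ by splitting at $L/2$ and using $\sin u\le\min(u,\pi-u)$ on each half. Both land on exactly the same constants; your version avoids Jensen but needs the split bookkeeping (which you handle correctly, including the short-range cases).
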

\begin{proof}(Sketch,
  see Appendix~\ref{section:lemma:H-bounds} for the proof.)
  The symmetry of \math{H_L(x;M)} around \math{1/2} is critical.
  Prior work anaylizes terms
  individually using Taylor-type expansions which produces
  loose bounds by failing to recognize and exploit the symmetry.
  For the lower bound, from
  Theorem~\ref{theorem:P1}, 
  \math{H_L(x;M)\ge H_L(1/2;M)}. After manipulating and
  bounding of sums with integrals, we get the first part.
  For the second part, we analyze the sum more carefully using 
  DiGamma functions. For the upper bound, we use
  \math{H_L(x;\floor{N/2})\approx N^2} to write
  \math{H_L(x;M)} as a complementary sum. We then use integration to
  bound the complementary sum.
\end{proof}
The second part of the lower bound is tighter when \math{x} is small;
the first part is tighter when \math{x\approx1/2}.
Lemmas~\ref{lemma:techHHbound} and~\ref{lemma:H-bounds}
allow us to approximate extremal values
of the perturbed $H_L(x;M,\epsilon)$ using extremal values of $H_L(x;M)$.
We'll need some technical results.
First is Mittag-Leffler's expansion of
\math{1/\sin^2(\pi x)}.
\begin{lemma}[Mittag-Leffler]\label{lemma:mittag-leffler} For \math{x\not\in\Z},
  \math{\displaystyle\frac{\pi^2}{\sin^2(\pi x)}=\sum_{n\in\Z}\frac{1}{(n+x)^2}}.
\end{lemma}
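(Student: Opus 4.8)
The plan is to prove the identity by the \emph{Herglotz trick}: identify the two sides as continuous, $1$-periodic functions satisfying a common duplication identity, and deduce that their difference is identically zero. Write $f(x)=\pi^2/\sin^2(\pi x)$ and $g(x)=\sum_{n\in\Z}1/(n+x)^2$. The series for $g$ converges absolutely and locally uniformly on $\R\sm\Z$, so $g$ is continuous there; $f$ is manifestly continuous on $\R\sm\Z$; and both are $1$-periodic.

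First I would record the two facts that make the trick work. \emph{(a) The singular parts agree.} Near an integer $k$, isolating the $n=-k$ term shows $g(x)=1/(x-k)^2+(\text{continuous near }k)$, while $\sin(\pi x)=\pi(x-k)\bigl(1+O((x-k)^2)\bigr)$ gives $f(x)=1/(x-k)^2+\tfrac{\pi^2}{3}+O((x-k)^2)$. Hence $h:=f-g$ extends to a continuous function on all of $\R$; being $1$-periodic it is bounded, so put $M=\sup_{\R}\absof h$. \emph{(b) A duplication identity.} Both $f$ and $g$ satisfy
\[
  \phi\!\Bigl(\tfrac{x}{2}\Bigr)+\phi\!\Bigl(\tfrac{x+1}{2}\Bigr)=4\,\phi(x).
\]
For $f$ this follows from $\dfrac{1}{\sin^2\theta}=\dfrac14\Bigl(\dfrac{1}{\sin^2(\theta/2)}+\dfrac{1}{\cos^2(\theta/2)}\Bigr)$ with $\theta=\pi x$, together with $\cos(\pi x/2)=\sin\bigl(\pi(x+1)/2\bigr)$; for $g$ it follows by splitting $\sum_{n\in\Z}$ into even and odd indices and rescaling ($1/(n+x/2)^2=4/(2n+x)^2$, etc.). Subtracting, $h$ obeys the same identity.

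The endgame is one line of maximum-principle reasoning: for every $x\in\R$,
\[
  4\,\absof{h(x)}=\absof{\,h\!\Bigl(\tfrac{x}{2}\Bigr)+h\!\Bigl(\tfrac{x+1}{2}\Bigr)\,}\le 2M,
\]
so $\absof{h(x)}\le M/2$, and taking the supremum gives $M\le M/2$, hence $M=0$. Therefore $f\equiv g$ on $\R$, which is precisely the asserted identity for $x\notin\Z$. (Equivalently, one could note that $h$ extends to a bounded entire function — it is $1$-periodic and tends to $0$ as $\absof{\operatorname{Im}x}\to\infty$, since both $f$ and the series do — hence constant by Liouville, and the constant is $0$.) There is no real obstacle here: the only care needed is the routine verification that the $1/(x-k)^2$ poles cancel so $h$ is genuinely continuous, and the elementary check of the duplication identity for each side.
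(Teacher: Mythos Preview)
Your proof is correct: the Herglotz trick you outline is the standard elementary argument for this identity, and every step (the matching of singular parts so $h=f-g$ extends continuously, the duplication identity for both sides, and the maximum-principle conclusion $M\le M/2$) is valid.

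The paper, however, does not prove this lemma at all---it simply records it as a classical fact (``Mittag-Leffler's expansion of $1/\sin^2(\pi x)$'') and moves on. So there is no approach to compare against; you have supplied a self-contained proof where the paper gives none. The benefit of your approach is that it is fully elementary (no complex analysis, no infinite-product machinery) and short; the paper's implicit stance is that the result is textbook and needs no derivation in context.
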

The next sum arises because \math{H_L} is a sum of two terms
which are reflections of each other around \math{1/2}.
\begin{lemma}[DiGamma]\label{lemma:digamma} For \math{x\not\in\Z},
  \math{\displaystyle\sum_{n=0}^\infty\frac{1}{(n+x)^2}+\frac{1}{(n+1-x)^2}=\frac{\pi^2}{\sin^2(\pi x)}}.
\end{lemma}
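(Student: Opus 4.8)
The plan is to derive this identity directly from Mittag-Leffler's expansion (Lemma~\ref{lemma:mittag-leffler}) by splitting the bilateral sum over $\Z$ into its non-negative and negative parts. Since $x \notin \Z$, none of the denominators vanish, and the series $\sum_{n \in \Z} 1/(n+x)^2$ converges absolutely (the terms are $O(n^{-2})$), so we are free to regroup and reindex the terms arbitrarily.

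First I would write
\mandc{
\frac{\pi^2}{\sin^2(\pi x)} = \sum_{n \in \Z} \frac{1}{(n+x)^2}
= \sum_{n=0}^{\infty} \frac{1}{(n+x)^2} + \sum_{n=-\infty}^{-1} \frac{1}{(n+x)^2}.
}
For the second sum, substitute $n = -m-1$, so that as $n$ ranges over $-1, -2, \ldots$ the index $m$ ranges over $0, 1, 2, \ldots$. Then $n + x = -m - 1 + x = -(m + 1 - x)$, hence $(n+x)^2 = (m+1-x)^2$, and the second sum becomes $\sum_{m=0}^{\infty} 1/(m+1-x)^2$. Relabeling $m$ as $n$ and combining the two sums term by term yields exactly
\mandc{
\frac{\pi^2}{\sin^2(\pi x)} = \sum_{n=0}^{\infty}\left(\frac{1}{(n+x)^2} + \frac{1}{(n+1-x)^2}\right),
}
which is the claimed identity.

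There is essentially no hard step here: the only thing to be careful about is justifying the rearrangement, which follows from absolute convergence, and tracking the index shift correctly (the ``$+1$'' in $n+1-x$ arises precisely because the negative integers start at $-1$, not $0$). If one prefers not to invoke Lemma~\ref{lemma:mittag-leffler} as a black box, an alternative is to note that $\sum_{n=0}^\infty 1/(n+x)^2 = \psi'(x)$ (the trigamma function) and that the identity is the reflection formula $\psi'(x) + \psi'(1-x) = \pi^2/\sin^2(\pi x)$; but the reindexing argument above is shorter and self-contained given what the excerpt already provides.
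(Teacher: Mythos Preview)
Your proof is correct and matches the paper's own approach: the paper's proof says to use the trigamma reflection formula $\psi'(x)+\psi'(1-x)=\pi^2/\sin^2(\pi x)$, and explicitly notes that one can alternatively derive it from Lemma~\ref{lemma:mittag-leffler} by splitting the sum over $\Z$ into $n\ge 0$ and $n<0$. You have simply written out the Mittag-Leffler split in detail (and mentioned the trigamma route as the alternative), so the two proofs are essentially identical with the order of emphasis reversed.
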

\begin{proof} Use the reflection formula for
  the
  derivative of the DiGamma
  function \math{\psi^{(1)}(x)}~\cite[6.4.7 and 6.4.10]{Abramowitz1974}.
  One could also use Lemma~\ref{lemma:mittag-leffler} after splitting
  the sum into two parts, \math{n\ge 0} and \math{n<0}. 
\end{proof}  
One final lemma on the sum of sinusoids.
\begin{lemma}\label{lemma:sum-sin}
  For any integer \math{r\ge 2}, 
  \math{\displaystyle\sum_{k=1}^{r-1}\sin^2(\pi k/r)=r/2}.
\end{lemma}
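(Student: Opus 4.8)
The plan is to reduce the sum to a geometric series over the $r$-th roots of unity. First I would apply the half-angle identity $\sin^2\theta=\tfrac12(1-\cos2\theta)$ with $\theta=\pi k/r$, which gives
\mld{
  \sum_{k=1}^{r-1}\sin^2(\pi k/r)=\frac{r-1}{2}-\frac12\sum_{k=1}^{r-1}\cos(2\pi k/r).
}
It then remains only to evaluate $\sum_{k=1}^{r-1}\cos(2\pi k/r)$, and I claim it equals $-1$.

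To see this I would pass to complex exponentials and write $\sum_{k=0}^{r-1}\cos(2\pi k/r)=\operatorname{Re}\sum_{k=0}^{r-1}e^{2\pi i k/r}$. The inner sum is a finite geometric series with common ratio $e^{2\pi i/r}$, and since $r\ge2$ we have $e^{2\pi i/r}\ne1$, so it equals $(e^{2\pi i}-1)/(e^{2\pi i/r}-1)=0$. Hence $\sum_{k=0}^{r-1}\cos(2\pi k/r)=0$, and subtracting off the $k=0$ term (which is $\cos0=1$) yields $\sum_{k=1}^{r-1}\cos(2\pi k/r)=-1$. Substituting back into the display gives $\tfrac{r-1}{2}+\tfrac12=\tfrac r2$, which is the claim.

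There is no genuine obstacle here; the only point requiring a moment's care is that the closed form of the geometric series needs $e^{2\pi i/r}\ne1$, which is exactly why the hypothesis $r\ge2$ appears. If one wished to avoid complex numbers, for even $r$ the substitution $k\mapsto \tfrac r2-k$ (mod $r$) shows $\sum_{k=0}^{r-1}\cos^2(\pi k/r)=\sum_{k=0}^{r-1}\sin^2(\pi k/r)$, and since these sum to $r$ each equals $r/2$ (the $k=0$ term contributes nothing to the sine sum); a similar pairing handles odd $r$. But the roots-of-unity computation treats all $r\ge2$ uniformly, so that is the route I would take.
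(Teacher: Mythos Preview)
Your proof is correct and follows essentially the same approach as the paper: both reduce $\sin^2(\pi k/r)$ via the half-angle/exponential identity and then use that the sum of the $r$-th roots of unity vanishes. The only cosmetic difference is that the paper extends the sum to include $k=0$ at the outset (since $\sin^2 0=0$) and works directly with $e^{\pm 2\pi i k/r}$, whereas you keep the range $1\le k\le r-1$ and subtract off the $k=0$ cosine term at the end.
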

\begin{proof}
  Using \math{\sin^2(\pi k/r)=1/2-(e^{(2\pi i) k/r}+e^{-(2\pi i)k/r})/4},
  \eqar{
    \sum_{k=1}^{r-1}\sin^2(\pi k/r)
    &=&
    \sum_{k=0}^{r-1}\sin^2(\pi k/r)
    =
    \frac{r}{2}
    -\frac{1}{4}\sum_{k=0}^{r-1}e^{(2\pi i) k/r}
    -\frac{1}{4}\sum_{k=0}^{r-1}e^{-(2\pi i) k/r}.
  }
  In the last step, each sum is over the \math{r}-th roots of unity and hence
  equals zero.
\end{proof}

\section{Quantum Phase Estimation} \label{sec:qpe}

To warm up on the tools developed in the previous
section, we consider the success probability of quantum phase
estimation.
Given a blackbox unitary operator $U$ and an eigenvector state $\ket{v}$
of $U$, estimate the phase $\varphi$ of the eigenvalue associated
with $\ket{v}$ where
\begin{equation}
  U \ket{v} = e^{i (2\pi \varphi)} \ket{v}.
\end{equation}
The standard quantum circuit for estimating $\varphi$ is given
in Figure \ref{fig:qpe-circ}.
\begin{figure}[!ht]
\hfill\hfill
  \begin{subfigure}[b]{0.45\textwidth}
    \centering
\[
\Qcircuit @C=1em @R=1em {
\lstick{\ket{0}}      & \gate{\mathrm H} & \ctrl{4}     & \qw        & \qw        & \qw        & \multigate{3}{\mathrm{F}^{\dagger}_t} & \meter{\hbox{\scriptsize ?}} \\
\lstick{\ket{0}}      & \gate{\mathrm H} & \qw          & \ctrl{3}   & \qw        & \qw        & \ghost{\mathrm{F}^{\dagger}_t}        & \meter{\hbox{\scriptsize ?}} \\
\lstick{\vdots}       & \vdots   &             &           & \ddots     &           &         & \vdots \\
\lstick{\ket{0}}      & \gate{\mathrm H} & \qw          & \qw        & \qw        & \ctrl{1}   & \ghost{\mathrm{F}^{\dagger}_t}        & \meter{\hbox{\scriptsize ?}} \\
\lstick{\ket{v}}   & \qw      & \gate{\mathrm{U}^{2^{t-1}}} & \gate{\mathrm{U}^{2^{t-2}}} & \push{\rule{0em}{1em}\hbox{\hspace{0.3em}$\cdots$\hspace{0.3em}}} \qw & \gate{\mathrm{U}^{2^{0}}} &    \qw                    & 
}
\]
\caption{Quantum phase estimation circuit.
  \label{fig:qpe-circ}}
  \end{subfigure}
  \hfill\hfill
\begin{subfigure}[b]{0.35\textwidth}
    \centering
    \includegraphics[width=\textwidth]{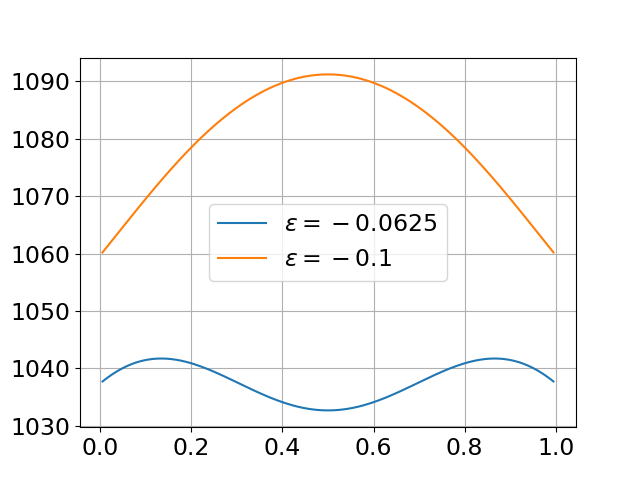}
    \caption{$H_L(x; M, \epsilon)$ for $L = 32$ and $M = 2$
      \label{fig:qpe-non-trivial}
      }
\end{subfigure}
\caption{\small In (a) we give the quantum circuit for
  phase estimation. The top \math{t} qubits
  estimate the \math{t} most significant bits in the phase \math{\varphi}.
  In (b) we illustrate non-trivial behvior of the function
  \math{H_L(x;M,\varepsilon)}. In all cases, \math{x=1/2} is a critical
  point. In one case it is not the unique critical point. In the other
  case it is a global maximum, not minimum.}
\end{figure}
There are \math{t} qubits in the upper
register's output which, when measured, produce \math{\hat\varphi},
an estimate of the \math{t}
most significant bits in \math{\varphi}. Let \math{\varphi_t} be
\math{\varphi} truncated to \math{t} bits and let
\math{x=2^t(\varphi-\varphi_t)\in[0,1)}. When \math{\varphi} has a
\math{t}-bit expansion, there is no truncation error, that is
\math{\varphi_t=\varphi} and \math{x=0}. In this case, the estimate
\math{\hat\varphi} from the circuit in Figure \ref{fig:qpe-circ}
deterministically yields~$\varphi$,
that is \math{\hat\varphi=\varphi} with
probability one.
When $\varphi$ does not have a $t$-bit expansion,
the quantum circuit only yields an approximation to the
first $t$ bits of $\varphi$ with a high probability. Using standard
methods~\cite{csci6964-lecture-notes}, for an integer
error tolerance
\math{B} where \math{0\le B <2^{t-1}},
\eqar{
  \mathbb{P} \left[| \hat\varphi - \varphi |
    \leq
    \frac{B + 1}{2^t} \right]
  &=&
  \frac{1}{2^{2t}} \sum_{z=0}^{B}
  \frac{\sin^2(\pi x)}{ \sin^2(\pi(z+x)/2^t)} +
  \frac{\sin^2(\pi x)}{ \sin^2(\pi(z+1-x)/2^t)}
  \\
  &=&
  \frac{1}{2^{2t}}H_{2^t}(x;B)
  \\
  &\ge&
  1-\frac{4}{\pi^2(2B-1)}.\label{eq:phase-1}
}
The last step uses Lemma~\ref{lemma:H-bounds}.
We have seemlessly reproduced the lower bound derived 
by Chappell in~\cite{Chappell2011-el}. Chappell's result is valid, but
there is a
gap in their proof.
Chappell proves that \math{H_{2^t}(x;B)} has a
critical point at \math{x=1/2}. In our approach, this is immediate because
\math{H(x)=H(1-x)}. Chappell then derives~\r{eq:phase-1} under the
assumption that  \math{x=1/2} is a global minimum. It is necessary
to prove that the critical point is unique to conclude it is a global minimum,
or to prove the stronger result, as we did, that
\math{H(x)} is non-increasing. This fact is not self-evident and it is
difficult to prove. For example, \math{H_L(x;M)} can be \emph{increasing}
for large enough \math{M}, in which case \math{H_L(1/2;M)} is a maximum.
Even for small \math{M},
problems can occur in more general settings. Consider
\math{H_L(x;M,\epsilon)}. Indeed,
\mld{
  H_L(x;M,\epsilon)=H_L(1-x;M,\epsilon),
}
from which it immediately follows that \math{x=1/2} is a critical point.
However, depending on \math{\epsilon},
this critical point may not be unique. It may also be a
global maximum. See Figure~\ref{fig:qpe-non-trivial} for examples. Using our
tools, we have given a rigorous proof of~\r{eq:phase-1},
plugging a gap in the proof
of Chappell~\cite{Chappell2011-el}.

\remove{\begin{figure}
     \centering
     \hfill
     \begin{subfigure}[b]{0.45\textwidth}
         \centering
         \includegraphics[width=\textwidth]{figs/qpe_non_trivial1.png}
         \caption{$H_N(x; B, \epsilon)$}
     \end{subfigure}
     \hfill
     \begin{subfigure}[b]{0.45\textwidth}
         \centering
         \includegraphics[width=\textwidth]{figs/qpe_non_trivial2.png}
         \caption{$\tilde{H}_N(x; B)$}
     
     \end{subfigure}
     \hfill 
    \caption{Plots of $\tilde{H}_N(x; B)$ and $H_N(x; B, \epsilon)$ for $N = 32$ and $B = 1$}
    \label{fig:qpe-non-trivial}
\end{figure}
\begin{figure}
     \centering
     \includegraphics[width=0.5\textwidth]{figs/qpe_non_trivial1.png}
     \caption{$H_N(x; B, \epsilon)$, illustrating various types
     of behavior. }
    \label{fig:qpe-non-trivial}
\end{figure}

\subsection{Tight Success Probability of QPE}

\textbf{TODO: Maybe discard this section, \cite{Chappell2011-el} already worked this out, and our main point is that our arugment are simpler and actually prove that 1/2 is the minimum.}

By plugging in $x = \frac{1}{2}$, we obtain the lowerbound
\begin{align}
    \mathbb{P}\left[ |\hat{\varphi} - \varphi| \leq \frac{z_{\max} + 1}{N} \right] \geq \frac{4}{N^2} \sum_{z = 0}^{z_{\max}} \frac{1}{1 - \cos(\pi (2z + 1) / N)}.
\end{align}
Take $N \to  \infty$, 
\begin{equation}
    \mathbb{P}\left[ |\hat{\varphi} - \varphi| \leq \frac{\ell_{\max} + 1}{2^t} \right] \geq \frac{8}{\pi^2} \sum_{\ell =0}^{\ell_{\max}} \frac{1}{(2\ell + 1)^2}.
\end{equation}
Then, since $\sum_{\ell =0}^{\infty} \frac{1}{(2\ell + 1)^2} = \frac{\pi^2}{8}$,
\begin{align}
    \mathbb{P}\left[ |\hat{\varphi} - \varphi| \leq \frac{\ell_{\max} + 1}{2^t} \right] &\geq 1 - \frac{8}{\pi^2} \sum_{\ell =\ell_{\max} + 1}^{ \infty} \frac{1}{(2\ell + 1)^2}\\
    &\geq 1 - \frac{8}{\pi^2} \int_{\ell_{\max} + 1}^\infty \frac{1}{(2\ell + 1)^2}\\
    &= 1 - \frac{4}{\pi^2} \cdot \frac{1}{2\ell_{\max} + 1}.
\end{align}
}

\section{Quantum Period Finding} \label{sec:qpf}

Our main focus is the success probability of quantum period
finding. Given a periodic function with period \math{r}
and its corresponding quantum circuit $\mathrm{U}_f$,
running the circuit in Figure~\ref{fig:qpf-circuit}
and post-processing the measured state
\math{\hat\ell}
produces a divisor of the period with high probability.
\begin{figure}[!ht]
    \centering
    \begingroup
\centering
\[
\Qcircuit @C=1em @R=1em {
    \lstick{|0\rangle_n}
    & \gate{\mathrm{H}_n}
    & \qw
    & \multigate{1}{\mathrm{U}_f} 
    & \qw
    & \qw
    & \qw
    & \gate{\mathrm{F}_n}
    & \qw
    & \qw
    & \meter{\hbox{\scriptsize ?}}
    & \qw
    \\
    \lstick{|0\rangle_m}
    & \qw
    & \qw
    & \ghost{\mathrm{U}_g}
    & \qw
    & \meter{\hbox{\scriptsize ?}}
    & \qw
    & \qw
    & \qw
    & \qw
    & \qw
    & \qw
}
\]
\endgroup
\caption{\small Quantum circuit for period finding.
  \math{\rm H_n} is the Hadamaard and 
  \math{\rm F_n} the  Discrete Fourier Transform.}
    \label{fig:qpf-circuit}
\end{figure}
One may assume that $r_0 \leq r \leq 2^m - 1$, where $r_0$
is the maximum order that can be handled classically.
Generally, we pick $n \geq 2m$.
For a given \math{0\le x_0<r}, define $L$ as the largest integer such that
\math{x_0+(L-1)r<2^n}, in which case \math{rL\approx 2^n}. More precisely,
\math{Lr/2^n=(1+\epsilon)}, where \math{|\epsilon|\le r/2^n}.
To make the
algebra more pleasant,
we will make some simplifying assumptions.
We assume \math{m\ge 4} which implies
\math{|\epsilon|\le 1/32}. Also, we assume that
\math{M\le L/2(1+|\epsilon|)}.
If $Lr = 2^n$,
the circuit deterministically yields an
integer multiple of $L = 2^n/r$, so \math{2^n/\hat\ell} is a divisor of
\math{r}.
Otherwise, the probability of measuring a particular
state $\ket{\ell}$  is given by \r{eqn:Pl}, see for example~\cite{Bourdon2007}.
The algorithm succeeds if the measured \math{\hat\ell} is within
\math{M} of
a positive integer multiple of \math{2^n/r}, that is, whenever
\mld{
  \min_{k\in\{1,\ldots,r-1\}}\absof{\hat\ell-k\frac{2^n}{r}}\le M.
}
Let us define \math{y_k = k{2^n}/{r}}, \math{u_k=\floor{y_k}} and
\math{\delta_k=y_k-u_k\in[0,1)}, so \math{y_k=u_k+\delta_k}.
There are two cases, \math{\delta_k=0} and \math{0<\delta_k<1}.
The requirement \math{|\hat\ell-k2^n/r|\le M} translates in these two
cases to
\mld{
  \arraycolsep4pt
  \begin{array}{rcccl@{\hspace*{30pt}}l}
    -M+u_k&\le&\hat\ell&\le& M+u_k&{\delta_k=0}\\[4pt]
    -(M-1)+u_k&\le&\hat\ell&\le& M+u_k&{0<\delta_k<1}.
  \end{array}
  \label{eq:valid-ell}
}
Note that when \math{\delta_k=0}, there is one additional
\math{\hat\ell=u_k-M} which contributes to success. 
Using \r{eqn:Pl} with \math{\ell=z+u_k} for an integer \math{z}
and recalling
\math{u_k=k2^{n}/r-\delta_k} gives
\begin{equation}
  \mathbb{P}[z+u_k] = \frac{1}{2^n L}
  \frac{\sin^2(\pi (z-\delta_k) rL / 2^n)}{\sin^2(\pi (z-\delta_k) r / 2^n)},  \label{eqn:Pl-z} 
\end{equation}
Define \math{rL/2^n=1+\epsilon} and
let \math{\tau} be the number of \math{\delta_k} which are zero.
The probability
of success is a sum of the probabilities in~\r{eqn:Pl-z}
over \math{k=1,\ldots,r-1}
for \math{\ell} satisfying
\r{eq:valid-ell},
\mld{
  \Psucc(M)
  =
  \frac{1}{2^n L}
  \sum_{k=1}^{r-1}H_L(\delta_k;M,\epsilon)
  +
  \underbrace{\frac{\tau}{2^n L}
  \cdot
  \frac{\sin^2(\pi M\epsilon)}{\sin^2(\pi M(1+\epsilon)/L)}}_{\phi},
  \label{eq:Psuccess-1}
}
where \math{H_L(\delta_k;M,\epsilon)} is defined in \r{eq:P-defH-e} and
\math{\phi} is as defined above.
Note that \math{|\epsilon|<r/2^n}.
We will show later that when \math{r} is odd, \math{\tau=0} and
\r{eq:Psuccess-1} simplifies.
Generally, for \math{r=p2^s} with \math{p} odd and \math{s\ge 0},
\math{\tau=2^s-1}. In this  case, \math{\phi} is tiny. Indeed,
since \math{\tau<r/3}, \math{\sin^2 u\le u^2} and
\math{\sin^2 u\ge 4u^2/\pi^2}
for \math{0\le u\le \pi/2}, one finds
\mld{
  0
  \le
  \phi
  \le
  \frac{\pi^2r L\epsilon^2}{12\cdot 2^n}
  \le
  (r/2^n)^2
  \le 2^{-2(m+q+1)}.
  \label{eq:phi-bound}
}
In the last step, we used \math{rL/2^n=1+\epsilon\le 33/32}.
Using Lemma~\ref{lemma:techHHbound} in \r{eq:Psuccess-1},
\mld{
  \Psucc(M)
  =
  \frac{1}{2^n L}
  \sum_{k=1}^{r-1}H_L(\delta_k;M)
  +
  \frac{1}{2^n L}
  \sum_{k=1}^{r-1}\cl A_L(\delta_k;M)
  +
  \phi,
  \label{eq:Psuccess-2}
}
where, using Lemma~\ref{lemma:techHHbound},
\eqar{
  \absof{
    \frac{1}{2^n L}
    \sum_{k=1}^{r-1}\cl A_L(\delta_k;M)
  }
  &\le&
  \frac{2\pi rL|\epsilon|}{2^n}
  \left(\frac{1}{3}
  +
  \frac{\log_2M+\pi^2/6}{4(1-|\epsilon|)^3}\right)
  \\
  &\le&
  \frac{r(2\log_2M+6)}{2^n}.
  \label{eq:period-lower}
}
In the last step, we use
\math{rL/2^n=1+\epsilon} and \math{|\epsilon|<r/2^n},
and we simplify constants by assuming
\math{m\ge 4} which implies \math{|\epsilon|\le 1/32}.
Since \math{r<2^m} and \math{n=2m+q+1}, the quantity in \r{eq:period-lower}
is in
\math{O(2^{-(m+q+1)}\log_2M)}. We get bounds on the success probability
from
\r{eq:Psuccess-2} by applying bounds on
\math{H_L(\delta_k;M)}. Using the first part of the lower bound in
Lemma~\ref{lemma:H-bounds}, we immediately get 
\mld{
  \Psucc(M)
  \ge
  \frac{rL}{2^n}\left(1-\frac1r\right)
  \left(1-\frac{4}{\pi^2(2M-1)}\right)
  +
  \frac{1}{2^n L}
  \sum_{k=1}^{r-1}\cl A_L(\delta_k;M)
  +
  \phi,
  \label{eq:Psuccess-3}
}
This is a decent lower bound that is slightly better than the result
in \cite{ekera2024}. 

\subsection{Exact Analysis of Success Probability}

To get tight bounds, we need to
delve into the properties of the \math{\delta_k}. The
goal of this section is to deduce the distribution of the
\math{\delta_k}. Recall that 
\mld{
\delta_k=\frac{k 2^n}{r}-u_k
\qquad
\text{for \math{k=1,\ldots,r-1}},
}
where \math{u_k=\floor{{k 2^n}/{r}}}.
Let \math{r=p 2^s}, where \math{p} is odd and \math{s<m<n}.
The lower bound in \r{eq:Psuccess-3} is obtained by
setting all \math{\delta_k=1/2}.
But,
\math{\delta_k=1/2} requires
\mld{
k 2^{n+1-s}=(2u_k+1)p.
}
The LHS is even since \math{n>s} and the RHS is odd, a contradiction.
Thus,
\math{\delta_k\not=1/2} in all settings, suggesting 
the lower bound in \r{eq:Psuccess-3} is loose.
We will show that the distribution of the
\math{\delta_k}'s is nearly uniform,
\mld{\renewcommand{\arraystretch}{1.5}\arraycolsep8pt
  \begin{array}{r|c|c|c|c|c|c|c}
    \delta_k&0/p&1/p&2/p&3/p&\cdots&(p-2)/p&(p-1)/p\\\hline
    \text{frequency}&2^s-1&2^s&2^s&2^s&\cdots&2^s&2^s
  \end{array}
  \label{eq:delta-dist}
}
To prove this, the next lemma is the key tool.
\begin{lemma}\label{lemma:unique-beta}
  Let \math{r=p 2^s}, where \math{p} is odd.
  The residual
  \math{\delta_k=k2^n/r-\floor{k2^n/r}} satisfies:
  \begin{enumerate}[label={(\roman*)}]
  \item \math{\delta_k=j_k/p} for \math{j_k\in\{0,\ldots,p-1\}}.
  \item Let \math{k_1=\alpha_1 p+\beta_1} and  \math{k_2=\alpha_2 p+\beta_2}.
    Then, \math{\delta_{k_1}=\delta_{k_2}} if and only if
    \math{\beta_1=\beta_2}.
  \end{enumerate}  
\end{lemma}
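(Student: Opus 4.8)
The plan is to push everything through the odd part $p$ of $r=p2^s$. The key observation is that $n>s$ (which follows from the standing assumption $s<m<n$), so in the quotient $k2^n/r = k2^n/(p2^s)$ the $2^s$ in the denominator is fully absorbed, giving $\frac{k2^n}{r}=\frac{k2^{n-s}}{p}$. Hence $\delta_k$ is precisely the fractional part $\{k2^{n-s}/p\}$, and we are genuinely working modulo an odd number. Since $p$ is odd, $\gcd(2^{n-s},p)=1$, so $a:=2^{n-s}\bmod p$ is a unit in $\Z/p\Z$ and multiplication by $a$ is a bijection of $\Z/p\Z$; this single fact drives both parts of the lemma.

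For part (i), I would write $k2^{n-s}=p\,\floor{k2^{n-s}/p}+(ka\bmod p)$, so the fractional part is exactly $\delta_k=(ka\bmod p)/p$ and $j_k:=ka\bmod p$ lies in $\{0,\ldots,p-1\}$, as claimed. For part (ii), $\delta_{k_1}=\delta_{k_2}$ iff $j_{k_1}=j_{k_2}$ iff $k_1a\equiv k_2a\pmod p$; because $a$ is invertible modulo $p$, this is equivalent to $k_1\equiv k_2\pmod p$, and since $\beta_i\in\{0,\ldots,p-1\}$ is by definition the residue of $k_i$ modulo $p$, this holds iff $\beta_1=\beta_2$.

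There is essentially no obstacle beyond bookkeeping; the one point that must be stated carefully is the inequality $n>s$, without which the numerator's power of two would not cancel the $2^s$ and the reduction to an odd modulus would fail. As a closing remark I would note that the lemma immediately gives the near-uniform distribution in~\r{eq:delta-dist}: among $k\in\{0,1,\ldots,r-1\}$ each residue class modulo $p$ occurs exactly $r/p=2^s$ times, the $p$ classes yield $p$ distinct values $\delta_k$ which by bijectivity of multiplication by $a$ are exactly $\{0/p,1/p,\ldots,(p-1)/p\}$, and excluding $k=0$ removes a single instance of $\delta_k=0$ (the class $\beta=0$), leaving frequency $2^s-1$ at $0$ and $2^s$ elsewhere.
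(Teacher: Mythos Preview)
Your proof is correct and follows essentially the same approach as the paper: both reduce $k2^n/r$ to $k2^{n-s}/p$ using $n>s$, identify $j_k$ as the residue of $k2^{n-s}$ modulo $p$, and exploit that $2^{n-s}$ (your $a$, the paper's $w$) is coprime to the odd number $p$ to get the bijection in part~(ii). The only difference is packaging: you invoke directly that multiplication by a unit is a bijection of $\Z/p\Z$, whereas the paper writes out the division-with-remainder explicitly and appeals to Euclid's Lemma on the equation $(\beta_1-\beta_2)w=(a_1-a_2)p$; these are the same argument in different clothing.
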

\begin{proof}
  Since \math{r=p2^{s}},
  \math{k2^{n}/r=k2^{n-s}/p}. Write \math{k2^{n-s}=a_kp+j_j}
  for integers \math{a_k,j_j},
  where
  \math{j_k\in\{0,\ldots,p-1\}}. Then,
  \math{k2^{n}/r=a_k+j_k/p} and \math{\floor{k2^{n}/r}=a_k}, from which is
  follows that
  \math{\delta_k=j_k/p} proving part (\rn{1}).
  
  To prove part (\rn{2}), write \math{2^{n-s}=up+w} for integers
  \math{u,w} with \math{w\in\{0,\ldots,p-1\}}.
  Since \math{\gcd(p,w)} divides
  \math{up+w}, it also divides \math{2^{n-s}}.
  As \math{p} is odd,
  it is a product of odd factors, therefore \math{\gcd(p,w)} is odd. The only
  odd divisor of \math{2^{n-s}} is 1, and so \math{\gcd(p,w)=1}.
  For \math{k_1=\alpha_1 p+\beta_1}, we have
  \mld{
    \frac{k_12^{n-s}}{p}=\frac{(\alpha_1 p+\beta_1)(up+w)}{p}
    =
    \alpha_1(up+w)+\beta_1u+\frac{\beta_1w}{p}.
  }
  Let \math{\beta_1w=a_1p+j_1}. It follows that 
  \math{\delta_{k_1}=j_1/p}. Similarly,
  let \math{\beta_2w=a_2p+j_2}. Then, \math{\delta_{k_2}=j_2/p}.
  If \math{\beta_1=\beta_2}, then \math{j_1=j_2} and
  \math{\delta_{k_1}=\delta_{k_2}}.
  Suppose \math{\beta_1\not=\beta_2}.
  We prove that \math{j_1\not=j_2} which means
  \math{\delta_{k_1}\not=\delta_{k_2}}, concluding the proof of
  part (\rn{2}). Indeed, suppose to the contrary that
  \math{j_1=j_2}. By computing
  \math{\beta_1w-\beta_2 w} we get
  \mld{
    (\beta_1-\beta_2)w=(a_1-a_2)p.
  }
  The LHS is non zero, because \math{\beta_1\not=\beta_2}. Since
  \math{p} divides the RHS, it follows that
  \math{p} divides \math{(\beta_1-\beta_2)w}. Since \math{\gcd(p,w)=1},
  by Euclid's Lemma, \math{p} divides \math{\beta_1-\beta_2}, which is
  impossible because \math{1\le|\beta_1-\beta_2|<p}.
\end{proof}
Lemma~\ref{lemma:unique-beta} says that for \math{k=\alpha_kp+\beta_k},
the residual \math{\delta_k} is \math{j_k/p}, where
\math{j_k} is determined by 
\math{\beta_k=\rem(k,p)}.  As \math{k} ranges from
\math{1} to \math{p2^s-1}, each of the remainders
\math{\beta\in[1,p-1]} occurs \math{2^s} times and the remainder
\math{\beta=0} occurs \math{2^s-1} times.
When \math{\beta_k=0}, \math{j_k=0}. Each of the other remainders
\math{\beta\in[1,p-1]} map to distinct values of \math{j\in[1,p-1]}.
In conclusion,
\math{\delta=0} occurs \math{2^s-1} times and
each \math{\delta\in\{1/p,2/p,\ldots,(p-1)/p\}} occurs \math{2^s} times
proving the distribution in \r{eq:delta-dist}.
Using these values of \math{\delta_k} in~\r{eq:Psuccess-1}
proves the
following theorem.
\begin{theorem}\label{theorem-exact}
  Let \math{r=p 2^s} where \math{p} is odd. Then,
\begin{equation}
  \Psucc(M) = \frac{1}{2^nL}\sum_{j=0}^{p - 1}(2^s-\delta_{j0}) H_L(j/p;M,\epsilon) +\phi
  \label{eqn:exact-theorem},
\end{equation}
where \math{\delta_{j0}} is the Kronecker delta function.
\end{theorem}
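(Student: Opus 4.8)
The plan is to substitute the frequency distribution in~\eqref{eq:delta-dist}, just established from Lemma~\ref{lemma:unique-beta}, directly into the exact expression~\eqref{eq:Psuccess-1} for $\Psucc(M)$. In~\eqref{eq:Psuccess-1} the only dependence of the sum on $k$ is through the value $\delta_k$, so the entire argument is a regrouping of terms: partition $\{1,\dots,r-1\}$ by the value $\delta_k = j/p$, $j\in\{0,\dots,p-1\}$, which is well defined by part~(i) of Lemma~\ref{lemma:unique-beta}.

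First I would count multiplicities. By part~(ii) of Lemma~\ref{lemma:unique-beta}, $\delta_{k_1}=\delta_{k_2}$ iff $\rem(k_1,p)=\rem(k_2,p)$, and $\delta_k=0$ precisely when $p\mid k$. As $k$ runs over $\{1,\dots,p2^s-1\}$, the residue $0$ occurs $2^s-1$ times (namely $k=p,2p,\dots,(2^s-1)p$), while each nonzero residue occurs $2^s$ times, and the nonzero residues map bijectively onto $\{1/p,\dots,(p-1)/p\}$. This is exactly~\eqref{eq:delta-dist}, and in particular $\tau=2^s-1$, matching the $\tau$ appearing in $\phi$.

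Regrouping the sum in~\eqref{eq:Psuccess-1} then gives
$$\sum_{k=1}^{r-1} H_L(\delta_k;M,\epsilon) = (2^s-1)\,H_L(0;M,\epsilon) + \sum_{j=1}^{p-1} 2^s\, H_L(j/p;M,\epsilon) = \sum_{j=0}^{p-1}(2^s-\delta_{j0})\,H_L(j/p;M,\epsilon),$$
with the Kronecker delta $\delta_{j0}$ absorbing the ``$-1$'' into the $j=0$ term. Substituting this together with $\tau=2^s-1$ back into~\eqref{eq:Psuccess-1} yields~\eqref{eqn:exact-theorem}.

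There is no real obstacle here: all the arithmetic content sits in Lemma~\ref{lemma:unique-beta} and the corollary distribution~\eqref{eq:delta-dist}, so what remains is bookkeeping. The only points worth a sanity check are that the multiplicities sum correctly, $(2^s-1)+(p-1)2^s = p2^s-1 = r-1$, and the degenerate case $s=0$ (odd $r$), where $\tau=0$, $\phi=0$, and each of $1/p,\dots,(p-1)/p$ occurs exactly once --- consistent with the earlier remark that the formula simplifies when $r$ is odd.
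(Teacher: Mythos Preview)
Your proposal is correct and follows essentially the same route as the paper: establish the distribution~\eqref{eq:delta-dist} of the $\delta_k$ via Lemma~\ref{lemma:unique-beta}, then regroup the sum in~\eqref{eq:Psuccess-1} by the common value $j/p$ and absorb the deficient count at $j=0$ with the Kronecker delta. The paper's own proof is terser (it simply says ``using these values of $\delta_k$ in~\eqref{eq:Psuccess-1}''), so your added sanity checks on the multiplicity count and the $s=0$ case are welcome but not new content.
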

When \math{r} is odd, \math{s=0}, and \math{\phi=0}, and
the formula simplifies to
\math{\Psucc(M)
  =(1/2^nL)\sum_{j=1}^{r-1}H_L(j/r;M,\epsilon)}.

\subsection{Proving The Lower Bound}

We start from \r{eq:Psuccess-2}. We need
\math{\sum_{k=1}^{r-1}H_L(\delta_k;M)}. Using the distribution of
\math{\delta_k} in \r{eq:delta-dist} and \math{H_L(0;M)=L^2},
\eqar{
  \sum_{k=1}^{r-1}H_L(\delta_k;M)
  &=&
  (2^s-1)L^2+2^s\sum_{j=1}^{p-1}H_L(j/p;M)
  \\
  &\ge&
  (2^s-1)L^2+2^s\sum_{j=1}^{p-1}L^2\left(1 - \frac{\sin^2(\pi j/p)(2M-1)}{\pi^2 M(M-1)}\right)
  \label{eq:lower-tight-1}
  \\
  &=&
  rL^2\left(1-\frac{1}{r}-\frac{(2M-1)}{2\pi^2 M(M-1)}\right)
  \label{eq:lower-tight-2}
}
To get \r{eq:lower-tight-1}, we used the lower bound in Lemma~\ref{lemma:H-bounds}. To get \r{eq:lower-tight-2},
we used Lemma~\ref{lemma:sum-sin} and
\math{r=p2^s}. Using \r{eq:lower-tight-2} in \r{eq:Psuccess-2},
\eqar{
  \Psucc(M)
  &\ge&
  \frac{rL}{2^n}\left(1-\frac1r-\frac{M-\frac12}{\pi^2 M(M-1)} \right)
  +
  \frac{1}{2^nL}\sum_{k=1}^{r - 1}\cl A_L(\delta_k;2^q)+\phi
  \\
  &=&
  \left(1-\frac1r-\frac{M-\frac12}{\pi^2 M(M-1)}
  \right)+
  \underbrace{
   \frac{1}{2^nL}\sum_{k=1}^{r - 1}\cl A_L(\delta_k;2^q)+\phi+c\epsilon}_{\cl E}.
  \label{eq:per-lower}
}
The \math{\epsilon} comes from \math{rL/2^n=1+\epsilon} and
\math{0<c<1} when \math{r\ge 2,q\ge 1}.
Using \math{|\epsilon|\le r/2^n} with
\r{eq:phi-bound} and \r{eq:period-lower},
\mld{
  |\cl E|\le \frac{r(2\log_2M + 7+2^{-(m+q+1)})}{2^n}\in(2^{-(m+q+1)}\log_2M).
}
The final lower bound not involving a sum over \math{\cl A(\delta_k)} and
using \math{\phi\ge0} is
\mld{
  \Psucc(M)
  \ge
  \left(1-\frac1r-\frac{M-\frac12}{\pi^2 M(M-1)}
  \right)
  -
  \frac{r(2\log_2M + 7)}{2^n}
  \label{eq:per-lower-precise}
}
We have thus proved Theorem~\ref{theorem:main}.

\subsection{Proving the Upper Bound}

The proof is exactly analogous to the lower bound
proof, except that instead of using the lower bound
in Lemma~\ref{lemma:H-bounds}, we use the upper bound with
\r{eq:Psuccess-2} to get
\eqar{
  \sum_{k=1}^{r-1}H_L(\delta_k;M)
  &=&
  (2^s-1)L^2+2^s\sum_{j=1}^{p-1}H_L(j/p;M)
  \\
  &\le&
  (2^s-1)L^2+2^s\sum_{j=1}^{p-1}L^2\left(1 - \frac{4\sin^2(\pi j/p)}{\pi^2 (2M+1)}\right)+\frac{4L\sin^2(\pi j/p)}{\pi^2}
  \label{eq:upper-tight-1}
  \\
  &=&
  L^2\left((r-1)-\frac{2r}{\pi^2 (2M+1)}\right)+\frac{2Lr}{\pi^2}
  \label{eq:upper-tight-2}
}
Using \r{eq:upper-tight-2} in \r{eq:Psuccess-2},
\eqar{
  \Psucc(M)
  &\le&
  \frac{Lr}{2^n}\left(1-\frac1r -\frac{2}{\pi^2(2M +1)}\right)
  +
  \frac{2r}{2^n\pi^2}
  +
  \frac{1}{2^nL}\sum_{k=1}^{r - 1}\cl A_L(\delta_k;2^q)
  +
  \phi
  \\
  &\le&
  \left(1-\frac1r -\frac{2}{\pi^2(2M +1)}\right)
  +
  \underbrace{\frac{1}{2^nL}\sum_{k=1}^{r - 1}\cl A_L(\delta_k;2^q)+
  \frac{2r}{2^n\pi^2}+\phi
  +c\epsilon}_{\overline{\cl E}}
  \\
  &=&
  \left(1-\frac1r-\frac{M-\frac12}{\kappa\pi^2M(M -1)}\right)
  +
  \overline{\cl E}.
  \label{eq:per-upper-final}
}
In the last step, we set
\math{\kappa=1+(M-\frac14)/M(M-1)=1+O(1/M)}.
To compare the upper and lower bounds, we compare 
\math{\overline{\cl E}} in \r{eq:per-upper-final} with \math{\cl E}
in~\r{eq:per-lower}. Since \math{|\epsilon|\le r/2^n}, we find
\mld{
  \overline{\cl E}-\cl E\in O(r2^{-n}).
}
Up to the constant \math{\kappa\approx 1}, the lower
bound is tight.
The final bound not involving a sum over \math{\cl A(\delta_k)}
is
\mld{
  \Psucc(M)
  \le
  \left(1-\frac{1}{r} - \frac{(M-\frac12)}{\kappa\pi^2M(M-1)}\right)
  +
  \frac{r(2\log_2M+7+2/\pi^2+2^{-(m+q+1)})}{2^n}.
  \label{eq-upper-final}
}

\subsection{Application to Continued Fraction Post-Processing}\label{section:continued}

Via standard arguments,
the continued fraction post-processing algorithm succeeds whenever
\math{\hat\ell/2^n} is within \math{1/2r^2} of
a positive integer multiple of
\math{r}, see
\cite[Theorem 26.3]{csci6964-lecture-notes}. Since \math{1/2r^2 > 1/2^{2m}},
the algorithm runs by seeking a rational approximation with
denominator less than \math{2^m} that approximates
\math{\hat\ell/2^n} to within \math{1/2^{2m+1}}. That is,
the continued fraction post-processing succeeds whenever
\mld{
  \absof{\frac{\hat\ell}{2^n} - \frac{k}{r}} \le \frac{1}{2^{2m+1}}
}
Multiplying both sides by \math{2^n} and using
\math{n=2m+q+1}, we get the equivalent condition
\mld{
 \absof{\hat\ell - k\frac{2^n}{r}} \le 2^q.
}
That is, \math{M=2^q} which gives the following corollary of Theorem~\ref{theorem:main}
for
continued fractions post-processing,
\begin{equation}
  \Psucc(M)
  \geq
  \left(1-\frac{1}{r} - \frac{(2^q-\frac12)}{\pi^22^q(2^q-1)}\right)
  +O(q2^{-(m+q+1)}).
\end{equation}

\subsection{Quantum Simulation Experiment} \label{sec:quantum-sim-exp}

We performed a simulation of the
the circuit
by randomly picking $\hat\ell$ with
  the corresponding probabilities in \r{eqn:Pl}. The simulation suceeds if the quantum part returns $\hat\ell$ such that
  $\left|\hat\ell - k{2^n}/{r} \right| \leq M$
  from some integer $k \in [1, r-1]$.  Simulations used the parameters from Table~\ref{tab:compare-res}, with 50{,}000 iterations for each $M$ and $r$.
  Choosing $M=2^q$ corresponds to continued-fraction post-processing
  and we opt to run the full post-processing in
  Algorithm~\ref{alg:qpf-continued-fraction-post}
  instead of
  simply checking if $\left|\hat\ell - k{2^n}/{r} \right| \leq M$.

\begin{algorithm}
\caption{Simulation of Period Finding w/ Continued-Fraction Post-Processing}\label{alg:qpf-continued-fraction-post}
\begin{algorithmic}[1]
\Require $m > 0$, $q > 0$, \math{r>0}.
\Ensure return $\hat{r} > 1$ such that $\hat{r} \mid r$
\State Set $n = 2m + q + 1$.
\State Simulate 
running the quantum period finding circuit by sampling $\hat\ell$ from
the probabilities in \r{eqn:Pl}.
\State Compute the continued fraction convergents $p_1/q_1$, $p_2/q_2,\ldots$
of \math{\hat\ell/2^n} and stop when
either \math{q_k\ge 2^m} or \math{|{p_k}/{q_k} - {\hat\ell}/{2^n}| \leq {1}/{2^{2m+1}}}.
\If{$1<q_k < 2^m$ and \math{q_k} is a divisor of \math{r}}
\State \textbf{return} $ q_k$, reporting success.
\Else
\State \textbf{return} failure.
\EndIf
\end{algorithmic}
\end{algorithm}

\section{Acknowledgements }
This work was partially supported by NSF Grant \#2113850 and by the IBM-Rensselaer Future of Computing Research Collaboration.

{
\bibliographystyle{plain}
\bibliography{ref} 
}

\clearpage

\appendix

\section{Proofs of Preliminary Technical Results} \label{appendix:proof-prelim}

For convenience, we restate the definition of
\math{H_L(x;M)}. For integer parameters \math{M\ge 1} and \math{L>1},
\mld{
  H_L(x;M)=
  \sum_{z=0}^{M-1}
  \frac{\sin^2(\pi x)}{\sin^2(\pi(z+x)/L)}
  +
  \frac{\sin^2(\pi x)}{\sin^2(\pi(z+1-x)/L)}.
  \label{eq:P-defH}
}
When \math{L} and
\math{M} can be inferred from the context we will
simply write \math{H(x)}.

\subsection{Proof of Theorem~\ref{theorem:P1}}
\label{section:app-proof-main1}

We begin with some preliminary lemmas.
\begin{lemma}\label{lemma:P1}
  \math{H_L(x;M)=H_L(1-x;M)}.
\end{lemma}
\begin{proof}
  Using \math{\sin(\pi x)=\sin(\pi(1-x))} in the second term of the summand,
  the summand is  \math{f(x)+f(1-x)} for
  \math{f(x)={\sin^2(\pi x)}/{\sin^2(\pi(z+x)/L)}}.
\end{proof}
By Lemma~\ref{lemma:P1}, there is a minimum of
\math{H(x;M)} in \math{[0,\frac12]}. The plan is to show that
\math{H(x;M)} is decreasing on this interval, for \math{M} sufficiently small.
This will imply that the minimum
of \math{H(x;M)} is at \math{x=1/2}.
Let us introduce a notation for the summand,
\mld{
  f(x;z)=\frac{\sin^2(\pi x)}{\sin^2(\pi(z+x)/L)}
  +  
  \frac{\sin^2(\pi x)}{\sin^2(\pi(z+1-x)/L)}.
  \label{eq:P0}
}
Using this notation, 
\mld{H_L(x;M)=\sum_{z=0}^{M-1}f(x;z).}
Also, \math{f(x;z)} is periodic in
\math{z} with period \math{L}, \math{f(x;z+L)=f(x;z)}. We consider
\math{z} an integer in \math{[0,L-1]}. The
following symmetry property is useful.
\begin{lemma}\label{lemma:P2}
  For \math{i\in\{1,\ldots,L-1\}},
  \math{f(x;L-i)=f(x;i-1)}.
\end{lemma}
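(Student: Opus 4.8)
\subsection*{Proof proposal for Lemma~\ref{lemma:P2}}

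The plan is a direct computation using only the reflection identity \math{\sin^2(\pi-\alpha)=\sin^2\alpha}, together with the observation that passing from \math{z=L-i} to \math{z=i-1} interchanges the two summands defining \math{f}. First I would write out \math{f(x;L-i)} from the definition in~\r{eq:P0}:
\mld{
  f(x;L-i)=\frac{\sin^2(\pi x)}{\sin^2\!\big(\pi(L-i+x)/L\big)}
  +\frac{\sin^2(\pi x)}{\sin^2\!\big(\pi(L-i+1-x)/L\big)}.
}
In the first denominator the argument is \math{\pi(L-i+x)/L=\pi-\pi(i-x)/L}, and in the second it is \math{\pi(L-i+1-x)/L=\pi-\pi(i-1+x)/L}. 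Applying \math{\sin^2(\pi-\alpha)=\sin^2\alpha} to each term gives
\mld{
  f(x;L-i)=\frac{\sin^2(\pi x)}{\sin^2\!\big(\pi(i-x)/L\big)}
  +\frac{\sin^2(\pi x)}{\sin^2\!\big(\pi(i-1+x)/L\big)}.
}

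Next I would expand \math{f(x;i-1)} directly from~\r{eq:P0}: its first term is \math{\sin^2(\pi x)/\sin^2(\pi(i-1+x)/L)} and its second term is \math{\sin^2(\pi x)/\sin^2(\pi(i-1+1-x)/L)=\sin^2(\pi x)/\sin^2(\pi(i-x)/L)}. Thus \math{f(x;i-1)} is the sum of exactly the same two fractions as \math{f(x;L-i)}, merely listed in the opposite order, so \math{f(x;L-i)=f(x;i-1)} as claimed. One should note that the hypothesis \math{i\in\{1,\ldots,L-1\}} guarantees \math{L-i\in\{1,\ldots,L-1\}} and \math{i-1\in\{0,\ldots,L-2\}}, so both indices lie in the range \math{[0,L-1]} on which \math{f(\cdot\,;z)} was defined (and periodicity \math{f(x;z+L)=f(x;z)} would handle any wraparound in any case).

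There is essentially no obstacle here; the only thing to be careful about is the bookkeeping of the arguments modulo \math{L} and making sure the reflection \math{\pi-\alpha} is applied to the correct terms. This lemma will then be used downstream (presumably in the proof of Theorem~\ref{theorem:P1}) to pair up the summands \math{f(x;z)} for \math{z} near \math{0} with those for \math{z} near \math{L}, exploiting the reflection symmetry of \math{H_L} about its midpoint.
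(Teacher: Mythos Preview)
Your proof is correct. The paper actually states this lemma without proof, treating it as an immediate symmetry property of \math{f(x;z)}; your direct verification via \math{\sin^2(\pi-\alpha)=\sin^2\alpha} is precisely the intended one-line justification.
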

We need one more technical lemma relating to a sum of ratios of sinusoids.
\begin{lemma}\label{lemma:P3-a}
  Let \math{a,b,c>0} be integers with \math{b/\rho\ge 1}, where 
  \math{\rho=\gcd(a,b)\times \gcd(a/\gcd(a,b),c)}.
  For any
  \math{x},
    \mld{
    \sum_{z=0}^{bc/\rho-1}
    \frac{\sin^2(\pi (z+x)a/b)}{\sin^2(\pi  (z+x)a/bc)}
    =
    \frac{bc^2}{\rho}.
    \label{eq:P2}
    }
\end{lemma}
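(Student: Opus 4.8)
The claim is a "complete sum of sinusoid ratios equals a nice closed form" identity. The plan is to reduce the sum to a Dirichlet-kernel-type sum by exploiting periodicity. Write $d=\gcd(a,b)$, $a=da'$, $b=db'$ with $\gcd(a',b')=1$, and let $e=\gcd(a',c)$, so $\rho=de$. Then $a/b = a'/b'$ and $a/(bc)=a'/(b'c)$. The summand is $\sin^2(\pi(z+x)a'/b')/\sin^2(\pi(z+x)a'/(b'c))$. The key observation is that this summand, as a function of $z$, is periodic with period $bc/\rho = b'c/e$: indeed replacing $z$ by $z+b'c/e$ changes the argument of each sine by $\pi$ times $(a'/b')(b'c/e) = a'c/e$ (an integer, since $e\mid a'$) in the numerator and by $\pi$ times $(a'/(b'c))(b'c/e)=a'/e$ (an integer) in the denominator — both shifts are by integer multiples of $\pi$, leaving $\sin^2$ unchanged. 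So the sum over one full period $z=0,\dots,b'c/e-1$ is well-defined and independent of $x$; we may take $x=0$ WLOG (a clean way to phrase it: the sum is a constant function of $x$, so evaluate at $x=0$ where the summand is $\sin^2(\pi z a'/b')/\sin^2(\pi z a'/(b'c))$, interpreting the $z=0$ term by its limit, which is $c^2$).

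With $x=0$, substitute $g = \gcd(a',b'c)$. Since $\gcd(a',b')=1$, we get $g=\gcd(a',c)=e$. As $z$ runs over $0,1,\dots,b'c/e-1$, the value $za'/(b'c)$ mod $1$ takes each value in $\{0/(b'c/e),\ 1/(b'c/e),\ \dots,\ (b'c/e-1)/(b'c/e)\}$ — that is, $za' \bmod b'c$ runs over all multiples of $e$ in $\{0,\dots,b'c-1\}$, each exactly once, because $\gcd(a'/e,\,b'c/e)=1$. Setting $N=b'c/e$ and letting $z a'/(b'c) \equiv t/N \pmod 1$, and noting correspondingly $z a'/b' \equiv tc/N \pmod 1$, the sum becomes
\[
\sum_{t=0}^{N-1}\frac{\sin^2(\pi t c/N)}{\sin^2(\pi t/N)}
\]
with the $t=0$ term equal to $c^2$ in the limit. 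This is a classical identity: $\sum_{t=0}^{N-1}\sin^2(\pi t c/N)/\sin^2(\pi t/N) = Nc$ whenever $c\mid N$ or more generally — actually the clean statement I want is that this sum equals $Nc$ when $N\ge 1$ (using that $\sin(\pi tc/N)/\sin(\pi t/N)$ is, up to sign, the Dirichlet kernel value $\sum_{j=0}^{c-1}e^{2\pi i t(j-(c-1)/2)/N}$, whose squared modulus summed over a complete residue system $t=0,\dots,N-1$ gives $N$ times the number of $(j,j')$ pairs with $j\equiv j'$, i.e. $Nc$, provided $c\le N$). Here $N=b'c/e\ge c$ since $b'/e\ge 1$ is exactly the hypothesis $b/\rho\ge 1$, so the Dirichlet computation applies. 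Then the total is $Nc = b'c^2/e = bc^2/\rho$, as claimed.

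I would organize the write-up in three steps: (1) reduce to coprime data and verify the summand has period $bc/\rho$, so the sum is a constant in $x$ and we may set $x=0$; (2) reindex via $t \equiv za'/e \pmod{b'c/e}$ to rewrite the sum as $\sum_{t=0}^{N-1}\sin^2(\pi tc/N)/\sin^2(\pi t/N)$ with $N=b'c/e$; (3) evaluate this Dirichlet-kernel sum to $Nc$ using the hypothesis $N\ge c$ (equivalently $b/\rho\ge 1$). The main obstacle is step (3) done carefully — handling the $t=0$ singular term by its limiting value $c^2$, and justifying the orthogonality computation $\sum_{t\bmod N}|\sum_{j=0}^{c-1}\omega^{tj}|^2 = Nc$ for $\omega=e^{2\pi i/N}$, which needs $c\le N$ so that the only coincidences $j\equiv j'\pmod N$ are $j=j'$. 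A secondary subtlety is confirming that the reindexing in step (2) is a genuine bijection on residues, which is where $\gcd(a'/e,b'c/e)=1$ gets used; I would state that as a short number-theoretic sub-claim.
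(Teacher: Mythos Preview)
Your approach and the paper's rest on the same core idea --- the Dirichlet-kernel identity $\sin^2(c\theta)/\sin^2\theta=\bigl|\sum_{j=0}^{c-1}e^{2ij\theta}\bigr|^2$ together with orthogonality over a cyclic group of order $N=bc/\rho$ --- though the paper packages this as unitarity of a modified Fourier operator $F^{(x)}$ applied to a specific unit vector, while you work directly with the sum. Your phrasing is the more elementary one; the paper's has the advantage that it keeps $x$ general throughout.

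There is one real gap in your plan. In step~(1) you argue that integer-periodicity of the summand in $z$ with period $N$ makes the full-period sum ``independent of $x$'', so you may set $x=0$. That inference does not follow: periodicity in the integer variable $z$ only gives $S(x+1)=S(x)$, i.e.\ period $1$ in the continuous parameter $x$, not constancy. Since your reindexing in step~(2) genuinely requires $x=0$ (it is a bijection on residue classes of the \emph{integer} $z$), the argument as outlined proves the identity only at $x=0$.

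The repair is simple and already implicit in your step~(3): skip the reduction to $x=0$ and run the orthogonality computation at general $x$. Expanding via the Dirichlet kernel,
\[
\sum_{z=0}^{N-1}\Bigl|\sum_{j=0}^{c-1}e^{2\pi i j(z+x)a'/(b'c)}\Bigr|^2
=\sum_{j,j'=0}^{c-1}e^{2\pi i(j-j')x\,a'/(b'c)}\sum_{z=0}^{N-1}\omega^{(j-j')z},
\qquad \omega=e^{2\pi i a'/(b'c)}.
\]
Your coprimality observation $\gcd(a'/e,\,b'c/e)=1$ says exactly that $\omega$ is a primitive $N$-th root of unity, so the inner sum equals $N\cdot[\,N\mid j-j'\,]$. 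With $|j-j'|\le c-1<c\le N$ (this is where the hypothesis $b/\rho\ge1$ enters), only $j=j'$ survives, the $x$-dependent phase disappears, and the total is $Nc=bc^2/\rho$ for every $x$. This is precisely what the paper's unitarity argument encodes; once you make this fix, your proof and the paper's coincide.
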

\begin{proof}
  Define \math{\alpha=a/\gcd(a,b)}, \math{\beta=b/\gcd(a,b)},
  \math{\gamma=\alpha/\gcd(\alpha,c)} and
  \math{N=c/ \gcd(\alpha,c)}. Then, \math{a/b=\alpha/\beta} and
  \math{a/bc=\gamma/\beta N}.
  Define a modified Fourier operator \math{F^{(x)}} as the
  \math{\beta N^2\times \beta N^2} matrix with components
  \mld{
    F^{(x)}_{\ell j}=\frac{1}{N\sqrt{\beta}}e^{(2\pi i)(\ell+x)j\gamma/\beta N^2}
    \qquad\qquad
    \text{for \math{\ell,j} integers in \math{[0,\beta N^2-1]}}.
  }
  The operator \math{F^{(x)}} is unitary. Indeed,
  \mld{
    (F^\dagger F)_{kj}
    =
    \frac{e^{(2\pi i)x (j-k)\gamma/\beta N^2}}{\beta N^2}
    \sum_{\ell=0}^{\beta N^2-1}
    e^{(2\pi i)(j-k)\ell\gamma/\beta N^2}.
  }
  When \math{j=k}, the summand on the RHS is 1, and so
  \math{({F^{(x)\dagger}} F^{(x)})_{kk}=1}. When
  \math{j\not=k}, the sum on the RHS is
  \mld{
    \frac{e^{(2\pi i) (j-k)\gamma}-1}{e^{(2\pi i) (j-k)\gamma/\beta N^2}-1}.
  }
  The numerator is zero. In the denominator, suppose
  \math{(j-k)\gamma/\beta N^2} is an integer. Then 
  \math{\beta N^2} divides \math{(j-k)\gamma}. Since
  \math{\gcd(\gamma,\beta N^2)=1}, by Euclid's lemma
  \math{\beta N^2} divides \math{(j-k)}. But this is impossible since
  \math{1\le|j-k|<\beta N^2}. Therefore
  \math{(j-k)\gamma/\beta N^2} is not
  an integer and hence the denominator is not zero. Therefore,
  \math{(F^{(x)\dagger} F^{(x)})_{kj}=0} for \math{k\not=j} and
  \math{F^{(x)}} is unitary.
  Consider the unit vector
  \math{\ket{\phi}=\sum_{k=0}^{c-1}\ket{k N}/\sqrt{c}}.
  Existence of this unit vector requires
  states to be defined up to state \math{N(c-1)}, which is only so provided
  \math{N(c-1)\le \beta N^2-1}. That is, we require
  \math{c-1\le\beta N-1/N}. Since \math{c} and \math{\beta N} are integers, we
  require \math{c\le\beta N=bc/\rho}, as is given in the conditions of the
  lemma.
  Let us now compute \math{F^{(x)}\ket{\phi}}. We have that
  \mld{
    F^{(x)}\ket{\phi}
    =
    \frac{1}{N\sqrt{\beta c}}\sum_{\ell=0}^{\beta N^2-1}\ket{\ell}
    \sum_{k=0}^{c-1}e^{(2\pi i)(\ell+x) k\gamma/\beta N}
    =
    \frac{1}{N\sqrt{\beta c}}\sum_{\ell=0}^{\beta N^2-1}\ket{\ell}
    \frac{e^{(2\pi i)(\ell+x)\gamma c/\beta N}-1}{e^{(2\pi i)(\ell+x)\gamma/\beta N}-1}.
  }
  By unitarity, \math{\norm{F^{(x)}\ket{\phi}}^2=1}. Using
  \math{\gamma c/\beta N=\alpha/\beta=a/b} and
  \math{\gamma/\beta N=\alpha/\beta c=a/bc}, we get that
  \mld{
    \norm{F^{(x)}\ket{\phi}}^2
    =
    \frac{1}{N^2\beta c}\sum_{\ell=0}^{\beta N^2-1}
    \absof{\frac{e^{(2\pi i)(\ell+x)a/b}-1}{e^{(2\pi i)(\ell+x)a/bc}-1}}^2
    =
    \frac{1}{N^2\beta c}\sum_{\ell=0}^{\beta N^2-1}
    \frac{\sin^2(\pi (\ell+x)a/b)}{\sin^2(\pi(\ell+x)a/bc)}
    =
    1.
    \label{eq:P1}
  }
  Since the summand in \r{eq:P1} is periodic in \math{\ell} with period
  \math{\beta N}, the sum of the \math{\beta N^2} terms is \math{N} copies
  of the
  sum of the first \math{\beta N} terms. That is,
  \mld{
    \frac{1}{N^2\beta c}\sum_{\ell=0}^{\beta N^2-1}
    \frac{\sin^2(\pi (\ell+x)a/b)}{\sin^2(\pi(\ell+x)a/bc)}
    =
    \frac{N}{N^2\beta c}\sum_{\ell=0}^{\beta N-1}
    \frac{\sin^2(\pi (\ell+x)a/b)}{\sin^2(\pi(\ell+x)a/bc)}
    =
    1.
    \label{eq:P2}
  }
  Using \math{\beta N=bc/\rho} in \r{eq:P2} proves the desired result.
\end{proof}

We now prove the main lemmas.
\begin{lemma}\label{lemma:P3}
For \math{x\in[0,\frac12]}, the function \math{H_L(x,\floor{L/2}-1)}
    is non-increasing. Specifically,
    \mld{
      H_L(x,\floor{L/2})
      =
      \begin{cases}
        L^2&L\text{ even};\\[10pt]
        L^2-\displaystyle
    \frac{\sin^2(\pi x)}{\cos^2(\pi(\frac12-x)/L)}&L\text{ odd}.
      \end{cases}
    }
\end{lemma}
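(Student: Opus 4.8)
The plan is to evaluate the full‑period sum $H_L(x;L)$ in closed form and then fold it in half using the reflection symmetry of the summand; this yields both the closed form for $H_L(x;\floor{L/2})$ and its monotonicity at once. (Throughout, the ratios $\sin^2(\pi x)/\sin^2(\pi(z+x)/L)$ are read via their removable‑singularity limits at integer arguments, so that, e.g., the $z=0$ term equals $L^2$ at $x=0$.) The first step is the full‑period identity $\sum_{z=0}^{L-1}\sin^2(\pi x)/\sin^2(\pi(z+x)/L)=L^2$. Since $\sin^2(\pi(z+x))=\sin^2(\pi x)$ for integer $z$, this is the special case $a=b=1,\ c=L$ of Lemma~\ref{lemma:P3-a} (there $\rho=1$, the index runs $z=0,\dots,L-1$, and the value is $bc^2/\rho=L^2$); one could instead sum the Mittag--Leffler expansion of Lemma~\ref{lemma:mittag-leffler} over a period. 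Using $\sin^2(\pi x)=\sin^2(\pi(1-x))$, the companion sum $\sum_{z=0}^{L-1}\sin^2(\pi x)/\sin^2(\pi(z+1-x)/L)$ is the same identity with $1-x$ in place of $x$, hence also equals $L^2$, and adding the two gives $H_L(x;L)=\sum_{z=0}^{L-1}f(x;z)=2L^2$, with $f$ the summand of \r{eq:P0}.

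Next I would fold the sum. By Lemma~\ref{lemma:P2}, $f(x;z)=f(x;L-1-z)$, so the $L$ summands are symmetric about the center $(L-1)/2$. If $L$ is even they split into $L/2$ genuine pairs, whence $2L^2=H_L(x;L)=2\sum_{z=0}^{L/2-1}f(x;z)=2H_L(x;\floor{L/2})$ and $H_L(x;\floor{L/2})=L^2$. If $L=2N+1$ is odd the central term $z=N$ is self‑paired, so $2L^2=H_L(x;L)=2H_L(x;\floor{L/2})+f(x;N)$ and $H_L(x;\floor{L/2})=L^2-\tfrac12 f(x;N)$. A short trigonometric computation then finishes this case: $(N+x)/L=\tfrac12-(\tfrac12-x)/L$ and $(N+1-x)/L=\tfrac12+(\tfrac12-x)/L$, so both denominators of $f(x;N)$ become $\cos^2(\pi(\tfrac12-x)/L)$ and $f(x;N)=2\sin^2(\pi x)/\cos^2(\pi(\tfrac12-x)/L)$, which is exactly the claimed formula.

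For the monotonicity on $[0,\tfrac12]$: when $L$ is even $H_L(x;\floor{L/2})\equiv L^2$ is constant. When $L$ is odd it suffices to show $g(x)=\sin^2(\pi x)/\cos^2(\pi(\tfrac12-x)/L)$ is non‑decreasing on $[0,\tfrac12]$. The substitution $u=\tfrac12-x$ turns this into the statement that $\phi(u)=\cos^2(\pi u)/\cos^2(\pi u/L)$ is non‑increasing on $[0,\tfrac12]$; since both cosines are positive there, this amounts to $\cos(\pi u)/\cos(\pi u/L)$ being non‑increasing, equivalently to $\tfrac1L\cos(\pi u)\sin(\pi u/L)\le\sin(\pi u)\cos(\pi u/L)$ on $(0,\tfrac12)$. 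Dividing by the positive quantity $\cos(\pi u)\cos(\pi u/L)$, this reduces to $\tan(\pi u/L)\le L\tan(\pi u)$, which holds because $\tan$ is increasing on $[0,\pi/2)$ and $0\le\pi u/L\le\pi u<\pi/2$, so $\tan(\pi u/L)\le\tan(\pi u)\le L\tan(\pi u)$ (using $L\ge1$). In fact this inequality is strict on $(0,\tfrac12)$ for $L\ge3$, so $H_L(x;\floor{L/2})$ is then strictly decreasing there, matching the ``constant or decreasing'' phrasing used for Theorem~\ref{theorem:P1}.

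I expect the closed form (the full‑period identity together with the parity‑dependent folding) to be the load‑bearing part; once it is in hand, the monotonicity is the one‑line tangent inequality above. The only points needing care are honoring the removable‑singularity convention at integer arguments and getting the endpoint indices right in the parity split — both routine.
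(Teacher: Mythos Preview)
Your proof is correct and follows essentially the same route as the paper's: both invoke Lemma~\ref{lemma:P3-a} with $a=b=1$, $c=L$ to get the full-period identity $\sum_{z=0}^{L-1}\sin^2(\pi x)/\sin^2(\pi(z+x)/L)=L^2$, then fold the sum by parity to obtain the closed form. The paper does the folding by a change of variables $u=L-(z+1)$ in the second half of the summand, whereas you appeal to the reflection $f(x;z)=f(x;L-1-z)$ (Lemma~\ref{lemma:P2}) on the full sum $H_L(x;L)=2L^2$; these are the same manipulation in different clothing. Your tangent-inequality argument for the odd-$L$ monotonicity is more explicit than the paper's ``by taking a derivative, one can show\ldots'', but otherwise the two proofs coincide.
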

\begin{proof}
  In Lemma~\ref{lemma:P3-a}, set \math{a=b=1} and \math{c=L}. Then,
  using \math{\sin^2(\pi(z+x))=\sin^2(\pi x)}, we have that
  \mld{
    \sum_{z=0}^{L-1}
    \frac{\sin^2(\pi x)}{\sin^2(\pi(z+x)/L)}
    =
    L^2.
    \label{eq:P4}
  }
  When \math{L} is even,
  consider \math{H(x;L/2)}.  After a change of variables to
  \math{u=L-(z+1)} in the second sum,
  \eqar{
    H_L(x;L/2)
    &=&
    \sum_{z=0}^{L/2-1}
    \frac{\sin^2(\pi x)}{\sin^2(\pi(z+x)/L)}
    +
    \sum_{u=L/2}^{L-1}
    \frac{\sin^2(\pi x)}{\sin^2(\pi(L-(u+x))/L)}
    \\
    &=&
    \sum_{z=0}^{L/2-1}
    \frac{\sin^2(\pi x)}{\sin^2(\pi(z+x)/L)}
    +
    \sum_{u=L/2}^{L-1}
    \frac{\sin^2(\pi x)}{\sin^2(-\pi(u+x)/L)}
    \\
    &=&
    \sum_{z=0}^{L-1}
    \frac{\sin^2(\pi x)}{\sin^2(\pi(z+x)/L)}
    \\
    &=&L^2.
  }
  The last step merges the two sums into one using
  \math{\sin^2(-x)=\sin^2(x)}. The case \math{L}
  odd is a little more complicated. Using \math{\floor{L/2}=(L-1)/2} and
  following similar steps as above, we get
  \eqar{
    H_L(x;(L-1)/2)
    &=&
    \sum_{z=0}^{(L-3)/2}
    \frac{\sin^2(\pi x)}{\sin^2(\pi(z+x)/L)}
    +
    \sum_{u=(L+1)/2}^{L-1}
    \frac{\sin^2(\pi x)}{\sin^2(-\pi(u+x)/L)}.
  }
  The two sums can almost be merged into a single big
  sum, except the term in the middle with \math{z=(L-1)/2} is
  missing and must be subtracted out. We get,
  \eqar{
    H_L(x;(L-1)/2)
    &=&
    \sum_{z=0}^{L-1}
    \frac{\sin^2(\pi x)}{\sin^2(\pi(z+x)/L)}
    -
    \frac{\sin^2(\pi x)}{\cos^2(\pi(\frac12-x)/L)}
    \\
    &=&
    L^2-
    \frac{\sin^2(\pi x)}{\cos^2(\pi(\frac12-x)/L)}.
    \label{eq:P3}
  }
  By taking a derivative, one can show that the function being subtracted
  in \r{eq:P3} is increasing.
  It therefore follows that the RHS is decreasing, completing
  the proof.
\end{proof}

\begin{lemma}\label{lemma:P4}
For \math{x\in[0,\frac12]} and \math{z\in\{1,2,\ldots,L-2\}}, the function \math{f(x;z)} in \r{eq:P0} is increasing,
    \mld{
      \frac{d\ }{dx}f(x;z)>0.}    
\end{lemma}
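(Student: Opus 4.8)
The plan is a direct-derivative argument that collapses to a single numerical inequality. Put $a=\pi(z+x)/L$ and $b=\pi(z+1-x)/L$, so $f(x;z)=\sin^2(\pi x)\,(\csc^2 a+\csc^2 b)$; since $z\in\{1,\dots,L-2\}$ (which already forces $L\ge 3$; for $L\le 2$ the claim is vacuous) and $x\in[0,\tfrac12]$, both $a,b$ lie in $(0,\pi)$, so every cosecant below is finite and positive. Differentiating, using $\tfrac{da}{dx}=\tfrac{\pi}{L}=-\tfrac{db}{dx}$ and the identity $\csc^2 t\cot t=\cos t/\sin^3 t=:g(t)$, gives
\[
  f'(x;z)=2\pi\sin^2(\pi x)\Bigl[\cot(\pi x)\,(\csc^2 a+\csc^2 b)-\tfrac1L\bigl(g(a)-g(b)\bigr)\Bigr].
\]
At $x=0$ the prefactor $\sin^2(\pi x)$ vanishes and at $x=\tfrac12$ the bracket vanishes (there $a=b$), so $f'=0$ at the endpoints; it therefore suffices to show the bracket is strictly positive on $(0,\tfrac12)$, which also yields that $f$ is increasing on $[0,\tfrac12]$.

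First I would record two easy facts on $(0,\pi)$: $g'(t)=-(1+2\cos^2 t)/\sin^4 t<0$, so $g$ is strictly decreasing, and $b>a$ (because $1-x>x$); hence $g(a)-g(b)>0$, while $\cot(\pi x)>0$, so both sides of the target inequality are positive. The pivotal step is the substitution $\cot(\pi x)=\tan\!\bigl(\tfrac{L(b-a)}{2}\bigr)$, valid because $\pi x+\tfrac{L(b-a)}{2}=\tfrac{\pi}{2}$; since $\tfrac{L(b-a)}{2}=\tfrac{\pi(1-2x)}{2}\in(0,\tfrac{\pi}{2})$ and $\tan\theta>\theta$ there, $\cot(\pi x)>\tfrac{L(b-a)}{2}$. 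On the other side, $g(a)-g(b)=\int_a^b\frac{1+2\cos^2 t}{\sin^4 t}\,dt\le 3(b-a)\max_{[a,b]}\csc^4 t\le 3(b-a)(\csc^4 a+\csc^4 b)$, the last step because $\csc^4$ decreases on $(0,\tfrac\pi2]$ and increases on $[\tfrac\pi2,\pi)$, so its maximum over $[a,b]\subset(0,\pi)$ is at an endpoint. Cancelling the positive factor $b-a$, the bracket is positive provided
\[
  \tfrac{L^2}{6}\,(\csc^2 a+\csc^2 b)\ \ge\ \csc^4 a+\csc^4 b.
\]

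To finish I would invoke the elementary inequality $\tfrac{p^2+q^2}{p+q}\le\max\{p,q\}$ for $p,q>0$, which reduces the last display to $\tfrac{L^2}{6}\ge\max\{\csc^2 a,\csc^2 b\}$. Since $z\ge1,\ x\ge0$ give $a\ge\pi/L$, and $z\le L-2,\ x\le\tfrac12$ give $a\le\pi-\tfrac{3\pi}{2L}$ (symmetrically $b\in[\tfrac{3\pi}{2L},\pi-\tfrac{\pi}{L}]$), both $a$ and $b$ lie in $[\tfrac{\pi}{L},\pi-\tfrac{\pi}{L}]$, on which $\sin\ge\sin(\pi/L)$; hence $\max\{\csc^2 a,\csc^2 b\}\le\csc^2(\pi/L)$. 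So everything comes down to $L^2\sin^2(\pi/L)\ge6$ for every integer $L\ge3$: the map $L\mapsto L\sin(\pi/L)$ is increasing (its derivative has the sign of $\tan(\pi/L)-\pi/L>0$), and at $L=3$ it equals $\tfrac{3\sqrt3}{2}$, with $\bigl(\tfrac{3\sqrt3}{2}\bigr)^2=\tfrac{27}{4}>6$. The derivative computation and the interval bookkeeping are routine; the genuinely load-bearing step is the chain of crude estimates that reduces the claim to $L^2\sin^2(\pi/L)\ge6$, and it is worth noting that this holds with only a little slack, appearing at $L=3$ essentially because $\tfrac{27}{4}>6$.
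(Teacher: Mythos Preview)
Your proof is correct and takes a genuinely different route from the paper. The paper expands $1/\sin^2$ via the Mittag-Leffler series, differentiates term by term, and shows each term of the resulting series is positive by a contradiction argument that boils down to $\pi\cot(\pi x)/(1-2x)\ge\pi^2/2>3/4$. You instead keep the closed-form derivative, use $\cot(\pi x)=\tan\bigl(\tfrac{\pi}{2}(1-2x)\bigr)>\tfrac{L(b-a)}{2}$ and the mean-value bound $g(a)-g(b)\le 3(b-a)\max_{[a,b]}\csc^4$, then reduce everything to the single inequality $L^2\sin^2(\pi/L)\ge 6$ for integers $L\ge 3$. Your approach is more elementary in that it avoids the series machinery entirely; the price is a chain of somewhat crude estimates that just barely closes (at $L=3$ with $27/4>6$), whereas the paper's term-by-term argument has uniform slack ($\pi^2/2$ versus $3/4$) independent of $L$. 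Both arguments tacitly establish $f'>0$ only on the open interval $(0,\tfrac12)$, with $f'=0$ at the endpoints; the strict inequality in the lemma statement is a minor imprecision shared by both, and what is actually needed downstream is only that $f$ is increasing on $[0,\tfrac12]$.
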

\begin{proof}
  It is challenging to directly analyze the derivative of
  \math{f(x;z)} in \r{eq:P0}. We approach the problem indirectly via
  the Mittag-Leffler expansion of \math{1/\sin^2},
  \mld{
    \frac{1}{\sin^2(x)} = \sum_{n\in \Z} \frac{1}{(x + n\pi)^2}.
  }
  The series
  absolutely converges, and the sum of term by term derivatives also
  absolutely converges. Hence, we can get derivatives via term by term
  differentiation.
  Using the Mittag-Leffler expansion, we have that
  \mld{
    f(x;z)=
    \frac{L^2\sin^2(\pi x)}{\pi^2}
    \sum_{n\in \Z}
    \frac{1}{(z+x+nL)^2}+
    \frac{1}{(z+1-x+nL)^2}
    .
  }
  Let \math{a_n=z+x+nL} and \math{b_n=z+1-x+nL}. Taking term by term
  derivatives, after some algebra, we get
  \mld{
    f'(x;z)=
    \frac{2L^2\sin^2(\pi x)}{\pi^2}
    \sum_{n\in \Z}
    \pi\cot(\pi x)\left(\frac{1}{a_n^2}+\frac{1}{b_n^2}\right)
    -
    \left(\frac{1}{a_n^3}-\frac{1}{b_n^3}\right).
    \label{eq:Pderiv}
  }
  We prove every term in the sum is positive.
  Suppose some term is not positive. So,
  \mld{
   \pi\cot(\pi x)\left(\frac{1}{a_n^2}+\frac{1}{b_n^2}\right)
    -
    \left(\frac{1}{a_n^3}-\frac{1}{b_n^3}\right)
    \le
    0
    .
    \label{eq:P5}
  }
  Using \math{b_n-a_n=1-2x} and after some algebra, \r{eq:P5} implies
  \mld{
    \frac{\pi\cot(\pi x)}{1-2x}
    \le
    \frac{1}{a_nb_n}+\frac{1}{a_n^2+b_n^2}.
    \label{eq:P6}
  }
  We now analyze the RHS. When \math{n\ge 0}, since \math{z\ge 1},
  \math{a_n\ge 1+x} and \math{b_n\ge 2-x}. This means
  \eqar{
    a_nb_n&\ge& 2+x(1-x)\ge 2;\\
    a_n^2+b_n^2&\ge&4+x^2+(1-x)^2\ge 4.
  }
  Similarly, when \math{n\le -1}, since \math{z\le L-2},
  \math{a_n\le -(2-x)} and \math{b_n\le -(2-(1-x))}. Hence,
  \eqar{
    a_nb_n&\ge& (2-x)(2-(1-x))=2+x(1-x)\ge 2;\\
    a_n^2+b_n^2&\ge&(2-x)^2+(2-(1-x))^2=4+x^2+(1-x)^2\ge 4.
  }
  In all cases, \math{a_nb_n\ge 2} and \math{a_n^2+b_n^2\ge 4}.
  We conclude from \r{eq:P6} that
    \mld{
    \frac{\pi\cot(\pi x)}{1-2x}
    \le
    \frac{3}{4}.
  }
    But, \math{\pi\cot(\pi x)=\pi\tan(\pi(1-2x)/2)} and since
    \math{\tan(t)\ge t} for \math{t\in[0,\pi/2]},
    \math{\pi\cot(\pi x)\ge \pi^2(1-2x)/2}. Hence,
    \mld{
      \frac{\pi\cot(\pi x)}{1-2x}\ge \frac{\pi^2}{2},
    }
    which is a contradiction. Hence, every term of the sum in \r{eq:Pderiv}
    is positive and \math{f'(x;z)>0}.  
\end{proof}
We are now ready to prove Theorem~\ref{theorem:P1}, which we restate
here for convenience.
\techPone*
\begin{proof}
  We consider \math{L} even and odd separately. When \math{L} is even,
  \math{\floor{L/2}={L/2}}.
  By Lemma~\ref{lemma:P3}, 
  \mld{
    H_L(x;M)
    =
    \begin{cases}
      \displaystyle
      H_L(x;L/2)-\sum_{z=M}^{L/2-1}f(x;z)
    =
    L^2-\sum_{z=M}^{L/2-1}f(x;z),&M\le L/2;
    \\[20pt]
      \displaystyle
    H_L(x;L/2)+\sum_{z=L/2}^{M-1}f(x;z)
    =
    L^2+\sum_{z=L/2}^{M-1}f(x;z),&L/2<M\le L.
    \end{cases}
  }
  By Lemma~\ref{lemma:P4}, in the first case above,
  \math{\sum_{z=M}^{L/2-1}f(x;z)} is increasing
  so \math{H_L(x;M)} is non-increasing. In the second case above,
  for \math{M<L}, the sum \math{\sum_{z=L/2}^{M-1}f(x;z)}
  is also increasing, so  \math{H_L(x;M)} is non-decreasing.
  When \math{M=L}, using
  the symmetry in Lemma~\ref{lemma:P2}, the sum from
  \math{z=L/2} to \math{z=L-1} is the same
  as the sum of the first \math{L/2} terms from
  \math{z=0} to \math{z=L/2-1}. So, \math{H_L(x;L)=2L^2}, which is also
  non-decreasing.
  
  Consider \math{L} odd, in which case  \math{\floor{L/2}={(L-1)/2}}.
  For \math{M\le (L-1)/2},
  \mld{
    H_L(x;M)=H_L(x;(L-1)/2)-\sum_{z=M}^{(L-1)/2-1}f(x;z)
    =L^2-\frac{\sin^2(\pi x)}{\cos^2(\pi(\frac12-x)/L)}
    -\sum_{z=M}^{(L-1)/2-1}f(x;z).
  }
  Again, by Lemma~\ref{lemma:P4},
  \math{\sum_{z=M}^{(L-1)/2-1}f(x;z)} is increasing
  so \math{H_L(x;M)} is non-increasing.
  Now consider the cases when \math{(L+1)/2\le M\le L}. 
  First consider \math{M=(L+1)/2}. We have the identity
  \mld{
    f(x;(L-1)/2)
    =
    \frac{2 \sin^2(\pi x)}{\cos^2(\pi(\frac12-x)/L)}.
  }
  So,
  \mld{
    H_L(x;(L+1)/2)
    =
    H_L(x;(L-1)/2)+f(x;(L-1)/2)
    =
    L^2+\frac{\sin^2(\pi x)}{\cos^2(\pi(\frac12-x)/L)},
  }
  an increasing function.
  For   \math{(L+3)/2\le M< L},
  \mld{
    H_L(x;M)
    =H_L(x;(L+1)/2)+\sum_{z=(L+3)/2}^{M-1}f(x;z).
  }  
  Since \math{H_L(x;(L+1)/2)} is increasing and, by Lemma~\ref{lemma:P4},
  the sum \math{\sum_{z=(L+3)/2}^{M-1}f(x;z)} is increasing when
  \math{M<L}, it follows that \math{H_L(x;M)}
  is increasing. Lastly, consider \math{M=L}. The symmetry
  relation in Lemma~\ref{lemma:P2} gives
  \mld{
    H_L(x;M)
    =
    2H_L(x;(L-1)/2)+f(x;(L-1)/2)
    =
    2L^2,
  }
  which is non-decreasing,
  concluding the proof of Theorem~\ref{theorem:P1}.
\end{proof}

\subsection{Proof of Lemma~\ref{lemma:techHHbound}}
\label{section:app-techHHbound}

To prove Lemma~\ref{lemma:techHHbound},
we need a perturbation lemma for ratios of sinusoids.
The following
derivative bound is essential to proving this perturbation lemma.
\begin{lemma}\label{lemma:deriv-bound}
  For a positive integer \math{L},
  \math{\displaystyle\absof{\frac{d\ }{du} \left(\frac{\sin^2(\pi u)}{\sin^2(\pi u/L)}\right)}\le\frac{2\pi(L^2-1)}{3}}.
\end{lemma}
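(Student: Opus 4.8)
The plan is to recognize $g(u):=\sin^2(\pi u)/\sin^2(\pi u/L)$ as a rescaled Fej\'er kernel, i.e.\ a finite trigonometric polynomial with explicit nonnegative coefficients; once that identity is in hand, the derivative bound is immediate and elementary.

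\textbf{Step 1 (trigonometric-polynomial identity).} I would start from the geometric sum $\sum_{k=0}^{L-1}e^{2\pi i k u/L}=(e^{2\pi i u}-1)/(e^{2\pi i u/L}-1)$ and use $e^{2\pi i u}-1=e^{i\pi u}\,2i\sin(\pi u)$ together with $e^{2\pi i u/L}-1=e^{i\pi u/L}\,2i\sin(\pi u/L)$ to obtain
\[
\frac{\sin(\pi u)}{\sin(\pi u/L)}=\sum_{k=0}^{L-1}e^{i\pi u(2k-L+1)/L}.
\]
Since this quantity is real, it equals its own conjugate; multiplying it by its conjugate and collecting the resulting $L^2$ exponentials according to the frequency difference $m=k-j$ (each value $m\in\{-(L-1),\dots,L-1\}$ occurring with multiplicity $L-|m|$) gives
\[
g(u)=\frac{\sin^2(\pi u)}{\sin^2(\pi u/L)}=L+2\sum_{m=1}^{L-1}(L-m)\cos\!\Big(\tfrac{2\pi m u}{L}\Big).
\]
This is first derived where $\sin(\pi u/L)\neq 0$, but since both sides are finite trigonometric polynomials the identity extends to all $u$ by continuity.

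\textbf{Step 2 (differentiate and estimate).} Differentiating the finite sum term by term,
\[
g'(u)=-\frac{4\pi}{L}\sum_{m=1}^{L-1}m(L-m)\sin\!\Big(\tfrac{2\pi m u}{L}\Big),
\]
so bounding $|\sin|\le 1$ termwise gives $|g'(u)|\le\frac{4\pi}{L}\sum_{m=1}^{L-1}m(L-m)$. Using $\sum_{m=1}^{L-1}m(L-m)=L\cdot\frac{(L-1)L}{2}-\frac{(L-1)L(2L-1)}{6}=\frac{L(L^2-1)}{6}$, the bound becomes $|g'(u)|\le\frac{4\pi}{L}\cdot\frac{L(L^2-1)}{6}=\frac{2\pi(L^2-1)}{3}$, which is exactly the claimed inequality (for $L=1$ the sum is empty and both sides vanish).

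\textbf{Main obstacle.} I expect no genuine difficulty here: the only care needed is the bookkeeping in Step 1 — the geometric-series factorization and the reindexing $m=k-j$ — plus the routine remark that the identity, derived away from the zeros of $\sin(\pi u/L)$, holds everywhere because both sides are trigonometric polynomials. Everything else is term-by-term differentiation of a finite sum and a closed-form evaluation of $\sum m(L-m)$.
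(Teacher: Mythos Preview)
Your proof is correct and is essentially the same as the paper's: both start from the identity $\sin^2(\pi u)/\sin^2(\pi u/L)=\bigl|\sum_{k=0}^{L-1}e^{2\pi i k u/L}\bigr|^2$, differentiate the resulting trigonometric polynomial term by term, and bound by the triangle inequality. The only cosmetic difference is that you first collect the double sum into the real Fej\'er-kernel form $L+2\sum_{m=1}^{L-1}(L-m)\cos(2\pi m u/L)$ before differentiating, whereas the paper differentiates the raw double sum $\sum_{k,m}e^{2\pi i(k-m)u/L}$ directly and arrives at $\frac{2\pi}{L}\sum_{k,m}|k-m|$; your sum $2\sum_{m}m(L-m)$ and the paper's $\sum_{k,m}|k-m|$ are the same quantity, so the two arguments are equivalent.
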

\begin{proof}
  Using \math{|\sum_{k=0}^{L-1}e^{(2\pi i)ku/L}|^2=\sin^2(\pi u)/\sin^2(\pi u/L)},
  we have
  \mld{
    \absof{\frac{d\ }{du} \left(\frac{\sin^2(\pi u)}{\sin^2(\pi u/L)}\right)}
    =
    \absof{\frac{d\ }{du}\sum_{k=0}^{L-1}\sum_{m=0}^{L-1}e^{(2\pi i)(k-m)u/L}}.
  }
  Taking the derivative inside the summations pulls down a
  factor \math{(2\pi i)(k-m)/L}. We get an upper bound by taking
  the absolute value inside the sum, which gives
  \mld{
    \absof{\frac{d\ }{du} \left(\frac{\sin^2(\pi u)}{\sin^2(\pi u/L)}\right)}
    \le
    \frac{2\pi}{L}\sum_{k=0}^{L-1}\sum_{m=0}^{L-1}\absof{k-m}
    =
    \frac{2\pi(L^2-1)}{3}.
  }
  The last step uses
  \math{\sum_{k=0}^{L-1}\sum_{m=0}^{L-1}\absof{k-m}=L(L^2-1)/3}.    
\end{proof}
\begin{lemma}\label{lemma:deriv-bound-2}
  For \math{L>0} and \math{0\le u\le L/2},
  \math{\displaystyle\absof{\frac{d\ }{du} \left(\frac{\sin^2(\pi u)}{\sin^2(\pi u/L)}\right)}\le\frac{\pi L^2}{4}\left(\frac{1}{u^2}+\frac{1}{u^3}\right)}.
\end{lemma}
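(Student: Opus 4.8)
The plan is to differentiate $g(u)=\sin^2(\pi u)/\sin^2(\pi u/L)$ directly and then bound the two resulting pieces. The inequality is vacuous at $u=0$ since the right-hand side is infinite, so I assume $0<u\le L/2$; the hypothesis $u\le L/2$ guarantees $\pi u/L\in(0,\pi/2]$, hence $\sin(\pi u/L)>0$ and $\cos(\pi u/L)\ge 0$, which is what makes everything below well behaved. By the quotient rule,
\mld{
  \frac{d\ }{du}\left(\frac{\sin^2(\pi u)}{\sin^2(\pi u/L)}\right)
  =
  \frac{2\pi\sin(\pi u)\cos(\pi u)}{\sin^2(\pi u/L)}
  -
  \frac{2\pi\sin^2(\pi u)\cos(\pi u/L)}{L\sin^3(\pi u/L)},
}
and I will bound the two terms on the right separately and combine them with the triangle inequality.

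The elementary inputs are: $|\sin(\pi u)\cos(\pi u)|=\frac12|\sin(2\pi u)|\le\frac12$; $|\sin(\pi u)|\le 1$; $0\le\cos(\pi u/L)\le 1$ on the range in question; and Jordan's inequality $\sin t\ge 2t/\pi$ for $t\in[0,\pi/2]$, which applied with $t=\pi u/L$ gives $\sin(\pi u/L)\ge 2u/L$. With these, the first term is at most $\pi/\sin^2(\pi u/L)\le \pi L^2/(4u^2)$, and the second term is at most $2\pi/(L\sin^3(\pi u/L))\le \pi L^2/(4u^3)$. Summing the two bounds yields $|g'(u)|\le \frac{\pi L^2}{4}\left(1/u^2+1/u^3\right)$, which is the claim. (For contrast, the exponential-sum bound of Lemma~\ref{lemma:deriv-bound} is uniform of order $L^2$; the point of this lemma is the decay in $u$, useful precisely when $u$ is comparable to $L/2$.)

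I do not foresee a genuine obstacle. The only point that needs care is recovering the sharp constant $\frac14$: one must (i) use the identity $\sin(\pi u)\cos(\pi u)=\frac12\sin(2\pi u)$ instead of bounding $|\sin(\pi u)|$ and $|\cos(\pi u)|$ separately by $1$, and (ii) invoke the hypothesis $u\le L/2$ exactly where it matters, since it is what keeps $\pi u/L$ inside $[0,\pi/2]$, the range where Jordan's inequality and the sign condition $\cos(\pi u/L)\ge 0$ are both available. Past $u=L/2$ the denominator $\sin(\pi u/L)$ turns back toward $0$ and no bound of this shape can hold, which is why the restriction is present.
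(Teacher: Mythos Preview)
Your proof is correct and essentially identical to the paper's: both compute the derivative via the quotient rule, split by the triangle inequality, and bound the two pieces using $|\sin(2\pi u)|\le 1$, $|\sin^2(\pi u)|,|\cos(\pi u/L)|\le 1$, and the chord bound $\sin(\pi u/L)\ge 2u/L$ on $[0,\pi/2]$. The only cosmetic difference is that the paper already writes the first numerator as $\sin(2\pi u)$ rather than $2\sin(\pi u)\cos(\pi u)$.
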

\begin{proof}
  The
  absolute value of the derivative is
  \mld{
    \pi\absof{
      \frac{\sin(2\pi u)}{\sin^2(\pi u/L)}-
      \frac{2\sin^2(\pi u)\cos(\pi u/L)}{L\sin^3(\pi u/L)}
    }
    \le
    \pi\left(
      \frac{|\sin(2\pi u)|}{\sin^2(\pi u/L)}+
      \frac{2|\sin^2(\pi u)||\cos(\pi u/L)|}{L\sin^3(\pi u/L)}
      \right)
      \label{eq:deriv-bound-2a}
  }
  The sine function on
  \math{[0,\pi/2]} is above the chord connecting the points
  \math{(0,0), (\pi/2,1)}, i.e.,
  \math{\sin(\pi u/L)\ge 2u/L}. The lemma follows by
  using this bound together with \math{\absof{\sin},\absof{\cos}\le 1}.
\end{proof}
Lemma~\ref{lemma:deriv-bound} is a useful absolute bound
on the derivative for small \math{u}, tending to 0.
Lemma~\ref{lemma:deriv-bound-2} gives a more useful bound for
large \math{u}, for example \math{u\ge 1}.
Using \math{|f(u+h)-f(u)|\le |h|\sup_{t\in[u,u+h]}|f'(t)|} and
Lemmas~\ref{lemma:deriv-bound} and~\ref{lemma:deriv-bound-2},
we get the following
result using \math{f(u)=\sin^2(\pi u)/\sin^2(\pi u/L)}.
\begin{lemma}\label{lemma:perturbation}
   For a positive integer \math{L} and \math{|\epsilon|\le 1/2},
   \mld{\displaystyle
     \absof{\frac{\sin^2(\pi u(1 + \epsilon))}{\sin^2(\pi u(1+\epsilon)/L)} - \frac{\sin^2(\pi u )}{\sin^2(\pi u/L)}}
     \le
     \begin{cases}
       \displaystyle
       \frac{2\pi|\epsilon|u(L^2-1)}{3},&0\le u\le 1;\\[20pt]
       \displaystyle
       \frac{\pi L^2|\epsilon|}{4}\left(\frac{1}{u(1-|\epsilon|)^2}+\frac{1}{u^2(1-|\epsilon|)^3}\right),&\displaystyle
       1\le u\le \frac{L}{2(1+|\epsilon|)}.
       \end{cases}
   }
 \end{lemma}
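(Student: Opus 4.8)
The proof is a direct application of the mean value inequality to the single function $f(u)=\sin^2(\pi u)/\sin^2(\pi u/L)$, so the plan is short. I would write the quantity to be bounded as $f\big(u(1+\epsilon)\big)-f(u)$ and regard this as an increment of size $h=u\epsilon$, so that $|h|=u|\epsilon|$. Since $f(u)=\big|\sum_{k=0}^{L-1}e^{(2\pi i)ku/L}\big|^2$ is a trigonometric polynomial, it extends to a smooth function on all of $\mathbb{R}$ (the apparent poles at $u\in L\mathbb{Z}$ are removable), so the stated inequality $|f(u(1+\epsilon))-f(u)|\le u|\epsilon|\cdot\sup_{t\in I}|f'(t)|$ holds, where $I$ is the closed interval with endpoints $u$ and $u(1+\epsilon)$; note $I\subseteq[\,u(1-|\epsilon|),\,u(1+|\epsilon|)\,]$ since $1+\epsilon\ge\tfrac12>0$.

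For the first case, $0\le u\le 1$, I would simply invoke Lemma~\ref{lemma:deriv-bound}, which bounds $|f'(t)|$ by $2\pi(L^2-1)/3$ for \emph{every} $t$ (there is no domain restriction). Multiplying by $|h|=u|\epsilon|$ yields the stated bound $2\pi|\epsilon|u(L^2-1)/3$.

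For the second case, $1\le u\le L/2(1+|\epsilon|)$, the point is to use the sharper bound of Lemma~\ref{lemma:deriv-bound-2}, which is valid only on $(0,L/2]$, so I must check that $I\subseteq(0,L/2]$. The left endpoint satisfies $t\ge u(1-|\epsilon|)\ge u/2\ge\tfrac12>0$ (using $|\epsilon|\le\tfrac12$ and $u\ge 1$), and the right endpoint satisfies $t\le u(1+|\epsilon|)\le L/2$ by the hypothesis on $u$; this is precisely where both hypotheses of the lemma are consumed. On $I$ we then get $|f'(t)|\le\frac{\pi L^2}{4}\big(1/t^2+1/t^3\big)$, and since $t\mapsto 1/t^2+1/t^3$ is decreasing we may replace $t$ by its smallest value $u(1-|\epsilon|)$, so that $\sup_{t\in I}|f'(t)|\le\frac{\pi L^2}{4}\big(\frac{1}{u^2(1-|\epsilon|)^2}+\frac{1}{u^3(1-|\epsilon|)^3}\big)$. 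Multiplying by $|h|=u|\epsilon|$ gives the second branch.

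The only place requiring any care, and hence the ``main obstacle'' such as it is, is the second case: correctly pinning down the interval $I$ over which $\sup|f'|$ is taken and checking $I\subseteq(0,L/2]$ so that Lemma~\ref{lemma:deriv-bound-2} applies; everything else is arithmetic.
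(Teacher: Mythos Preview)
Your proposal is correct and follows essentially the same approach as the paper: apply the mean value inequality $|f(u(1+\epsilon))-f(u)|\le u|\epsilon|\sup_{t\in I}|f'(t)|$ with $f(u)=\sin^2(\pi u)/\sin^2(\pi u/L)$, then invoke Lemma~\ref{lemma:deriv-bound} for the first case and Lemma~\ref{lemma:deriv-bound-2} (maximized at $t=u(1-|\epsilon|)$) for the second. Your explicit check that $I\subseteq(0,L/2]$ and the remark that $f$ extends smoothly (as a trigonometric polynomial) are a bit more careful than the paper's terse version, but the argument is the same.
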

\begin{proof}
  The first part uses the absolute bound in
  Lemma~\ref{lemma:deriv-bound}.
  The second part uses Lemma~\ref{lemma:deriv-bound-2}, taking the maximum
  on  \math{[(1-|\epsilon|)u,(1+|\epsilon|)u]}. To apply
  Lemma~\ref{lemma:deriv-bound-2} on this interval
  requires \math{(1+|\epsilon|)u\le L/2}.
\end{proof}
For convenience, let us restate the perturbation of
\math{H_L(x,M)} for small \math{\epsilon}, define
\mld{
  H_L(x;M,\epsilon)=
  \sum_{z=0}^{M-1}
  \frac{\sin^2(\pi(z+ x)(1+\epsilon))}{\sin^2(\pi(z+x)(1+\epsilon)/L)}
  +
  \frac{\sin^2(\pi(z+1-x)(1+\epsilon))}{\sin^2(\pi(z+1-x)(1+\epsilon)/L)}.
  \label{eq:P-defH-e}
}
When \math{\epsilon=0}, \math{H_L(x;M,0)=H_L(x,M)},
because \math{z} is an integer. When the context is clear, we will
drop the dependence on \math{L,M,\epsilon} and simply write
\math{H(x)}. Using Lemma~\ref{lemma:perturbation}, we can bound
\math{|H_L(x;M,\epsilon)-H_L(x,M)|}.
\begin{lemma}\label{lemma:HH-bound}
  For \math{1\le M\le L/2(1+|\epsilon|)},
  \mld{H_L(x;M,\epsilon)=H_L(x,M)+\cl A_{L}(x,M),}
  where
  \mld{
    |\cl A_{L}(x,M)|
    \le  
    2\pi L^2|\epsilon|
    \left(\frac{1}{3}
    +
    \frac{\log_2M+\pi^2/6}{4(1-|\epsilon|)^3}\right)
    \label{eq:upper-A}
  }
\end{lemma}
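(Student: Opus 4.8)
The plan is to bound $\cl A_L(x,M)=H_L(x;M,\epsilon)-H_L(x;M)$ termwise and then sum, using Lemma~\ref{lemma:perturbation} as the workhorse. Writing $\Delta(u)=\dfrac{\sin^2(\pi u(1+\epsilon))}{\sin^2(\pi u(1+\epsilon)/L)}-\dfrac{\sin^2(\pi u)}{\sin^2(\pi u/L)}$, the two defining sums differ by $\cl A_L(x,M)=\sum_{z=0}^{M-1}\bigl(\Delta(z+x)+\Delta(z+1-x)\bigr)$, so it suffices to control each $|\Delta(\cdot)|$ and add up. The arguments $z+x$ and $z+1-x$ are small (at most $1$) only for $z=0$, so the natural move is to peel off the $z=0$ summand and bound it with the first (small-$u$) case of Lemma~\ref{lemma:perturbation}, then bound the $z=1,\dots,M-1$ summands with the second (large-$u$) case. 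The hypothesis $M\le L/2(1+|\epsilon|)$ is precisely what puts every argument occurring with $z\ge1$ into the admissible window $[1,\,L/2(1+|\epsilon|)]$ of that second case, since for $x\in[0,1]$ one has $1\le z\le z+x<M$ and $1\le z\le z+1-x\le M$; and for $z=0$ both $x$ and $1-x$ lie in $[0,1]$.

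For $z=0$, the first case of Lemma~\ref{lemma:perturbation} gives
\[
|\Delta(x)|+|\Delta(1-x)|\le\frac{2\pi|\epsilon|(L^2-1)}{3}\bigl(x+(1-x)\bigr)=\frac{2\pi|\epsilon|(L^2-1)}{3}\le\frac{2\pi L^2|\epsilon|}{3},
\]
where the cancellation $x+(1-x)=1$ is exactly what keeps the leading constant at $\tfrac13$ rather than $\tfrac23$. For $1\le z\le M-1$, the second case gives $|\Delta(u)|\le\frac{\pi L^2|\epsilon|}{4}\bigl(\frac{1}{u(1-|\epsilon|)^2}+\frac{1}{u^2(1-|\epsilon|)^3}\bigr)$ for $u\in\{z+x,\,z+1-x\}$. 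Summing over $z$ and using $z+x\ge z$ and $z+1-x\ge z$ (valid since $x\in[0,1]$) to replace each argument by the integer $z$,
\[
\sum_{z=1}^{M-1}\Bigl(\tfrac{1}{z+x}+\tfrac{1}{z+1-x}\Bigr)\le 2\sum_{z=1}^{M-1}\tfrac1z,
\qquad
\sum_{z=1}^{M-1}\Bigl(\tfrac{1}{(z+x)^2}+\tfrac{1}{(z+1-x)^2}\Bigr)\le 2\sum_{z=1}^{\infty}\tfrac1{z^2}=\tfrac{\pi^2}{3}.
\]
The one extra ingredient is the clean estimate $\sum_{z=1}^{M-1}\frac1z\le\log_2 M$ for every integer $M\ge1$, which I would prove by induction: it is an equality at $M=1$, and the step $\sum_{z=1}^{M}\frac1z\le\log_2 M+\frac1M\le\log_2(M+1)$ follows from $\frac1M\le\log_2\!\bigl(1+\tfrac1M\bigr)$, i.e.\ $\ln(1+t)\ge t\ln2$ on $(0,1]$ (the difference $\ln(1+t)-t\ln2$ is concave and vanishes at $t=0$ and $t=1$). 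Feeding these two sums into the $z\ge1$ bound, adding the $z=0$ bound, and using $(1-|\epsilon|)^{-2}\le(1-|\epsilon|)^{-3}$ and $L^2-1\le L^2$ to consolidate, yields
\[
|\cl A_L(x,M)|\le\frac{2\pi L^2|\epsilon|}{3}+\frac{\pi L^2|\epsilon|}{4}\cdot\frac{2\log_2 M+\pi^2/3}{(1-|\epsilon|)^3}
=2\pi L^2|\epsilon|\Bigl(\tfrac13+\tfrac{\log_2 M+\pi^2/6}{4(1-|\epsilon|)^3}\Bigr),
\]
which is the claimed bound \r{eq:upper-A}. (The endpoints $x=0,1$ are covered by continuity, reading the sinusoid ratios as their removable-singularity limits, and $x=1$ anyway reduces to $x=0$ by $H_L(x;M,\epsilon)=H_L(1-x;M,\epsilon)$.)

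The main obstacle is bookkeeping rather than depth: one must check that every argument handed to Lemma~\ref{lemma:perturbation} falls in the correct regime — this is where the hypothesis $M\le L/2(1+|\epsilon|)$ is used, and why the $z=0$ term must be separated out — and one must take a little care to land the harmonic sum on exactly $\log_2 M$ and the reciprocal-square sum on exactly $\pi^2/3$, so that the stated constants $\tfrac14$ and $\tfrac{\pi^2}{6}$ emerge rather than a looser $\ln M+\gamma$. Everything else is the routine triangle-inequality-plus-monotone-majorant argument already flagged in the proof sketch.
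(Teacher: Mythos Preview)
Your proof is correct and follows essentially the same route as the paper: separate the $z=0$ term and bound it with the small-$u$ case of Lemma~\ref{lemma:perturbation}, bound the remaining terms with the large-$u$ case after replacing $z+x$ and $z+1-x$ by $z$, then sum using $\sum_{z=1}^{M-1}1/z\le\log_2 M$ and $\sum_{z\ge1}1/z^2\le\pi^2/6$. Your version is marginally more explicit (supplying the induction for the harmonic-number bound and the consolidation via $(1-|\epsilon|)^{-2}\le(1-|\epsilon|)^{-3}$), but the structure and the key lemma are identical to the paper's.
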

\begin{proof}
It is convenient to define the perturbation
\math{\Delta(x;z)},
\mld{
  \Delta(x;z)
  =
    \frac{\sin^2(\pi(z+x)(1+\epsilon))}{\sin^2(\pi(z+x)(1+\epsilon)/L)}
  -
  \frac{\sin^2(\pi (z+x))}{\sin^2(\pi(z+x)/L)}.
}
Then, using
\math{\sin^2(\pi(z+u))=\sin^2(\pi u)} for integer \math{z},
we have that
\eqar{
  H_L(x;M,\epsilon)
  &=&
  \sum_{z=0}^{M-1}
  \frac{\sin^2(\pi x)}{\sin^2(\pi(z+x)/L)}
  +
  \frac{\sin^2(\pi x)}{\sin^2(\pi(z+1-x)/L)}
  +\cl{A}(x)\\
  &=&
  H_{L}(x;M)+\cl{A}(x),
  \label{eq:H-approx}
}
where, 
\eqar{
  \cl A(x)&=&
  \sum_{z=0}^{M-1}
  \Delta(x;z)+\Delta(1-x;z)
  \\
  &=&
  \Delta(x;0)+\Delta(1-x;0)+
  \sum_{z=1}^{M-1}
  \Delta(x;z)+\Delta(1-x;z)
  .
  \label{eq:perturb-error-A}
}
For \math{x\in[0,\frac12]},
Lemma~\ref{lemma:perturbation} with \math{u=z+x} gives
\mld{\displaystyle
  \absof{\Delta(x;z)}
  \le
  \begin{cases}
    \displaystyle
    \frac{2\pi|\epsilon|L^2 x}{3},&z=0;\\[20pt]
    \displaystyle
    \frac{\pi L^2|\epsilon|}{4(1-|\epsilon|)^3}\left(\frac{1}{z}+\frac{1}{z^2}\right),&\displaystyle
    1\le z\le \frac{L}{2(1+|\epsilon|)-1}.
  \end{cases}
  \label{eq:del-bound}
}
Since \math{M\le{L}/{2(1+|\epsilon|)}},
  we can use~\r{eq:del-bound} in~\r{eq:perturb-error-A} to get
  \mld{
    |\cl A(x)|
    \le
    \frac{2\pi|\epsilon|L^2 x}{3}
    +
    \frac{2\pi|\epsilon|L^2 (1-x)}{3}
    +
    \frac{2\pi L^2|\epsilon|}{4(1-|\epsilon|)^3}
    \sum_{z=1}^{M-1}\frac{1}{z}+\frac{1}{z^2}
  }
  The lemma follows after some algebra because
  \begin{inparaenum}[(i)]
  \item
    the sum involving \math{1/z} is the Harmonic number
    \math{\cl H_{M-1}}, which is at most \math{\log_2M};
  \item
    the sum involving
    \math{1/z^2} is at most \math{\pi^2/6}.
\end{inparaenum}
\end{proof}

\subsection{Proof of Lemma~\ref{lemma:H-bounds}}
\label{section:lemma:H-bounds}

In the limits \math{x\rightarrow 0,1},
only the \math{z=0} term in the sum defining
\math{H_L(x;M)} contributes, and in that limit,
\math{H_L(x;M)=L^2}.
To get a lower bound, Theorem~\ref{theorem:P1} says
\math{H_{L}(x;M)} is minimized at \math{x=1/2}, so
\mld{
  H_{L}(x;M)\ge 2\sum_{z=0}^{M-1}
  \frac{1}{\sin^2(\pi(z+\frac12)/L)}
  \ge
  \frac{8L^2}{\pi^2}\sum_{z=0}^{M-1}\frac{1}{(2z+1)^2}.
}
In the last step, we used \math{1/\sin(u)\ge 1/u}.
To lower bound the sum, we use integration as follows.
\mld{
  \sum_{z=0}^{M-1}\frac{1}{(2z+1)^2}
  =
  \sum_{z=0}^{\infty}\frac{1}{(2z+1)^2}
  -\sum_{z=M}^{\infty}\frac{1}{(2z+1)^2}
    \ge
    \sum_{z=0}^{\infty}\frac{1}{(2z+1)^2}
    -\int_{M-1}^\infty dz\ \frac{1}{(2z+1)^2}.
}
Using \math{\sum_{z=0}^\infty1/(2z+1)^2=\pi^2/8} and
performing the integral gives the first part of the lower bound
\mld{
  H_{L}(x;M)
  \ge
  L^2\left(1 - \frac{4}{\pi^2} \cdot \frac{1}{2M - 1} \right).
}
To get the second part of the lower bound, use
\math{1/\sin(x)\ge1/x} to get
\eqar{
  H_L(x;M)
  &\ge&
  \frac{L^2\sin^2(\pi x)}{\pi^2}
  \sum_{z=0}^{M-1}
  \frac{1}{(z+x)^2}
  +
  \frac{1}{(z+1-x)^2}
  \\
  &=&
  \frac{L^2\sin^2(\pi x)}{\pi^2}
  \left(
  \sum_{z=0}^{\infty}
  \frac{1}{(z+x)^2}
  +
  \frac{1}{(z+1-x)^2}
  -
  \sum_{z=M}^{\infty}
  \frac{1}{(z+x)^2}
  +
  \frac{1}{(z+1-x)^2}
  \right)
  \\
  &=&
  L^2
  \left(
  1
  -
  \frac{\sin^2(\pi x)}{\pi^2}\sum_{z=M}^{\infty}
  \frac{1}{(z+x)^2}
  +
  \frac{1}{(z+1-x)^2}
  \right).
}
In the last step we used the DiGamma Lemma~\ref{lemma:digamma}.
Using \math{\sum_{z=M}^{\infty}
  {1}/{(z+t)^2}\le \int_{M-1}^\infty dz/(z+t)^2},
\mld{
  H_L(x;M)
  \ge
    L^2
  \left(
  1
  -
  \frac{\sin^2(\pi x)}{\pi^2}\cdot\frac{2M-1}{M(M-1)+x(1-x)}
  \right).
}
For the upper bound,
Lemma~\ref{lemma:P3} says \math{H_L(x,\floor{L/2})\le L^2}
with equality when \math{L} is even.
And, we can write
\mld{
  H_{L}(x;M)
  =
  H_{L}(x;\floor{L/2})- \sin^2(\pi x) \sum_{z = M}^{\floor{L / 2} - 1} \frac{1}{\sin^2(\pi (z + x) / L)} + \frac{1}{\sin^2(\pi (z + 1 - x) / L)}
  \label{eq:upper-1}
}
To bound the sum, the function
\math{1/\sin^2(u)} on the open interval
\math{(0,\pi)} has a positive second derivative,
\mld{
  \frac{d^2\phantom{u}}{d u^2} \left(\frac{1}{\sin^2(u)}\right)
  = \frac{4\cos^2(u) + 2}{\sin^4(u)}> 0.
}
By Jensen's inequality, for \math{u_1,u_2\in (0,\pi)},
\math{
  {1}/{\sin^2u_1}+{1}/{\sin^2u_2}\ge {2}/{\sin^2((u_1+u_2)/{2})}.
}
Hence,
\eqar{
  H_{L}(x;M)
  &\le&
  L^2-2\sin^2(\pi x) \sum_{z = M}^{\floor{L / 2} - 1} \frac{1}{\sin^2(\pi (z + \frac12) / L)}
  \\
  &\le&
  L^2-\frac{8L^2\sin^2(\pi x)}{\pi^2} \sum_{z = M}^{\floor{L / 2} - 1} \frac{1}{(2z + 1)^2}
  \\
  &\le&
  L^2-\frac{8L^2\sin^2(\pi x)}{\pi^2} \int_{M}^{\floor{L/2}}dz\ \frac{1}{(2z + 1)^2}
  \\
  &=&
  L^2
  -\frac{4L^2\sin^2(\pi x)}{\pi^2(2M+1)}+\frac{4L^2\sin^2(\pi x)}{\pi^2(2\floor{L/2}+1)}.
  \label{eq:upper-2}
}
The lemma follows after noting that
\math{2\floor{L/2}\ge L-1}.
\qedsymb

\section{Adapted Results from~\cite{ekera2024}} \label{sec:revised-ekera}

In this appendix, we first argue why getting $z = 0$ should not be considered a success. Then, we show why~\cite{ekera2024} overcount 1 states according to our paradigm.

\begin{algorithm}[ht]
  \caption{Order Finding when $r$ is $cm$-smooth}
  \label{alg:solve-order-r-cm-smooth}
  \begin{algorithmic}[1]
    \Require Generator $U$ of a cyclic group of order $r$ where $r$ is $cm$-smooth and $r < 2^m$.
    \Ensure $r$, the order of the cyclic group.
    \State Enumerate $\{p_1, p_2, \dots, p_\ell \}$, the primes $\leq cm$.
    \State $r \gets 1$.
    \State Define $f(x) = U^x$.
    \For{$p_i \in \{p_1, p_2, \dots, p_\ell \}$}
      \State Search in  $[0, m]$ to find the smallest $k_i$ such that $f( p_i^{k_i} \prod_{j \in [1, \ell] \setminus {i}} p_j^m ) = f(0)$.
      \State $r \gets r \cdot p_i^{k_i}$.
    \EndFor
    \State Return $r$.
  \end{algorithmic}
\end{algorithm}

\begin{lemma}
  If $r$ is $cm$-smooth and $r < 2^m$, then the order finding problem is solvable in $poly(cm)$ time.
\end{lemma}
\begin{proof}
  Algorithm~\ref{alg:solve-order-r-cm-smooth} is polynomial,
  as enumerating the primes $\leq cm$ takes $poly(cm)$ time. In term of correctness, we have $f( \prod_{i=1}^{\ell} p_i^m ) = f(0)$ since $r$ is $cm$-smooth and  $r < 2^m$. This implies $r \ | \prod_{i=1}^{\ell} p_i^m $. Since $r$ is $cm$-smooth, $\{p_1, \dots, p_\ell\}$ contains all prime factors of $r$, whose exponent
  is found in step 5.
\end{proof}

\noindent We note that this algorithm (there are similar algorithms in~\cite{ekera2024}) can also decide if $r$ is $cm$-smooth in $poly(cm)$ time by checking if $f\left( \prod_{i=1}^{\ell} p_i^m  \right) = f(0)$. This means we may rule out if $r$ is $cm$-smooth with classical computation before running any quantum computation. Thus, without loss of generality, we can assume $r$ is not $cm$-smooth,
by first running Algorithm~\ref{alg:solve-order-r-cm-smooth}.

\begin{lemma}
    If $r$ is not $cm$-smooth, for $z$ selected uniformly at random from $[0, r) \cap \Z$, the probability that no prime power greater than $cm$ divides $d = \gcd(r, z)$ is lower-bounded by 
    \begin{equation}
        1 - \frac{1}{c \log cm} - \frac{1}{r}.
    \end{equation}
\end{lemma}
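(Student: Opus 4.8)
The plan is to bound the complementary (``bad'') event by a union bound. The good event fails exactly when some prime power exceeding $cm$ divides $d=\gcd(r,z)$, i.e. when there is a prime $p$ and an exponent $k\ge1$ with $p^k>cm$, $p^k\mid r$, and $p^k\mid z$. First I would pin down the right witnesses. Write $r=\prod_p p^{a_p}$. Since $v_p(d)=\min(v_p(r),v_p(z))\le a_p$, a prime $p$ can witness the bad event only if $p^{a_p}>cm$; for each such $p$ set $w_p=p^{k_0(p)}$ with $k_0(p)=\min\{k\ge1:p^k>cm\}$, so that $cm<w_p$ and, because $k_0(p)\le a_p$, also $w_p\mid r$. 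The bad event is then contained in $\bigcup_p\{w_p\mid z\}$, the union taken over primes with $p^{a_p}>cm$. Using the minimal exceeding power $w_p$, rather than every power $p^k>cm$, is what simultaneously keeps the number of terms small and the per-term denominators large --- both needed to land the $\log$ in the denominator of the final bound.

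Second, since $z$ is uniform on $\{0,1,\dots,r-1\}$ and $w_p\mid r$, exactly $r/w_p$ of the $r$ values of $z$ are multiples of $w_p$, so $\PP[w_p\mid z]=1/w_p$ and hence $\PP[\text{bad}]\le\sum_p 1/w_p$. The crux is bounding this sum. The $w_p$ are powers of distinct primes, hence pairwise coprime, and each divides $r$, so $\prod_p w_p\mid r$ and therefore $\prod_p w_p\le r$. As every factor exceeds $cm$, the number $t$ of primes appearing satisfies $(cm)^t<r$, so $t<\log_{cm}r\le m/\log_2(cm)$, using $r<2^m$. Therefore $\PP[\text{bad}]\le t/(cm)<1/(c\log_2(cm))$, which gives the claimed lower bound; the additive $1/r$ in the statement is pure slack (for instance, it would absorb the extra probability incurred by reverting to the convention in which $z=0$, the trivial-divisor case, is declared a failure).

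I expect the combinatorial step --- arguing that the minimal dangerous prime powers are pairwise-coprime divisors of $r$, so that their count is $O(m/\log cm)$ --- to be the main obstacle; the rest is bookkeeping. Points to handle with a little care: the split between $k_0(p)=1$ (i.e. $p>cm$) and $k_0(p)\ge2$ when checking $cm<w_p\mid r$; the boundary case where no prime has $p^{a_p}>cm$, in which every prime power dividing $d\mid r$ is at most $cm$ and the good event has probability $1$; and the value $z=0$ (giving $d=r$), which needs no separate treatment since $w_p\mid0$ trivially and $z=0$ is simply one of the $r/w_p$ values already counted in $\PP[w_p\mid z]$.
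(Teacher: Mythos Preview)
Your argument is correct. The paper does not actually prove this lemma from scratch: it simply invokes \cite[Lemma~4.4]{ekera2024} and subtracts an extra $1/r$ to account for the value $z=0$ (where $d=\gcd(0,r)=r$, so any large prime-power factor of $r$ makes this a bad draw). What you have written is, in effect, a self-contained reconstruction of Eker\aa's union-bound argument: pick for each dangerous prime the \emph{minimal} power $w_p>cm$ dividing $r$, observe these are pairwise-coprime divisors of $r$ so their number $t$ satisfies $(cm)^t\le r<2^m$, and bound $\sum_p 1/w_p<t/(cm)<1/(c\log_2 cm)$. Your remark that the $-1/r$ is pure slack is accurate for your derivation; in the paper's framing it is not slack but the cost of conservatively declaring $z=0$ a failure before appealing to the cited lemma. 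Either way the stated bound follows, and your version has the advantage of being self-contained and of making clear that $z=0$ needs no special handling once the union bound is set up properly.
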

\begin{proof}
    This follows directly from~\cite{ekera2024}[Lemma 4.4] as we exclude the case $z = 0$ since $d = \gcd(0, r) = r$ has prime factor $> cm$.
\end{proof}
\noindent \textit{Notes:} We suspect that omission of the $1/r$ term is possible as we only use a simple union bound; however, this lies outside the scope of the present work.

\begin{restatable} []{theorem}{reviseekera}\label{theorem:revise-ekera}
    The quantum algorithm, in combination with the classical continued fractions-based or lattice-based post-processing, successfully recovers $r$ in a single run with probability at least
    \begin{equation}
        \left( \frac{r - 1}{r}\left(1  - \frac{1}{\pi^2} \left( \frac{2}{B} + \frac{1}{B^2} + \frac{1}{B^3} \right)\right) - \frac{\pi^2 (r - 1) (2B + 1)}{2^{m +\ell}}  \right) \cdot \left( 1 - \frac{1}{c \log(cm)} - \frac{1}{r} \right).
    \end{equation}
    for $m, \ell \in \Z_{>0}$ such that $2^m > r$ and $2^{m+\ell} >r^2$, $c \geq 1$, and $B \in [1, B_{\max}) \cap \Z$.\\
    \textit{Notes: In our analysis, $n$ is $m + \ell$ in \cite{ekera2024}.}
\end{restatable}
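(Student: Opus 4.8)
The plan is to adapt, rather than re-derive, Ekerå's two ingredients — the success probability of the quantum circuit together with continued-fractions-based or lattice-based post-processing (\cite[Theorem 4.9]{ekera2024}) and the $cm$-smoothness lemma (\cite[Lemma 4.4]{ekera2024}) — so that the trivial outcome $\hat\ell = 0$ (equivalently $z = 0$, $\hat r = 1$) is not counted as a success, and then to multiply them. Concretely, I would isolate exactly where the peak at $\hat\ell = 0$ enters each of Ekerå's arguments and subtract its contribution, leaving the rest of his analysis untouched.

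First, for the quantum part: \cite[Theorem 4.9]{ekera2024} lower-bounds the probability that the measured $\hat\ell$, after post-processing with search parameter $B$, lands within the required tolerance of some multiple $k\,2^n/r$ with $k \in \{0,1,\dots,r-1\}$. The index $k = 0$ is precisely the $\hat\ell = 0$ peak, which produces $z = 0$ and the trivial $\hat r = 1$. Deleting this index from Ekerå's sum replaces the count of ``good'' residues $r$ by $r-1$, which rescales both the leading factor and the ($B$-linear) error term; this yields exactly the expression in the first parenthesis of Theorem~\ref{theorem:revise-ekera}, with Ekerå's exponent length written as $n = m+\ell$. Here I would verify that the hypotheses $2^m > r$ and $2^{m+\ell} > r^2$ (so that the post-processing denominator is uniquely pinned down) and $B \in [1, B_{\max}) \cap \Z$ are precisely the conditions under which Ekerå's estimate remains valid after this excision.

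Second, for the classical post-processing: conditioned on the quantum part returning a good $\hat\ell$, Ekerå's analysis shows that the residue $z$ recovered from it is, up to the error already accounted for, uniform over the relevant residues, and that $r$ is fully recovered whenever no prime power exceeding $cm$ divides $d = \gcd(r,z)$. Since we have already excluded $z = 0$ — the pathological case, where $d = 0$ is divisible by every prime power — the adapted Lemma stated just above gives that the smoothness condition holds with probability at least $1 - \frac{1}{c\log(cm)} - \frac{1}{r}$. Combining the probability that the quantum part yields a good $\hat\ell$ with the probability that the smoothness test passes on the resulting $z$ gives the stated product bound, with the $-\frac{1}{r}$ corrections in each factor.

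The main obstacle is bookkeeping rather than any new idea: one must re-trace Ekerå's Theorem 4.9 carefully enough to confirm that (i) removing the $k = 0$ term produces exactly the claimed first factor with no residual cross-terms, and (ii) the factorization into ``good-frequency probability'' $\times$ ``$cm$-smoothness probability'' survives the excision — i.e., the conditional distribution of $z$ fed into the smoothness lemma is unchanged by dropping the trivial peak. A secondary subtlety is reconciling the notational conventions ($n$ versus $m+\ell$, the paper's $q$ and $M = 2^q$ versus Ekerå's $B$, and the $-\frac{1}{r}$ terms appearing in both factors) so that the final inequality is expressed in the same variables as \cite{ekera2024}. Once these are checked, the statement follows immediately.
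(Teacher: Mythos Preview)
Your proposal is correct and matches the paper's own proof, which is even terser: it simply says the result follows from the proof of \cite[Theorem~4.9]{ekera2024} upon restricting to $z \in [1, r) \cap \Z$. Your elaboration of how the $k=0$/$z=0$ excision propagates through both factors is exactly the bookkeeping the paper leaves implicit.
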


\begin{proof}
    This follows from the proof of \cite{ekera2024}[Theorem 4.9]. However, here only $z \in [1, r) \cap \Z$ may be accepted. 
\end{proof}

\paragraph{Overcounting By One State.}
We point out a minor difference
in our analysis and that of~\cite{ekera2024}.
Recall from Section~\ref{sec:qpf} that
\begin{equation}
  u_k = \left\lfloor k\frac{2^n}{r} \right\rfloor \quad \text{and set}
  \quad v_k = u_k + 1.
\end{equation}
We showed that if $r$ is odd, one can accept the pure state $\ket{u_k - z}$ or $\ket{v_k + z}$
for any $k=1,\ldots,r-1$ and $z=0,\ldots M-1$. That is, the accepted
states are
$$
    \ket{u_k - M + 1}, \dots, \ket{u_k + M}
    \quad \text{or, equivalently,} \quad
    \ket{v_k - M}, \dots, \ket{v_k + M - 1}
$$
In contrast, \cite{ekera2024} accepts the states $\ket{j_0(k) \pm B}$ where $j_0(k) = \lfloor k{2^n}/{r} \rceil$. Depending on the value of
    $k{2^n}/{r}$, they either accept the states
$$
    \ket{u_k - B}, \dots, \ket{u_k + B}
    \quad \text{or} \quad
    \ket{v_k - B}, \dots, \ket{v_k + B}
$$
    Take $B = M$. In both cases,~\cite{ekera2024}
    accepts an extra state $\ket{u_k - M}$ or $\ket{v_k + M}$.
    This is just a minor difference, but 
    for a fair comparision with our work,
    we substract from
    the lower bound derived in~\cite{ekera2024}
    the probability of measuring $\ket{u_k - B}$ or $\ket{v_k + B}$
    where appropriate. 
    Using the same notation as~\cite{ekera2024},
    by~\cite{ekera2024}[Lemma 2.8],
    following the steps in~\cite{ekera2024}[Theorem 3.3], one finds
\begin{equation}
    P(\alpha_0 + rt) \geq \frac{(1 - \cos(2\pi \alpha_0 / r)) r}{2\pi^2} \cdot \frac{1}{(\alpha_0 + rt)^2} - \frac{\pi^2}{2^{m + \ell}} \left( \frac{3}{4} + \frac{r}{2^{m + \ell}} \frac{1}{12} \right)
\end{equation}
where $\alpha_0(k) = r j_0(k)$. Then, we can numerically compute the lowerbound of the success probability derived from~\cite{ekera2024} for the same range described in our work.
FOr technical correctness, the results
in Table~\ref{tab:compare-res} reflect this adaptation, even though the
bounds are not substantially changed.

\end{document}